\def\sspan{\mathrm{span}}
\author{Shuai Guo}
\address{School of Mathematics Science, Peking University,
      No 5. Yiheyuan Road, Beijing, 100871, China}
\email{guoshuai@math.pku.edu.cn}
\author{Qingsheng Zhang}
\address{School of Sciences, Great Bay University,
     Great Bay Institute for Advanced Study,
     Dongguan, 523000, China}
\email{zhangqingsheng@gbu.edu.cn}
\newcommand{\<}{\langle}
\renewcommand{\>}{\rangle}
\newcolumntype{C}[1]{>{\centering\arraybackslash$}m{#1}<{$}}
\newlength{\mycolwd}                                         
\newlength{\mycolwdm}                                         
\newlength{\mycolwda}                                         
\newlength{\mycolwdb}                                         
\newlength{\mycolwdc}                                         
\newlength{\mycolwdd}                                         
\newlength{\mycolwddd}                                         
\newlength{\mycolwddc}                                         
\newcommand{\givz}{\mathfrak{u}}
\newcommand{\givw}{\mathfrak{v}}
\newcommand{\Frob}{H}
\newcommand{\cyA}{\mathfrak A}
\newcommand{\cyB}{\mathfrak B}
\newcommand{\ttau}{t}
\newcommand{\bm}{\frak{m}}
\newcommand{\vac}{\mathbf{v}}
\newcommand{\dvac}{\nu}
\newcommand{\cmn}{{\substack{\mu\\ \nu}}}
\newcommand{\sv}{\varphi}
\newcommand{\elltau}{\tau}
\newcommand{\id}{ \hspace{-2pt}\text{I}}
\def\End{\mathrm{End}}
\def\beq{\begin{equation}}
\def\eeq{\end{equation}}
\newcommand{\be}{\begin{equation*}}
\newcommand{\ee}{\end{equation*}}
\DeclareMathOperator{\Cont}{Cont}
\DeclareMathOperator{\Res}{Res}
\DeclareMathOperator{\diag}{diag}
\def\Mgn{\overline{\mathcal{M}}_{g,n}}
\newtheorem{dummy}{}[section]
\newtheorem{lemma}[dummy]{Lemma}
\newtheorem{proposition}[dummy]{Proposition}
\newtheorem{theorem}[dummy]{Theorem}
\theoremstyle{definition}
\newtheorem{definition}[dummy]{Definition}
\newtheorem{remark}[dummy]{Remark}
\newcommand{\cL}{\mathcal L}
\newcommand{\tr}{\mathrm{tr}}
\newcommand{\pd}{\partial}
\newcommand{\cA}{\mathcal{A}}
\newcommand{\cC}{\mathcal{C}}
\newcommand{\cD}{\mathcal{D}}
\newcommand{\E}{\mathcal{E}}
\newcommand{\Mbar}{\overline{\mathcal M}}
\newenvironment{manualtheorem}[1]{%
	\manualtheoreminner
}{\endmanualtheoreminner}
\begin{document}
	\title{Virasoro constraints for topological recursion}
	
	\maketitle

\begin{abstract}
This is the second paper in a series  on {\it Virasoro constraints for Cohomological  Field Theory}. 
We derive the ancestor Virasoro constraints for the topological recursion (TR) for an arbitrary spectral curve and establish the descendent Virasoro constraints for spectral curves satisfying certain conditions.
For higher-genus curves, we further establish the corresponding ancestor and descendent Virasoro constraints for the associated non-perturbative generating series.
We present several examples that illustrate the comparison between the descendent Virasoro constraints for TR descendent invariants and the original Virasoro constraints for geometric descendent invariants.
\end{abstract}
	
\setcounter{section}{-1}
\setcounter{tocdepth}{1}
\tableofcontents

\section{Introduction}
This is the second paper in a three-part series on Virasoro constraints in cohomological field theory (CohFT). In the first paper~\cite{GZ25}, we introduced a formal descendent generating series for $S$- and $\dvac$-calibrated CohFTs with vacuum, and formulated both the ancestor and descendent Virasoro conjectures for the homogeneous case. In particular, we established these two conjectures for semisimple homogeneous CohFTs with vacuum.

The topological recursion was introduced by Eynard--Orantin~\cite{EO07} in the context of matrix models (see, e.g., ~\cite{CEO06,EO07b}).
Subsequent studies have established that topological recursion  is deeply linked to CohFTs and enumerative geometry problems: the former corresponds to the ancestor potential, and the latter to the descendent potential.
The identification of a semisimple CohFT with (local) topological recursion was established by Dunin-Barkowski, Orantin, Shadrin, and Spitz~\cite{DOSS14}; see also the works of Eynard~\cite{Eyn14} and Milanov~\cite{Mil14}.
On the descendent side, the connection between topological recursion and enumerative geometry has been investigated via numerous examples, including Hurwitz numbers~\cite{BM07,BEMS11}, Gromov--Witten theory of $\mathbb P^1$~\cite{NS11,DOSS14}, negative $r$-spin theory~\cite{CGG22,GJZ23} and toric Calabi-Yau 3-folds~\cite{BKMP09, EO15,FLZ20}, among others.
In~\cite[Theorem II]{GJZ23}, 
the authors, together with C. Ji, established a general relationship between the formal descendent theory of $S$- and $\dvac$-calibrated CohFTs and topological recursion via Laplace transformations for spectral curves satisfying certain conditions.
In the language of generating series, this establishes an explicit relationship between the geometric total descendent potential $\cD({\bf t};\hbar)$ and the TR descendent potential $Z({\bf p};\hbar)$ through a special choice of coordinates ${\bf t=t(p)}$,  meaning that the TR descendent potential is typically a generating function of a  sub-theory of certain enumerative geometry. This correspondence serves as the first motivation for our investigation into the Virasoro constraints for topological recursion. 
   
Moreover, the generating series $Z({\bf p};\hbar)$ arising from the topological recursion for genus-zero curves, along with its non-perturbative counterpart \(Z^{\rm NP}_{\mu,\nu}({\bf p};\hbar;w)\) for higher-genus curves, is expected to be the tau-function of certain reductions of the multi-component KP hierarchy~\cite{ABDKS25, ADKMV06, BE15, GJZ23}. Topological recursion plays a crucial role in our generalization of the Witten conjecture \cite{GJZ23} (see \cite{Wit91, Kon92} for the original conjecture).
This connection serves as the secondary motivation for our investigation into the Virasoro constraints for topological recursion.

\subsection{Topological recursion}
\label{sec:def-TR}

Let $\Sigma$ be a Riemann surface of genus $\frak g$ with marked points (boundaries) $\{b_i\}_{i=1}^{\bm}$ and a fixed symplectic basis $\{\cyA_i,\cyB_i\}_{i=1}^{\frak g}$ for $H_{1}(\Sigma,\mathbb C)$.
$x=x(z)$ is a (possibly multi-valued) function on $\Sigma\setminus\{b_i\}$ whose differential is meromorphic, with poles exactly at the boundaries and only finitely many  simple zeros $\{z^\beta\}_{\beta=1}^{N}$, called the critical points.
$y=y(z)$ is a holomorphic function defined near each critical point $z^\beta$ satisfying $dy(z^\beta)\ne 0$.
For any point $z\in\Sigma$ around the critical point $z^\beta$, there exists a unique local involution $\bar{z}\in \Sigma$ (where $\bar{z}$ depends locally on $z$) such that $x(\bar{z})=x(z)$. 
For more details, we refer the reader to~\cite{EO07}.
We now recall the definition of the sequence of multi-differentials $\{\omega_{g,n}\}_{g,n\geq 0}$ associated to a spectral curve $\cC=(\Sigma,x,y)$.
\begin{definition}\label{def:omegagn}
	Given spectral curve $\cC=(\Sigma,x,y)$, a family of multi-differentials $\{\omega_{g,n}\}_{g,n\geq 0}$ is defined as follows. For $2g-2+n\leq 0$, we set $\omega_{0,0}(z):=0$, $\omega_{1,0}(z):=0$, and
	$$
	\omega_{0,1}(z):=y(z)dx(z), \quad \omega_{0,2}(z_1, z_2):=B(z_1, z_2),
	$$
where $B(z_1, z_2)$ is the Bergman kernel of the Riemann surface $\Sigma$.
	Let $K_\beta$ be recursion kernel  around the critical point $z^\beta$, defined by
	$$
	K_\beta(z_0,z)=\frac{\int_{z'=\bar z}^{z}B(z_0,z')}{2(y(z)-y(\bar z))dx(z)}.
	$$
	For $2g-2+n+1>0$, the multi-differentials $\omega_{g,n+1}$ are defined recursively as follows:
	\begin{align}
		\omega_{g,n+1}(z_0,z_{[n]})
		:=&\, \sum\limits_{\beta}\mathop{\Res}_{z=z^\beta}K_\beta(z_0,z)\bigg(\omega_{g-1,n+2}(z,\bar z,z_{[n]}) \nonumber\\
		&\, +\sum^{\prime}_{\substack{g_1+g_2=g\\ I\bigsqcup J=[n]}}\omega_{g_1,|I|+1}(z,z_{I})\omega_{g_2,|J|+1}(\bar z,z_{J})\bigg).
\label{eqn:def-TR}
	\end{align}
	Here $[n]=\{1,\cdots,n\}$ and for any subset $I\subset [n]$, $z_{I}:=\{z_i\}_{i\in I}$.
	The symbol $\sum\limits^{\prime}$ means we exclude $\omega_{0,1}$ in the summation.
	For $g\geq 2$,
	$$
	\omega_{g,0}:=\frac{1}{2-2g}\sum_{\beta}\mathop{\Res}_{z=z^\beta}\omega_{g,1}(z)\int_{z'=z^\beta}^{z}\omega_{0,1}(z').
	$$
\end{definition}
\begin{remark}
	In~\cite{EO07}, the $0$-forms $\omega_{0,0}$ and $\omega_{1,0}$ are also defined (cf.~(4.17) and (4.18))  and are not necessarily zero. Since they do not affect the validity of the Virasoro constraints, we assume them to vanish for simplicity.
\end{remark}

When expanded under a suitable basis,  the coefficients of   $\omega_{g,n}$  can be  viewed as the ``invariants" of the spectral curve data {with respect to this basis}.
Inspired by the enumerative geometry where the ancestor invariants and descendent invariants are defined, we consider the following two types of invariants for a topological recursion.
\begin{enumerate}[(i)]
\item {\bf Ancestors invariants:}
Using the local Airy coordinates near the critical points, we can  define a set of meromorphic $1$-forms $\{d\zeta_k^{\bar\beta}(z)\}$ as a basis.
For $2g-2+n>0$, the expansion of $\omega_{g,n}$ under this basis defines the ancestor invariants $\<\bar e_{\beta_1}\bar\psi^{k_1},\cdots,\bar e_{\beta_n}\bar\psi^{k_n}\>_{g,n}$. 
In fact, these invariants coincide with the ancestor invariants of the CohFT associated with the spectral curve of the topological recursion.
(See \S \ref{subsec:TR-CohFT-ancestors}.)
\item{\bf Descendents invariants:}
Pick $\Lambda=(\lambda_1,\cdots,\lambda_{\bm})$, such that $\lambda_i^{-1}$ is a local coordinate near the boundary point $b_i$ satisfying $\lambda_i(b_i)={\infty}$, $i=1,\cdots,\bm$.
By using basis $\{d\lambda^{-k}_{i}\}$, for $g,n\geq 0$, the expansion of $\omega_{g,n}$ near the boundary points defines the descendent invariants
$\<\alpha^{i_1}_{k_1},\cdots,\alpha^{i_n}_{k_n}\>^{\Lambda}_{g,n}.$
(See \S \ref{subsec:TR-descendents}.)
\end{enumerate}
\begin{remark}
	Throughout this paper, descendents, or descendent invariants, refer to the TR descendents in \cite{GJZ23},  unless explicitly stated otherwise. Geometric descendents will be explicitly specified when discussed.
\end{remark}

The generating series of the ancestors $\<-\>_{g,n}$ and descendents $\<-\>^{\Lambda}_{g,n}$ are denoted by $\cA({\bf s};\hbar)$ and $Z^{\Lambda}({\bf p};\hbar)$, respectively. Precisely,
$$
\cA({\bf s};\hbar):=\exp\biggl(
\sum_{2g-2+n> 0}\hbar^{2g-2}\sum_{1\leq \beta_1,\cdots,\beta_n\leq N \atop k_1,\cdots,k_n\geq 0}
\<\bar e_{\beta_1}\bar\psi^{k_1},\cdots,\bar e_{\beta_n}\bar\psi^{k_n}\>_{g,n}
\frac{s_{k_1}^{\bar\beta_1}\cdots s_{k_n}^{\bar\beta_n}}{n!}
\biggr),
$$
and
$$
Z^{\Lambda}({\bf p};\hbar)
:=\exp\bigg(\sum_{g,n\geq 0}\hbar^{2g-2}\sum_{1\leq i_1,\cdots,i_n\leq \bm \atop k_1,\cdots,k_n\geq 1}
\<\alpha^{i_1}_{k_1},\cdots,\alpha^{i_n}_{k_n}\>_{g,n}^{\Lambda}
\frac{p^{i_1}_{k_1}\cdots p^{i_n}_{k_n}}{n!\cdot k_1\cdots k_n}\bigg).
$$

\subsection{Ancestor Virasoro constraints for topological recursion}
For the ancestor generating series $\cA({\bf s};\hbar)$, the ancestor Virasoro constraints 
$$
L_m\cA({\bf s};\hbar)=0,\qquad m\geq -1,
$$ 
can be derived via the identification of  (local) topological recursion with semisimple CohFTs, as established in~\cite{DOSS14} (see also~\cite{Eyn14, Mil14}), along with arguments similar to those in~\cite{GZ25} and~\cite{Ale24}.
Explicit formulae for the operators $L_m$ are  provided in \S\ref{sec:TR-ancestor-virasoro}.

In the present paper, we provide a direct derivation of the ancestor Virasoro constraints from the topological recursion.
We have
\begin{lemma}\label{lem:TR-res-intro}
For $2g-2+n+1>0$ and $m\geq -1$, we have
\begin{align}
 &\, \sum_{\beta}\mathop{\Res}_{z_0=z^\beta} x(z_0)^{m+1}\cdot y(z_0)\cdot \omega_{g,n+1}(z_0,z_{[n]})\nonumber \\
=&\,  \frac{1}{2}\sum_{\beta}\mathop{\Res}_{z=z^\beta} \frac{x(z)^{m+1}}{dx(z)}\cdot \bigg(\omega_{g-1,n+2}(z,\bar z,z_{[n]})
+\sum^{\prime}_{\substack{g_1+g_2=g\\ I\sqcup J=[n]}} \omega_{g_1,|I|+1}(z,z_{I})\omega_{g_2,|J|+1}(\bar z,z_J)\bigg).
\label{eqn:TR-virasoro}
\end{align} 
\end{lemma}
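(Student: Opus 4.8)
The plan is to insert the defining recursion~\eqref{eqn:def-TR} for $\omega_{g,n+1}(z_0,z_{[n]})$ into the left-hand side of~\eqref{eqn:TR-virasoro} and to simplify the resulting iterated residue. Write
\[
W_g(z,\bar z,z_{[n]}):=\omega_{g-1,n+2}(z,\bar z,z_{[n]})+\sum^{\prime}_{\substack{g_1+g_2=g\\ I\sqcup J=[n]}}\omega_{g_1,|I|+1}(z,z_I)\,\omega_{g_2,|J|+1}(\bar z,z_J)
\]
for the recursion source, so that the left-hand side of~\eqref{eqn:TR-virasoro} becomes
\[
\sum_{\beta}\sum_{\beta'}\mathop{\Res}_{z_0=z^\beta}\ \mathop{\Res}_{z=z^{\beta'}}\ x(z_0)^{m+1}\,y(z_0)\,K_{\beta'}(z_0,z)\,W_g(z,\bar z,z_{[n]}).
\]
Near any critical point both $x$ and $y$ are holomorphic, so $x(z_0)^{m+1}y(z_0)$ is holomorphic there for every $m\geq -1$; and for $\beta\neq\beta'$ the kernel $K_{\beta'}(z_0,z)$, with $z$ close to $z^{\beta'}$, is holomorphic in $z_0$ in a neighbourhood of $z^{\beta}$, since $z_0\mapsto\int_{z'=\bar z}^{z}B(z_0,z')$ is holomorphic away from $z_0\in\{z,\bar z\}$. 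Hence all off-diagonal terms vanish and only $\beta=\beta'$ survives.

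For the diagonal term I would exchange the two residues at the same critical point $z^{\beta}$, applied to the integrand $F:=x(z_0)^{m+1}y(z_0)\,K_\beta(z_0,z)\,W_g(z,\bar z,z_{[n]})$. In the recursion the $z$-contour is taken inside the $z_0$-contour, so pulling the $z_0$-contour through it produces the residues of $F$ at the diagonal poles $z_0=z$ and $z_0=\bar z$:
\[
\mathop{\Res}_{z_0=z^\beta}\mathop{\Res}_{z=z^\beta} F=\mathop{\Res}_{z=z^\beta}\mathop{\Res}_{z_0=z^\beta} F+\mathop{\Res}_{z=z^\beta}\Big(\mathop{\Res}_{z_0=z}+\mathop{\Res}_{z_0=\bar z}\Big) F.
\]
The first term on the right vanishes, because $z_0\mapsto x(z_0)^{m+1}y(z_0)K_\beta(z_0,z)$ is holomorphic at $z_0=z^{\beta}$ and $W_g$ does not depend on $z_0$. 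For the correction term one uses that $\int_{z'=\bar z}^{z}B(z_0,z')$, as a $1$-form in $z_0$, has simple poles at $z_0=z$ and $z_0=\bar z$ with residues $+1$ and $-1$; combined with the identity $x(\bar z)=x(z)$ this gives
\[
\Big(\mathop{\Res}_{z_0=z}+\mathop{\Res}_{z_0=\bar z}\Big)\Big(x(z_0)^{m+1}y(z_0)\,K_\beta(z_0,z)\Big)=\frac{x(z)^{m+1}\big(y(z)-y(\bar z)\big)}{2\big(y(z)-y(\bar z)\big)\,dx(z)}=\frac{x(z)^{m+1}}{2\,dx(z)}.
\]
Multiplying by $W_g(z,\bar z,z_{[n]})$, applying $\mathop{\Res}_{z=z^\beta}$, summing over $\beta$, and recognising $W_g$ as the bracketed expression on the right-hand side of~\eqref{eqn:TR-virasoro} finishes the argument.

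I expect the only genuinely delicate step to be the contour bookkeeping in the residue exchange: one must respect the nesting of the $z$- and $z_0$-contours prescribed by the recursion kernel, verify that on the $z$-contour the local involution $\bar z$ stays at a distance from $z^{\beta}$ comparable to that of $z$ (so that both diagonal poles are actually crossed when the contours are swapped), and check that no further pole of $B(z_0,\cdot)$ or of $1/dx(z)$ is encountered in the process. Everything else is routine local analysis near the simple zeros of $dx$; the hypothesis $m\geq -1$ enters only to ensure that $x^{m+1}$ is holomorphic, rather than merely meromorphic, at the critical points, which is exactly what forces the inner residue $\mathop{\Res}_{z_0=z^\beta}(F)$ to vanish after the exchange.
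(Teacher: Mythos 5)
Your argument is correct and follows essentially the same route as the paper: the off-diagonal terms $\beta\neq\beta'$ vanish because $x^{m+1}y\,K_{\beta'}(z_0,z)$ is holomorphic in $z_0$ near $z^{\beta}$, and the diagonal term is computed by picking up the poles of the kernel at $z_0=z$ and $z_0=\bar z$, where $x(\bar z)=x(z)$ produces the cancellation $\frac{x^{m+1}(y(z)-y(\bar z))}{2(y(z)-y(\bar z))dx}=\frac{x^{m+1}}{2dx}$. The only cosmetic difference is that the paper evaluates the $z_0$-contour integral via $\int_{z'=\bar z}^{z}\Res_{z_0=z'}x(z_0)^{m+1}y(z_0)B(z_0,z')=\int_{\bar z}^{z}d(x^{m+1}y)$, which is the same computation as your endpoint residues $\pm1$ of $\int_{\bar z}^{z}B(z_0,z')$.
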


\begin{proposition}\label{prop:TR-virasoro-ancestor}
For each $m\geq -1$, the ancestor Virasoro constraint $L_m\cA({\bf s};\hbar)=0$ is equivalent to equation~\eqref{eqn:TR-virasoro}, and hence holds for the topological recursion.
\end{proposition}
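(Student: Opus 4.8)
The plan is to deduce the Proposition from Lemma~\ref{lem:TR-res-intro} together with a dictionary that translates the multidifferential identity~\eqref{eqn:TR-virasoro} into a statement about the ancestor generating series $\cA(\mathbf s;\hbar)$. Since Lemma~\ref{lem:TR-res-intro} already establishes~\eqref{eqn:TR-virasoro} for an arbitrary spectral curve, it suffices to prove that, for each fixed $m\geq -1$, the differential equation $L_m\cA(\mathbf s;\hbar)=0$ is \emph{equivalent} to~\eqref{eqn:TR-virasoro}. First I would recall from \S\ref{sec:TR-ancestor-virasoro} the explicit shape of $L_m$: it is a second-order differential operator in the ancestor times $s_k^{\bar\beta}$ of the form
$$
L_m \;=\; \sum_{\beta}\Big(\hbar^2\sum_{k_1,k_2} a^{\beta}_{m;k_1,k_2}\,\partial_{s_{k_1}^{\bar\beta}}\partial_{s_{k_2}^{\bar\beta}}\;+\;\sum_{k,l} b^{\beta}_{m;k,l}\,s_{k}^{\bar\beta}\,\partial_{s_{l}^{\bar\beta}}\;+\;\sum_{k} c^{\beta}_{m;k}\,\partial_{s_{k}^{\bar\beta}}\Big)\;+\;d_m ,
$$
where the ``$\beta$--diagonal'' shape reflects the fact that the two insertions created by the topological recursion kernel sit at the same critical point, and the $s$--independent terms $c^{\beta}_{m;k}$, $d_m$ come from the unstable differentials $\omega_{0,1}$ (equivalently, the factor $y$ on the left of~\eqref{eqn:TR-virasoro}) and $\omega_{0,2}$; some of these are relevant only for the boundary cases $m=-1,0$ (the string and dilaton equations). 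The structure constants $a,b,c,d$ are the explicit quantities written in \S\ref{sec:TR-ancestor-virasoro}.

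Second, I would unfold $L_m\cA=0$ into a hierarchy of relations among ancestor invariants. Writing $\cA=\exp\big(\sum_g\hbar^{2g-2}\mathcal F_g\big)$, dividing $L_m\cA=0$ by $\cA$, and extracting the coefficient of $\hbar^{2g-2}$ and of a fixed monomial $s_{k_1}^{\bar\beta_1}\cdots s_{k_n}^{\bar\beta_n}/n!$, one obtains an identity in which a $(b,c,d)$--weighted combination of the correlators $\langle\bar e_{\beta_0}\bar\psi^{k_0},\bar e_{\beta_1}\bar\psi^{k_1},\dots\rangle_{g,n+1}$ carrying one extra marked point equals an $a$--weighted combination of $\langle\cdots\rangle_{g-1,n+2}$ (with two extra points of the same index $\beta$) and of the splitting terms $\sum^{\prime}\langle\cdots\rangle_{g_1,|I|+1}\langle\cdots\rangle_{g_2,|J|+1}$. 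This has exactly the combinatorial shape of~\eqref{eqn:TR-virasoro}, so it remains to match the numerical coefficients on the two sides.

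Third, the matching is carried out by expanding every multidifferential in~\eqref{eqn:TR-virasoro} in the Airy basis $\{d\zeta_k^{\bar\beta}\}$ in all of its arguments. The $n$ arguments $z_{[n]}$ are not touched by the residue operators and reproduce the shared marked points on both sides; the argument $z_0$ (on the left) and the pair $z,\bar z$ (on the right) are contracted against the residue functionals, while each factor $\omega_{0,2}(z,z_i)$ on the right turns the corresponding marked point into one of index $\beta$ and produces the bilinear couplings $b^{\beta}$. Everything is then reduced to one-variable residues computed in the normalized local coordinate near each $z^\beta$, in which $x-x(z^\beta)$ and $y$ are in Airy normal form and $z\mapsto\bar z$ is the sign flip, such as
$$
\sum_{\beta}\mathop{\Res}_{z_0=z^\beta} x(z_0)^{m+1}\,y(z_0)\,d\zeta_{k}^{\bar\beta}(z_0)\;=\;\big(\text{linear structure constant of }L_m\text{ at }k\big),
$$
$$
\tfrac12\sum_{\beta}\mathop{\Res}_{z=z^\beta}\frac{x(z)^{m+1}}{dx(z)}\,d\zeta_{k_1}^{\bar\beta}(z)\,d\zeta_{k_2}^{\bar\beta}(\bar z)\;=\;a^{\beta}_{m;k_1,k_2},
$$
together with the analogous $\omega_{0,2}$--residues yielding $b^{\beta},c^{\beta},d_m$. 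Each residue evaluates to an explicit binomial expression in $m$, the critical value $x(z^\beta)$ and the Taylor coefficients of $y$, and one checks that it coincides with the corresponding structure constant from \S\ref{sec:TR-ancestor-virasoro}; here one uses that the TR ancestor invariants agree with the ancestor invariants of the associated CohFT (\S\ref{subsec:TR-CohFT-ancestors}), so that the $L_m$ above really are the Virasoro operators attached to that CohFT. Since $\{d\zeta_k^{\bar\beta}\}$ spans the space of meromorphic $1$--forms occurring here, the coefficient extraction is reversible, so~\eqref{eqn:TR-virasoro} holds if and only if all these numerical identities hold, which gives the equivalence $L_m\cA=0\iff\eqref{eqn:TR-virasoro}$; combined with Lemma~\ref{lem:TR-res-intro} this proves $L_m\cA=0$.

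I expect the main obstacle to be precisely this last local residue bookkeeping. The critical points with $x(z^\beta)=0$ and those with $x(z^\beta)\neq 0$ have to be handled uniformly, since $x(z_0)^{m+1}$, expanded around the critical value $x(z^\beta)$ in the local Airy coordinate, truncates in the former case and does not in the latter, producing respectively the plain Virasoro constants and their ``shifted'' counterparts (the $x(z^\beta)$ being the canonical coordinates of the underlying Frobenius structure); and the contributions of $\omega_{0,1}$ and $\omega_{0,2}$ that feed the leading-derivative term and the constant $d_m$ must be separated with care, especially in the boundary cases $m=-1,0$. Checking that all of these residues reproduce, coefficient by coefficient, the operators recorded in \S\ref{sec:TR-ancestor-virasoro} is where the real work lies; the structural reduction in the preceding paragraphs is essentially formal.
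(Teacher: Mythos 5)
Your overall strategy is the one the paper follows: take equation~\eqref{eqn:TR-virasoro} from Lemma~\ref{lem:TR-res-intro}, expand every $\omega_{g,n}$ in the basis $\{d\zeta_k^{\bar\beta}\}$, reduce both sides to local residues at the critical points, and match the resulting numerical identities against the structure constants of $L_m$, the invertibility of the coefficient extraction giving the equivalence. The paper organizes exactly this computation by first proving
$-\sum_{\gamma}\Res_{z_0=z^\gamma}x(z_0)^{m+1}y(z_0)\sum_{k,\beta}\bar e_\beta\bar\psi^k d\zeta_k^{\bar\beta}(z_0)=D^{m+1}_{\E,\bar\psi}\vac(\bar\psi)$
(which accounts for the $\tilde s$-shift and the vacuum insertion) and then establishing the four residue identities \eqref{eqn:TR-to-Virasoro-rhs-1}--\eqref{eqn:TR-to-Virasoro-rhs-4} that identify the constant, quadratic and linear parts.

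However, your ansatz for the shape of $L_m$ is wrong, and as written it would derail the coefficient matching. You claim the quadratic and linear parts are ``$\beta$-diagonal'' because the two insertions created by the recursion kernel sit at the same critical point. This conflates the \emph{location} of the residue with the \emph{basis label} of the forms being paired: the local expansion of $d\zeta_k^{\bar\beta}$ near $z^\gamma$ is $-\sum_j(R_j)^{\bar\beta}_{\bar\gamma}\frac{(2k-2j+1)!!}{(\eta^\gamma)^{2k-2j+2}}d\eta^\gamma+\cdots$, so the residue at $z^\gamma$ of a product $d\zeta_l^{\bar\sigma}(z)\,d\zeta_k^{\bar\beta}(\bar z)$ couples arbitrary indices $\sigma,\beta$ through $\sum_\gamma(R_i)^{\bar\sigma}_{\bar\gamma}(R_j)^{\bar\beta}_{\bar\gamma}$ times powers of $x^\gamma$, which is not diagonal unless $R=\mathrm{Id}$. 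Correspondingly, the operators in \S\ref{sec:TR-ancestor-virasoro} have quadratic part $\sum_{\beta,\gamma}(-1)^{k+1}(C_m)_{-k-2,\bar\beta}^{l,\bar\gamma}\partial_{s_k^{\bar\beta}}\partial_{s_l^{\bar\gamma}}$ and linear part $\sum_{\beta,\gamma}(C_m)_{k-1,\bar\beta}^{l,\bar\gamma}\tilde s_k^{\bar\beta}\partial_{s_l^{\bar\gamma}}$, where the $(C_m)$ are the matrix coefficients of $D^{m+1}_{\E,\givz}$ with $D_{\E,\givz}=R(\givz)\bigl(\E+\givz^2\partial_{\givz}+\tfrac32\givz\bigr)R^{-1}(\givz)$, which mixes the normalized idempotents. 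The remainder of your plan (genus and monomial extraction, attributing the inhomogeneous linear terms to $\omega_{0,1}$ via the vacuum and the constants to $\omega_{0,2}$, the uniform treatment of $x(z^\beta)=0$ versus $x(z^\beta)\neq0$) is consistent with the paper, so the fix is local: replace the diagonal ansatz by the $(C_m)$-coefficients and carry out the same residue bookkeeping.
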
 
By the equivalence  in Proposition \ref{prop:TR-virasoro-ancestor}, we call equation~\eqref{eqn:TR-virasoro} the {\it ancestor Virasoro constraints for TR-differentials}. 
\begin{remark}
In~\cite{Mil14}, Milanov proved that, in the context of singularity theory, the local topological recursion is equivalent to $N$ copies of the Virasoro constraints for the ancestor generating series.
Moreover, his method can be naturally generalized to general settings.
Proposition~\ref{prop:TR-virasoro-ancestor} can be regarded as a global version of Milanov’s result.
\end{remark}

\subsection{Descendent Virasoro constraints for topological recursion}
In the geometric theory, one can derive the descendent Virasoro constraints from the ancestor Virasoro constraints by using the $S$-matrix~\cite{GZ25}.
But this approach does not apply to the topological recursion since, in general, there is no such a invertable matrix that relates the TR descendents and TR ancestors.
The bridge connecting these two types of invariants is the family of globally defined multi-differentials whose local expansions encode the corresponding invariants.
So we first derive the {\it descendent Virasoro constraints for TR-differentials}
from equation~\eqref{eqn:TR-virasoro}.
This can be achieved by using the residue formula, under the following conditions:
First, $x$, $y$ are meromorphic on the Riemann surface $\Sigma$;
Second, $x^{m+1}y$ has no poles other than those of $x$;
Third, the locally defined variable $\bar z$ is properly treated.
We have the following result.
\begin{proposition}\label{prop:TR-crit-boundary-intro}
If $x$, $y$ are meromorphic and satisfy that $x^{m+1}y$ has only poles at the boundary points, then for $2g-2+n+1>0$, we have
\begin{align}
 &\, -\sum_{i}\mathop{\Res}_{z_0=b_i} x(z_0)^{m+1}\cdot y(z_0)\cdot \omega_{g,n+1}(z_0,z_{[n]})
\nonumber \\
=&\,  \frac{1}{2}\sum_{i}\mathop{\Res}_{z=b_i} \frac{x(z)^{m+1}}{dx(z)}\cdot \bigg(\tilde\omega_{g-1,n+2}(z, z,z_{[n]})
+\sum^{\prime}_{\substack{g_1+g_2=g\\ I\sqcup J=[n]}} \omega_{g_1,|I|+1}(z,z_{I})\omega_{g_2,|J|+1}(z,z_J)\bigg).
\end{align}
where $\tilde\omega_{g,n}(z_{[n]})=\omega_{g,n}(z_{[n]})-\delta_{(g,n),(0,2)}\cdot\frac{dx(z_1)dx(z_2)}{(x(z_1)-x(z_2))^2}$.
\end{proposition}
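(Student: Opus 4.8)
The plan is to derive the identity from the ancestor relation~\eqref{eqn:TR-virasoro} of Lemma~\ref{lem:TR-res-intro} by the residue theorem, transforming the sums of residues over the critical points into sums over the boundary points. The three hypotheses enter transparently: meromorphy of $x$ (hence of $dx$ and of $x^{m+1}/dx$) makes every $1$-form in sight globally meromorphic; meromorphy of $y$ does the same for $\omega_{0,1}=y\,dx$; and the condition that $x^{m+1}y$ has poles only at the $b_i$ is what isolates the boundary points among the poles.

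First I would dispose of the left-hand side. For $2g-2+(n+1)>0$ the TR-differential $\omega_{g,n+1}(z_0,z_{[n]})$ is, as a $1$-form in $z_0$, meromorphic on $\Sigma$ with poles only at the critical points $\{z^\beta\}$ (a standard property of the $\omega_{g,n}$; in particular it is regular at $z_0=z_j$). Since $x^{m+1}y$ has poles only at the $b_i$, the global meromorphic $1$-form $x(z_0)^{m+1}y(z_0)\,\omega_{g,n+1}(z_0,z_{[n]})$ on $\Sigma$ has poles only on $\{z^\beta\}\cup\{b_i\}$, and the residue theorem gives
\[
\sum_{\beta}\mathop{\Res}_{z_0=z^\beta}x(z_0)^{m+1}y(z_0)\,\omega_{g,n+1}(z_0,z_{[n]})=-\sum_{i}\mathop{\Res}_{z_0=b_i}x(z_0)^{m+1}y(z_0)\,\omega_{g,n+1}(z_0,z_{[n]}),
\]
which converts the left side of~\eqref{eqn:TR-virasoro} into the left side of the proposition.

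The right-hand side requires the ``proper treatment of $\bar z$'': the integrand of~\eqref{eqn:TR-virasoro} at $z^\beta$ uses the locally defined involution and is not yet globally meromorphic. Here I would use that $x$ is a double cover branched at each $z^\beta$, with sheets swapped by $\sigma:z\mapsto\bar z$, and that $x^{m+1}/dx$ is $\sigma$-invariant (being a function of $x$ divided by $dx$); writing $\mathop{\Res}_{z=z^\beta}\eta=\mathop{\Res}_{x=x^\beta}(x_\ast\eta)$ and invoking the relation between the sheet-sum of a bilinear expression and the bilinear of the sheet-sums, one rewrites the push-forward of $\tfrac{x^{m+1}}{dx}\bigl(\omega_{g-1,n+2}(z,\bar z,z_{[n]})+\sum'\omega_{g_1,|I|+1}(z,z_I)\,\omega_{g_2,|J|+1}(\bar z,z_J)\bigr)$ as a difference of a meromorphic form on the $x$-line and $x_\ast$ applied to the globally meromorphic quadratic differential $\tilde\omega_{g-1,n+2}(z,z,z_{[n]})+\sum'\omega_{g_1,|I|+1}(z,z_I)\,\omega_{g_2,|J|+1}(z,z_J)$ on $\Sigma$. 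The $\tilde\omega_{0,2}$-subtraction is exactly what makes the $(g-1,n+2)=(0,2)$ term legitimate on the diagonal and, at the same time, matches the diagonal regularization of the push-forward Bergman kernel appearing on the $x$-line. Applying the residue theorem once on $\mathbb{P}^1_x$ and once on $\Sigma$ then replaces the critical residues by residues at the poles of $x$ (the boundary points) together with residues at the insertion points $z_j$ and their images $x_j=x(z_j)$; one checks these last cancel, using that $\omega_{0,2}$ is the only $\omega$-differential with a pole off the critical locus and that that pole (double, along the diagonal) carries no residue. What remains is the right-hand side of the proposition.

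The delicate step, which I expect to be the main obstacle, is this handling of $\bar z$: identifying the form produced on the $x$-line, matching the two diagonal regularizations of $\omega_{0,2}$, and verifying the cancellation of the auxiliary residues at the $z_j$ and $x_j$, so that only the boundary residues survive. The pole hypothesis on $x^{m+1}y$ and the meromorphy of $x,y$ are used precisely to ensure that this bookkeeping leaves no unaccounted contributions (in particular at $x=\infty$, or from the factor $y$).
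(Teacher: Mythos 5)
Your overall plan --- start from Lemma~\ref{lem:TR-res-intro}, use the residue theorem to convert the $z_0$-residues at the critical points into residues at the boundary (this half is correct and is exactly the paper's first step), then replace the locally defined $\bar z$ by $z$ and move the remaining residues to the boundary --- is the same as the paper's. For the $\bar z$-step, however, the paper does not push forward to the $x$-line: it argues purely locally at each critical point, using the expansion~\eqref{eqn:wgn-zeta}, the local form~\eqref{eqn:local-dzeta} of $d\zeta_k^{\bar\beta}$ and the identity $\bar z(\eta^\gamma)=z(-\eta^\gamma)$ to show that substituting $z$ for $\bar z$ simply multiplies each critical-point residue by $-1$ (the $B(z,\bar z)$ term being handled by comparing its singular part with that of $\tilde\omega_{0,2}(z,z)$, exactly the diagonal-regularization matching you anticipate). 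This local parity argument avoids the global pushforward altogether, which is just as well: the involution exists only near the simple zeros of $dx$, and $x$ is in general not a global double cover, so the sheet-sum manipulation you sketch would need substantial further justification.

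The genuine gap is your claimed cancellation of the auxiliary residues at the insertion points $z_j$. The double pole of $\omega_{0,2}(z,z_j)$ at $z=z_j$ carries no residue \emph{by itself}, but after multiplication by the remaining factors one has
$$
\mathop{\Res}_{z=z_j}\frac{x(z)^{m+1}}{dx(z)}\,\omega_{0,2}(z,z_j)\,\omega_{g_2,|J|+1}(z,z_J)
= d_{z_j}\Bigl[\tfrac{x(z_j)^{m+1}}{dx(z_j)}\,\omega_{g_2,|J|+1}(z_j,z_J)\Bigr],
$$
which is not zero, and these contributions do not cancel among the different $j$. For the Airy curve with $m=1$ and $(g,n+1)=(0,3)$ the relevant form is $\frac{z^3\,dz}{4(z-z_1)^2(z-z_2)^2}\,dz_1dz_2$, whose residues at $z_1$ and $z_2$ sum to $\frac14\,dz_1dz_2\neq 0$; discarding them, as your argument does, would give a right-hand side differing from the left-hand side by exactly this amount. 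These terms are not noise: they are precisely what produces the linear part $\sum_k\tilde p^i_k(r_im+k)\,\pd/\pd p^i_{r_im+k}$ of the operator $\cL_m$ in Theorem~\ref{thm:TR-virasoro-descendent-intro}. Accordingly, the stated identity must be read with the contour computing $\mathop{\Res}_{z=b_i}$ enclosing the nearby insertion points $z_j$ --- equivalently, with the polar part of $\omega_{0,2}(z,z_j)$ expanded in the regime where $z_j$ lies closer to $b_i$ than $z$, which is the regime relevant for extracting the coefficients of $\prod_a d\lambda_{j_a}^{-k_a}$, $k_a\geq 1$, in the proof of Theorem~\ref{thm:TR-virasoro-descendent-intro}. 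Your honest-residue reading together with the (false) cancellation claim does not yield the statement, so this step needs to be reworked rather than merely flagged as delicate.
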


For the generating series of TR descendents, to make the formula concise, we consider a special choice of local coordinate system $\Lambda=(\lambda_1,\cdots,\lambda_\bm)$ as follows.
Near each boundary point $b_i$, there exist a unique positive integer $r_i$ and a unique choice (up to a factor $\xi_i$ such that $\xi_i^{r_i}=1$) of the local coordinate $\lambda_i$ such that $x(z)=\lambda_i^{r_i}$.
For this choice of the local coordinate system $\Lambda$, we denote the corresponding generating series $Z^{\Lambda}({\bf p};\hbar)$ by $Z({\bf p};\hbar)$.

\begin{manualtheorem}{1}\label{thm:TR-virasoro-descendent-intro}
Suppose $x$, $y$ are meromorphic functions on $\Sigma$.
For $m\geq -1$, if $x^{m+1}y$ has only poles at the boundary points, then the descendent generating series $Z({\bf p};\hbar)$ for the topological recursion satisfies
$$
\cL_mZ({\bf p};\hbar)=0.
$$
Here, let $c_m=-\sum_{z^0: \, x(z^0)=0}\Res_{z=z^0} x(z)^{m+1}y(z)\omega_{0,1}(z)$ and $\tilde p_k^i=p_k^i-v_k^i$, where the non-vanished constants $v_k^i$ come from $y(z)dx(z)|_{b_i}=\sum_{k\geq 0} v_k^i \lambda_i^{k-1} d\lambda_i +\sum_{k\geq 1}\<\alpha^i_k\>_{0,1}\frac{d\lambda_i^{-k}}{k}$,
then the operator $\cL_m$ is given by
\begin{align*}
\cL_m=&\, \frac{c_m}{2\hbar^2}+
\frac{\delta_{m,-1}}{2\hbar^2}\sum_{i}\frac{1}{r_i}\sum_{a=0}^{r_i}\tilde p^i_{a}\tilde p^i_{r_i-a}
+\frac{\delta_{m,0}}{2\hbar^2}\sum_{i}\frac{(v_0^i)^2}{r_i} 
+\delta_{m,0}\sum_{i}\frac{r_i^2-1}{24r_i} \\
&\, +\sum_{i}\frac{1}{r_i}\sum _{k\geq 0} \tilde p^i_k (r_im+k)\frac{\pd}{\pd p^i_{r_im+k}} 
+\frac{\hbar^2}{2}\sum_{i}\frac{1}{r_i}\sum_{k+l=r_im}kl\frac{\pd^2}{\pd p^i_k\pd p^i_l},
\end{align*}
where the summation $\sum_i$ is over the indices labeling the boundary points.
\end{manualtheorem}
\begin{remark}
We will see in \S\ref{sec:TR-descendent-virasoro} that Theorem~\ref{thm:TR-virasoro-descendent-intro} is derived from Proposition~\ref{prop:TR-crit-boundary-intro}.
This partially explains why the TR descendent Virasoro constraints should be considered for invariants defined near the boundary points.
\end{remark}

For $\frak{g}(\Sigma)>0$, one can also consider the non-perturbative invariants from the topological recursion and define the non-perturbative generating series $Z^{\Lambda,\rm NP}_{\mu,\nu}({\bf p};\hbar;w)$ and $Z^{\rm NP}_{\mu,\nu}({\bf p};\hbar;w)$ for above special choice of local coordinate system $\Lambda$, see \S \ref{sec:NP-TR-virasoro}.
In this paper, we prove that the non-perturbative generating series $Z^{\rm NP}_{\mu,\nu}({\bf p};\hbar;w)$ satisfies the same Virasoro constraints as the ones for $Z({\bf p};\hbar)$.
\begin{manualtheorem}{2}\label{thm:NP-TR-virasoro-descendent-intro}
Suppose $x$, $y$ are meromorphic functions on $\Sigma$.
For $m\geq -1$, if $x^{m+1}y$ has only poles at the boundary points, then the non-perturbative descendent generating series $Z^{\rm NP}_{\mu,\nu}({\bf p};\hbar;w)$ for the topological recursion satisfies
$$
\cL_mZ^{\rm NP}_{\mu,\nu}({\bf p};\hbar;w)=0,
$$
where the operator $\cL_m$ is the same one as given in Theorem~\ref{thm:TR-virasoro-descendent-intro}.
\end{manualtheorem}

\subsection{Outline of the paper} 
This paper is organized as follows.  
In \S \ref{sec:CohFT-TR}, we review the CohFT associated with the spectral curve.
In \S \ref{sec:TR-ancestor-virasoro} and \S \ref{sec:TR-descendent-virasoro},
we derive the ancestor Virasoro constraints and descendent Virasoro constraints for the topological recursion, respectively.
In \S \ref{sec:NP-TR-virasoro}, we prove the Virasoro constraints for the non-perturbative topological recursion.
In \S\ref{sec:example}, we present several examples to compare the Virasoro constraints for TR descendents with those for geometric descendents.

\textbf{Acknowledgements.}
This work is partially supported by the National Key Research and Development Program of China 2023YFA1009802 and the National Natural Science Foundation of China 12225101.

\section{CohFT associated with the spectral curve}
\label{sec:CohFT-TR}
In this section, we first review the notation for CohFTs with vacuum, then recall the CohFT associated with a spectral curve, and conclude by introducing the homogeneity condition for the spectral curve.

\subsection{CohFT with vacuum}\label{subsec:CohFT}
We begin by recalling the definition of a CohFT~\cite{KM94}.
Let  $\Frob$ be a finite dimensional vector space over a field $\mathbb F$ (we take $\mathbb F=\mathbb C$) with a non-degenerate symmetric 2-form $\eta$. 
Let $\{\phi_i\}_{i=1}^{N}$ be a basis of $\Frob$, where $N$ is the dimension of $\Frob$,  we denote by $\{\phi^i\}_{i=1}^{N}$ the dual basis of $\{\phi_i\}_{i=1}^{N}$ with respect to $\eta$.
A CohFT $\Omega=\{\Omega_{g,n}\}_{2g-2+n>0}$ on $(\Frob, \eta)$ is a set of maps to the cohomological classes (taking value in an $\mathbb F$-algebra $\mathbb A$) on the moduli space $\Mbar_{g,n}$ of stable curves:
$$
\Omega_{g, n} \colon \Frob ^{\otimes n}  \rightarrow   H^*(\Mbar_{g,n},\mathbb{A}),
$$
that satisfies 1). {\it $S_n$-invariance axiom}: $\Omega_{g,n}(v_1,\cdots,v_n)$ is invariant under permutation between $v_i\in \Frob$, $i=1,\cdots,n$, corresponding the marked points of $\Mbar_{g,n}$,
and 2). {\it gluing axiom}: the pull-backs $q^*\Omega_{g,n}$ and $r^*\Omega_{g,n}$ of the gluing maps
\begin{equation*}
q\colon \Mbar_{g-1, n+2}\to \Mbar_{g,n} \quad {\text{ and }}\quad
r\colon \Mbar_{g_1, n_1+1}\times \Mbar_{g_2, n_2+1} \to \Mbar_{g_1+g_2, n_1+n_2}
\end{equation*}
 are equal to the contraction
of $\Omega_{g-1, n+2}$ and $\Omega_{g_1, n_1+1} \otimes \Omega_{g_2, n_2+1}$ by the bivector $\sum_i \phi_i\otimes \phi^i$.
If there is a distinguished element $\vac(\givz)=\sum_{m\geq 0}\vac_m \givz^m\in \Frob[[\givz]]$, called the vacuum vector, such that
$$
 \pi^*\Omega_{g,n}(v_1, \cdots , v_n)=\Omega_{g,n+1}\big(v_1, \cdots , v_n, \vac(\psi_{n+1}) \big), \qquad 2g-2+n>0,
$$
where $\pi\colon\Mbar_{g,n+1}\to \Mbar_{g,n}$ is the forgetful map defined by forgetting the last marked point,
then we call the CohFT satisfies the {\it vacuum axiom}, and such a CohFT is called the CohFT with vacuum.
We refer the reader to~\cite{PPZ15,CGL18} for more discussions on CohFTs.

The ancestor correlator $\<-\>_{g,n}^{\Omega}$ for the CohFT $\Omega$ is defined as follows: for $2g-2+n>0$, 
\beq\label{def:bracket-anc}\textstyle
\<v_1\bar\psi^{k_1},\cdots,v_n\bar\psi^{k_n}\>^{\Omega}_{g,n}
:=\int_{\Mbar_{g,n}}\Omega_{g,n}(v_1, \cdots , v_n)\psi_1^{k_1}\cdots\psi_n^{k_n},
\eeq
and for $2g-2+n\leq 0$, $\<-\>_{g,n}^{\Omega}:=0$.
We usually drop the superscript $\Omega$ in $\<-\>_{g,n}^{\Omega}$ if no confusion arises.
The total ancestor potential $\cA({\bf s};\hbar)$ is defined by
\beq\label{def:FandA-CohFT}
\cA({\bf s};\hbar):=\exp\bigg(
\sum_{g\geq 0}\hbar^{2g-2}\sum_{n\geq 0}\frac{1}{n!}\<{\bf s}(\bar\psi_1),\cdots,{\bf s}(\bar\psi_n)\>_{g,n}
\bigg),
\eeq
where ${\bf s}(\bar\psi)=\sum_{k,i} s^i_k \, \phi_i\bar\psi^k $ with formal variables $\{s_k^i\}$.

\subsection{Semisimple CohFT and its classification}
\label{subsec:CohFT-reconstruction}
A CohFT $\Omega$ induces a commutative and associative product structure, called the {\it quantum product},  $*$ on $\Frob$, which is defined by
$$
\eta(v_1* v_2, v_3)=\Omega_{0,3}(v_1, v_2, v_3).
$$
One can prove that ${\bf 1}:=\vac_0$ is the unity of the quantum product: ${\bf 1}*v=v$, $v\in \Frob$.
We call the vacuum vector is flat if $\vac(\givz)={\bf 1}$, and we call such element ${\bf 1}$ the flat unit.

A CohFT is called {\it semi-simple} if the algebra $(\Frob,\eta,*)$ can be decomposed into $N$ one dimensional algebras $(\mathbb E\{e_\beta\},\eta,*)$
where $e_\beta*e_\beta=e_\beta$ and $\eta(e_\beta,e_\beta)=\Delta_{\beta}^{-1}$ for some non-vanished elements $\Delta_{\beta}$. 
Here $\mathbb E$ is an extended field of $\mathbb F$.
We call $\{\phi_i\}$ and $\{e_\beta\}$ the flat basis and the canonical basis (of $H$ or of the CohFT), respectively.
We define $\bar e_\beta=\Delta_\beta^{\frac{1}{2}}e_\beta$ and call $\{\bar e_\beta\}$ the normalized canonical basis.
The relation of these bases are given by
$\phi_i= \Psi^{\bar\beta}_i \bar e_{\beta}=\widetilde\Psi^{\bar\beta}_i e_{\beta}$. 
Clearly, $\widetilde\Psi^{\beta}_i=\Delta_\beta^{\frac{1}{2}}\Psi^{\bar\beta}_i$.
We define formal variables $\{s^{\bar \beta}_k\}_{k,\beta}$ by $s_k^i\phi_i=s_k^{\bar\beta}\bar e_\beta$.

Based on Givental's works~\cite{Giv01a,Giv01b}, Teleman~\cite{Tel12} proved that a semi-simple CohFT $\Omega$ with vacuum $\vac(\givz)$ can be uniquely reconstructed from $N$ trivial CohFTs $\Omega^{\rm KW_{\beta}}$ (for which case, the state space is $\mathbb F\{\bar e_\beta\}$ with the symmetric $2$-form given by $\eta(\bar e_\beta,\bar e_\beta)=1$ and $\Omega^{\rm KW_{\beta}}_{g,n}(\bar e_\beta,\cdots,\bar e_\beta)=1$) by the following formula:
\beq\label{eqn:Giv-Tel}\textstyle
\Omega=R\cdot T\cdot (\oplus_{\beta=1}^{N}\Omega^{\rm KW_{\beta}}).
\eeq
The formula is explained as follows.
Firstly, $R(\givz)=\sum_{k\geq 0}R_k\givz^k\in \mathrm{End}(\Frob)[[\givz]]$ is an formal power series that
satisfies the symplectic condition
$
R^*(\givz)R(-\givz)={\rm Id},
$
where $R^*(\givz)$ is the adjoint of $R(\givz)$ with respect to $\eta$. 
We define
$V(\givz, \givw)
=\frac{{\rm Id}-R^*(-\givz)R(-\givw)}{\givz+\givw}.$
The $R$-action $R\cdot $ on a CohFT $\Omega'$ is defined as follows.
Let $\mathcal{G}_{g,n}$ be the set of stable graphs of genus $g$ with $n$ legs. For $\Gamma\in \mathcal{G}_{g,n}$,  define
$
{\rm Cont}_\Gamma\colon \Frob ^{\otimes n}  \rightarrow   H^*(\Mbar_{g,n},\mathbb{A})
$
by the following construction:
\begin{itemize}
  \item[1.] placing $\Omega'_{g(v), n(v)}$ at each vertex $v\in \Gamma$,
  \item[2.] placing $R^*(-\psi_i)\cdot$ at each leg $l_i \in \Gamma$ labeled by $i=1,\ldots, n$,
  \item[3.] placing $V(\psi_{v'}, \psi_{v''})\phi_i\otimes \phi^i$ at each edge $e\in \Gamma$ connection vertexes $v'$ and $v''$,
\end{itemize}
Then $R\cdot\Omega'$ is defined to be the  CohFT on the state space $(\Frob,\eta)$ with the maps
$$\textstyle
(R\cdot\Omega')_{g,n}=\sum_{\Gamma\in \mathcal{G}_{g,n}}\frac{1}{\left|{\rm Aut}(\Gamma)\right|}\xi_{\Gamma,*}\Cont_{\Gamma},
$$
where $\xi_{\Gamma}\colon \prod_{v\in \Gamma}\Mbar_{g(v),n(v)}\to \Mbar_{g,n}$ is the canonical map with image equal to the boundary stratum associated to the graph $\Gamma$, and its push-forward $\xi_{\Gamma,*}$ induces a homomorphism from the strata algebra on  $\prod_{v\in \Gamma}\Mbar_{g(v),n(v)}$ to the cohomology ring.
Secondly, $T(\givz)
\in \givz\, \Frob[[\givz]]$, the $T$-action $T\cdot$ on a CohFT $\Omega'$ is defined by
$$\textstyle
(T\cdot\Omega')_{g,n}(-)
=\sum_{m\ge 0}\frac{1}{m!}(\pi_m)_{*}\Omega'_{g,n+m}(-\otimes T(\psi_{n+1})\otimes\cdots\otimes T(\psi_{n+m})),
$$
where $\pi_{m}\colon \overline{\mathcal{M}}_{g, n+m} \to \Mgn$ is the forgetful map forgetting the last $m$ marked points.
Furthermore, for a semisimple CohFT with vacuum $\vac(\givz)$, let $\bar {\bf 1} = \sum_\beta  \bar e_\beta$, then $T(\givz)$ is explicitly given by: 
\beq\label{eqn:T-vacuum}
T(\givz)=\givz\cdot {\bar {\bf 1}}-\givz \cdot R(\givz)^{-1}\vac(\givz).
\eeq

\subsection{CohFT associated with spectral curve}
\label{subsec:TR-CohFT-ancestors}
We recall how a semi-simple CohFT $\Omega$ is constructed from the spectral curve $\cC=(\Sigma,x,y)$ \cite{DOSS14,Eyn14}.

Around each critical point $z^\beta$, we define the local Airy coordinates $\eta^\beta=\eta^{\beta}(z)$ by
$$\textstyle
x(z)=x^\beta + \frac{1}{2}\,\eta^\beta(z)^2,
$$
where $x^\beta=x(z^\beta)$. For each $\beta \in [N]$, we introduce
$$
d\zeta^{\bar\beta}(z) :=-\mathop{\Res}_{z'=z^\beta}\,\frac{B(z',z)}{\eta^\beta(z')}.
$$
The state space associated with $\cC$ is taken to be $\Frob:=\sspan\{\bar e_\beta\}_{\beta=1}^{N}$ and the symmetric $2$-form $\eta$ is taken to be $\eta(\bar e_\beta,\bar e_\gamma)=\delta_{\beta,\gamma}$.
The $R$-matrix and $T$-vector associated with the spectral curve data $\cC$ is defined by:
\beq\label{def:EO-R}
\frac{\givz}{\sqrt{2\pi \givz}}\cdot\int_{\mathfrak L_{\gamma}}e^{-x(z)/\givz}d\zeta^{\bar\beta}(z)
\asymp e^{-x^\gamma/\givz} \cdot \eta(R(-\givz)\bar e_\gamma,\bar e_\beta),
\eeq
\beq\label{def:EO-T}
\frac{\givz}{\sqrt{2\pi \givz}}\cdot \int_{\mathfrak L_{\gamma}}e^{-x(z)/\givz}dy(z)
\asymp e^{-x^\gamma/\givz}\cdot \big(\givz-\eta(\bar e_\gamma,T(\givz))\big).
\eeq
Here $\mathfrak L_\gamma$, called the Lefschetz thimble, is a path in $\Sigma$ passing only one critical point $z^\gamma$ such that for any $z\in \mathfrak L_\gamma$, $x(z)-x(z^\gamma)\in \mathbb R_{\geq 0}$.
It is proved by Eynard~\cite{Eyn14} that  matrix $R(\givz)$ satisfies the symplectic condition: $R^{*}(-\givz)R(\givz)=\id$.
\begin{definition} Let $R$ and $T$ be defined as above.
The CohFT $\Omega$ associated with the spectral curve {$\cC$} is defined as follows:
$$
\Omega:=R\cdot T\cdot  \big(\oplus_{\beta=1}^{N}\Omega^{\rm KW_{\beta}}\big).
$$
\end{definition}
\noindent This is a semi-simple CohFT with vacuum, where the vacuum vector $\vac(\givz)$ is determined by the $R$-matrix and the $T$-vector via equation~\eqref{eqn:T-vacuum}.
Given the CohFT $\Omega$, the ancestor correlator $\<-\>_{g,n}$ and the total ancestor potential $\cA({\bf s};\hbar)$ are defined by equations~\eqref{def:bracket-anc} and \eqref{def:FandA-CohFT}, respectively.

We recall how the ancestor correlators of $\Omega$ are directly read off from the topological recursion. For each $\beta\in [N]$ and $k\geq 0$, let
$$
d\zeta_k^{\bar \beta}=\bigg(-d\circ \frac{1}{dx}\bigg)^kd\zeta^{\bar\beta},\qquad k\geq 0.
$$
The differential $d\zeta^{\bar\beta}_{k}(z)$ are globally defined meromorphic $1$-forms on $\Sigma$ with poles at the critical points.
Dunin-Barkowshi--Orantin--Shadrin--Spitz~\cite{DOSS14} (similar arguments can be also found in \cite{Eyn14, Mil14}) proved that for $2g-2+n>0$,
\beq\label{eqn:wgn-zeta}
	\omega_{g,n}(z_1,\cdots,z_n)
	=\sum_{\substack{k_1,\cdots,k_n\geqslant 0\\ \beta_1,\cdots,\beta_n\in [N]}}
	\<\bar e_{\beta_1}\bar\psi^{k_1},\cdots,\bar e_{\beta_n}\bar\psi^{k_n}\>_{g,n}
	d\zeta^{\bar\beta_1}_{k_1}(z_1)\cdots d\zeta^{\bar\beta_n}_{k_n}(z_n).
\eeq
Furthermore, it is proved in~\cite{GJZ23} that the vacuum vector associated with the spectral curve data $\cC$ has the following formula:
\beq\label{eqn:TR-vacuum}
\vac(\givz)=-\sum_{\beta}\bar e_{\beta}\sum_{m\geq 0}\givz^{m}\sum_{\gamma}\mathop{\Res}_{z= z^\gamma}  y(z)d\zeta_{m}^{\bar\beta}(z).
\eeq

\subsection{Homogeneity condition}
Let $E=E_0+\sum_{i}(1-d_i)t^i\phi_i$ be an vector field on $\Frob$, where $E_0=\sum c^i\phi_i$ for some constants $c^i$,
we call it an {\it Euler vector field} if $d_i+d_j=\delta$ for $\eta(\phi_i,\phi_j)\ne 0$ and for a constant $\delta$, called the {\it conformal dimension}, that does not depend on $i,j$.
A CohFT $\Omega$ is called {\it homogeneous} with respect to $E$, if for $2g-2+n>0$,
\begin{align*}
	&\, \textstyle \pi_{*}\Omega_{g,1+n}(E_0,v_1,\cdots,v_n)-\sum_{i=1}^{n}\Omega_{g,n}(v_1,\cdots,\deg v_i,\cdots,v_n)\nonumber\\
	=&\, \big((g-1)\delta-\deg\big)\Omega_{g,n}(v_1,\cdots,v_n).
\end{align*} 
where $\deg$ is an operator defined by
$\deg \phi_i=d_i\phi_i$ and $\deg \theta= k\cdot \theta$ for $\theta \in H^{2k}(\Mbar_{g,n}, \mathbb{A})$.
Given a homogeneous CohFT, we define operator $\mu\in \End(\Frob)$ by $\mu(\phi_i)=(d_i-\frac{\delta}{2})\phi_i$.

For semisimple CohFT, the CohFT $\Omega$ constructed by equation~\eqref{eqn:Giv-Tel} is homogeneous with respect to an Euler vector field $E$ if and only if the vacuum vector satisfies the homogeneity condition
$$\textstyle
(\mu+\frac{\delta}{2}+m)\vac_{m}=-E_0*\vac_{m+1},\qquad m\geq 0,
$$  
and the $R$-matrix satisfies the following homogeneity condition~\cite[Proposition 8.5 and Remark 8.2]{Tel12}:
\beq\label{eqn:hom-R}
[R_{m+1}, E_0*]=(m+\mu)R_{m},\qquad m\geq 0.
\eeq

Now we introduce an homogeneity condition for the spectral curve which corresponds to the homogeneity condition for the $R$-matrix on the CohFT side.
We introduce 
$$\textstyle
E_0=\sum_\beta x^\beta e_\beta.
$$
It plays the role of the Euler vector field at original point $\ttau=0$.
We define operator $\mu$ by
$$
\mu=[R_1,E_0*].
$$
We note here that by the definition and by using the airy coordinate $\eta^{\beta}=\eta^{\beta}(z)$, we have
\beq\label{eqn:R1}
(R_1)_{\bar\gamma}^{\bar\beta}=\mathop{\Res}_{z_1=z^\beta}\mathop{\Res}_{z_2=z^\gamma}\frac{B(z_1,z_2)}{\eta^\beta(z_1)\eta^\gamma(z_2)}.
\eeq
\begin{definition}
We call the spectral curve $\cC$ is homogeneous if the following equations hold:
\beq\label{eqn:Dzeta-Psi-d}
(x-x^\beta)\cdot d\zeta_1^{\bar\beta}(z)=\sum_{\gamma}\big(\mu+\tfrac{3}{2}\big)^{\bar\beta}_{\bar\gamma}\cdot d\zeta_0^{\bar\gamma}(z),\qquad \beta=1,\cdots,N.
\eeq
\end{definition}

\begin{lemma}\label{lem:hom-TR} 
The spectral curve $\cC$ is homogeneous if and only if the homogeneity condition~\eqref{eqn:hom-R} of the $R$-matrix holds.
\end{lemma}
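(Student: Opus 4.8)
The plan is to translate the geometric homogeneity condition \eqref{eqn:Dzeta-Psi-d} into the operator identity \eqref{eqn:hom-R} for $m=0$, and then to bootstrap the higher-$m$ cases using the recursive structure that defines the differentials $d\zeta_k^{\bar\beta}$. The starting point is the observation that the differentials $d\zeta_0^{\bar\beta}$ and $d\zeta_1^{\bar\beta}$ encode, respectively, the identity and the matrix $R_1$ on the normalized canonical basis: by definition $d\zeta_1^{\bar\beta} = (-d\circ\frac1{dx})d\zeta_0^{\bar\beta}$, and the residue pairing of $d\zeta_k^{\bar\beta}$ against the Airy coordinate $\eta^\gamma$ near $z^\gamma$ extracts the $(\bar\beta,\bar\gamma)$-entry of $R_k$ (as in \eqref{eqn:R1} for $k=1$). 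First I would make this dictionary precise: multiplication of a differential by $(x-x^\beta)$ on the curve corresponds to the operator $E_0*\,-\,x^\beta\cdot\mathrm{Id}$ acting on the index, because $E_0=\sum_\beta x^\beta e_\beta$ acts diagonally in the canonical basis with eigenvalue $x^\beta$. Pairing \eqref{eqn:Dzeta-Psi-d} against $1/\eta^\gamma(z)$ at each critical point and reading off matrix entries should then directly yield $[R_1, E_0*] = (\mu + \tfrac12)\cdot\text{(something)}$; one must check that the shift by $\tfrac32$ in \eqref{eqn:Dzeta-Psi-d} versus the $\tfrac12$ implicit in $\mu$ accounts for the $+\mu$ on the right of \eqref{eqn:hom-R} with $m=0$, i.e. $[R_1,E_0*]=\mu R_0 = \mu$, consistent with the paper's very definition $\mu = [R_1,E_0*]$.

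The substance of the lemma is therefore the equivalence of \eqref{eqn:Dzeta-Psi-d} with the \emph{full} family \eqref{eqn:hom-R} for all $m\geq 0$, not just $m=0$. Here I would argue by induction on $m$. The key input is that the operator $-d\circ\frac1{dx}$ that produces $d\zeta_{k+1}^{\bar\beta}$ from $d\zeta_k^{\bar\beta}$ interacts with multiplication by $(x-x^\beta)$ in a controlled way: schematically, $(x-x^\beta)\cdot(-d\circ\tfrac1{dx})\omega = (-d\circ\tfrac1{dx})\big((x-x^\beta)\omega\big) - \omega$ up to exact terms, which is the differential-operator analogue of the commutator $[R_{m+1},E_0*] - [R_m,E_0*]$ producing the shift $m \mapsto m+1$. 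Applying $(-d\circ\frac1{dx})$ repeatedly to \eqref{eqn:Dzeta-Psi-d} and collecting terms, one should obtain a recursion among the "matrices" $\langle d\zeta_k^{\bar\beta}, 1/\eta^\gamma\rangle$ that is identical to the recursion \eqref{eqn:hom-R} satisfied by the $R_m$. Since both sides are determined by the same recursion with the same initial data ($R_0=\mathrm{Id}$), they agree for all $m$. Conversely, specializing the resulting identity at level $m=1$ and re-expressing it in terms of $d\zeta_0^{\bar\beta}$, $d\zeta_1^{\bar\beta}$ recovers \eqref{eqn:Dzeta-Psi-d}, giving the reverse implication.

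The main obstacle I anticipate is the bookkeeping of the \emph{exact} and pole terms when commuting $(x-x^\beta)\cdot$ past $-d\circ\frac1{dx}$: the differentials $d\zeta_k^{\bar\beta}$ have poles only at the critical points, but multiplying by $(x-x^\beta)$ and differentiating can a priori change the pole orders, and one must verify that everything remains expressible in the finite-dimensional span of the $d\zeta_0^{\bar\gamma}$ at each stage (this is exactly what makes the residue pairing with $1/\eta^\gamma$ well-defined and finite). Controlling this requires knowing that the span $\{d\zeta_k^{\bar\beta}\}$ closes appropriately under the two operations in question — which is essentially a restatement of the structure that already underlies \eqref{eqn:wgn-zeta}. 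A secondary, more bureaucratic difficulty is matching constants: the half-integer shifts ($\tfrac12$ in $\mu$, $\tfrac32$ in \eqref{eqn:Dzeta-Psi-d}) and the sign conventions in $d\zeta^{\bar\beta} = -\Res_{z'=z^\beta}B(z',z)/\eta^\beta(z')$ must be tracked carefully so that the induction step reproduces \eqref{eqn:hom-R} on the nose rather than up to a scalar. I expect the proof to reduce, after this careful setup, to a short computation verifying the $m=0$ case and a one-line induction for $m \geq 1$.
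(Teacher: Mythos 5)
Your overall strategy --- read off the $m=0$ case of \eqref{eqn:hom-R} from the singular expansions of both sides of \eqref{eqn:Dzeta-Psi-d} at the critical points, then climb up in $m$ by applying $-d\circ\tfrac{1}{dx}$ iteratively --- is the same as the paper's. The gap is at the step where you pass from ``the two sides of \eqref{eqn:Dzeta-Psi-d} have the same singular parts at every critical point'' to ``the two sides are equal''. Both sides are meromorphic differentials on $\Sigma$ with poles only at the critical points, so matching all singular parts (which is all your residue pairings against powers of $\eta^\gamma$ can see) determines them only up to a \emph{holomorphic differential} $\omega$ on $\Sigma$. Since $\Sigma$ may have positive genus (e.g.\ the Weierstrass curve of \S\ref{subsec:weierstrass} lives on a torus), $\omega$ need not vanish. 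Consequently the $m=0$ case of \eqref{eqn:hom-R} --- which, given the paper's definition $\mu=[R_1,E_0*]$ and $R_0=\mathrm{Id}$, is essentially automatic --- is equivalent only to $(x-x^\beta)\,d\zeta_1^{\bar\beta}=\sum_\gamma(\mu+\tfrac32)^{\bar\beta}_{\bar\gamma}\,d\zeta_0^{\bar\gamma}+\omega$ for \emph{some} holomorphic $\omega$, not to \eqref{eqn:Dzeta-Psi-d} itself.

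This breaks your induction in both directions. Applying $(-d\circ\tfrac1{dx})^k$ carries along the extra term $(-d\circ\tfrac1{dx})^k\omega$, and the $m=k$ case of \eqref{eqn:hom-R} is equivalent to that term being \emph{holomorphic} --- a genuinely new condition for each $k$, not a formal consequence of the $m=0$ case; so ``both sides satisfy the same recursion with the same initial data'' does not follow. Likewise your reverse implication ``specialize at $m=1$'' is insufficient: one needs the full tower $m\ge 0$ to conclude that $(-d\circ\tfrac1{dx})^k\omega$ is holomorphic for every $k$, which (because $\tfrac{1}{dx}$ is singular at the zeros of $dx$, so each application raises the pole order of any nonzero holomorphic $\omega$) forces $\omega=0$ and only then recovers \eqref{eqn:Dzeta-Psi-d} exactly. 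Identifying the higher-$m$ constraints with the vanishing of this holomorphic ambiguity is the actual content of the lemma and is absent from your proposal; on $\mathbb{P}^1$, where there are no nonzero holomorphic differentials, your argument would go through, but the lemma is stated for arbitrary $\Sigma$.
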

\begin{proof}
It is clear that both sides of equation~\eqref{eqn:Dzeta-Psi-d} are meromorphic and only have poles at critical point (since $x$ is meromorphic, the left-hand side of equation~\eqref{eqn:Dzeta-Psi-d} is regular at boundaries).
Near the critical point $z=z^\gamma$, the left-hand side of equation~\eqref{eqn:Dzeta-Psi-d} has expansion
$$\textstyle
{\rm l.h.s}  = \big(x^\gamma-x^\beta+\frac{1}{2}(\eta^\gamma)^2\big) d\big(\frac{\delta_{\beta,\gamma}}{(\eta^\gamma)^{3}}+(R_1)^{\bar\beta}_{\bar\gamma}\cdot\frac{1}{\eta^{\gamma}}+{\text{regular part}}\big),
$$
and the right-hand side of equation~\eqref{eqn:Dzeta-Psi-d} has expansion
$$\textstyle
{\rm r.h.s}= d\big(\delta_{\beta,\gamma}\frac{3}{2}\cdot\frac{1}{\eta^\gamma} +\mu^{\bar\beta}_{\bar\gamma}\cdot\frac{1}{\eta^\gamma}+{\text{regular part}}\big).
$$
We see ${\rm l.h.s}-{\rm r.h.s}$ is regular if and only if $(x^\gamma-x^\beta)(R_1)^{\bar\beta}_{\bar\gamma}=\mu^{\bar\beta}_{\bar\gamma}$.
Notice that in the normalized basis $E_0*=\diag\{x^\beta\}$, this is exactly the $m=0$ case of equation \eqref{eqn:hom-R}.
Hence, the $m=0$ case of equation \eqref{eqn:hom-R} holds if and only if  there is a holomorphic differential $\omega$ on $\Sigma$ such that
$$\textstyle
(x-x^\beta)\cdot d\zeta_1^{\bar\beta}(z)=\sum_{\gamma}\big(\mu+\tfrac{3}{2}\big)^{\bar\beta}_{\bar\gamma}\cdot d\zeta_0^{\bar\gamma}(z)+\omega.
$$

Furthermore, by taking action of $(-d\cdot \frac{1}{dx})^k$, $k\geq 1$, on equation~\eqref{eqn:Dzeta-Psi-d}, we have
$$\textstyle
(x-x^\beta)d\zeta_{k+1}^{\bar\beta}(z)=\sum_{\gamma}\big(\mu+k+\tfrac{3}{2}\big)^{\bar\beta}_{\bar\gamma}\cdot d\zeta_k^{\bar\gamma}(z)+(-d\circ\frac{1}{dx})^{k}\omega.
$$
Similarly as above discussions, suppose $m=0$ case of  equation~\eqref{eqn:hom-R} holds, then the $m=k$ case of equation~\eqref{eqn:hom-R} holds if and only if $(-d\circ\frac{1}{dx})^k\omega$ is holomorphic on $\Sigma$.
We see equation~\eqref{eqn:hom-R} holds if and only if $(-d\circ\frac{1}{dx})^k\omega=0$ for $k\geq 0$, this is equivalent to $\omega=0$ (because $\frac{1}{dx}$ is singular at critical points). The Lemma is proved.
\end{proof}

\section{Ancestor Virasoro constraints for topological recursion}
\label{sec:TR-ancestor-virasoro}
In this section, we first derive the ancestor Virasoro constraints for TR-differentials (equation~\eqref{eqn:TR-virasoro}) directly from the topological recursion, thereby proving Lemma~\ref{lem:TR-res-intro}. Next, we establish the equivalence between the ancestor Virasoro constraints for the CohFT associated with the spectral curve and the equation~\eqref{eqn:TR-virasoro}, thus proving  Proposition~\ref{prop:TR-virasoro-ancestor}. As a consequence, we obtain a new direct proof for the ancestor Virasoro constraints.

\subsection{Ancestor Virasoro constraints for TR-differentials}
We derive the ancestor Virasoro constraints for the TR-differentials from the topological recursion directly.
In fact, we can prove a refined version as shown in the following Lemma.
\begin{lemma} 
\label{lem:TR-res}
For $2g-2+n+1>0$, $m\geq -1$ and $\gamma=1,\cdots,N$,  
\begin{align*}
 &\,\mathop{\Res}_{z_0=z^\gamma} x(z_0)^{m+1}\cdot y(z_0)\cdot \omega_{g,n+1}(z_0,z_{[n]})\nonumber \\
=&\,  \frac{1}{2}\mathop{\Res}_{z=z^\gamma} \frac{x(z)^{m+1}}{dx(z)}\cdot \bigg(\omega_{g-1,n+2}(z,\bar z,z_{[n]})
+\sum^{\prime}_{\substack{g_1+g_2=g\\ I\sqcup J=[n]}} \omega_{g_1,|I|+1}(z,z_{I})\omega_{g_2,|J|+1}(\bar z,z_J)\bigg).
\end{align*}
\end{lemma}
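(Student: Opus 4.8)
The plan is to prove the (refined, per-critical-point) identity in Lemma~\ref{lem:TR-res} by induction on $2g-2+n+1>0$, feeding the topological recursion~\eqref{eqn:def-TR} into the left-hand side and then performing a residue manipulation local to a single critical point $z^\gamma$. First I would substitute the definition of $\omega_{g,n+1}(z_0,z_{[n]})$ as a sum of residues $\sum_\beta\Res_{z=z^\beta}K_\beta(z_0,z)(\cdots)$ into the integrand $x(z_0)^{m+1}y(z_0)\,\omega_{g,n+1}(z_0,z_{[n]})$, and then take $\Res_{z_0=z^\gamma}$. Since $K_\beta(z_0,z)$ has its $z_0$-singularity only at $z=z^\gamma$ when $\beta=\gamma$ (the Bergman kernel $B(z_0,z')$ is singular along the diagonal), the $z_0$-residue at $z^\gamma$ picks out the $\beta=\gamma$ term, and one is left with a double-residue expression in $z$ near $z^\gamma$ and in $z_0$ near $z^\gamma$. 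The key local computation is then: interchange the order of the two residues (justified because the only relevant singularities are at $z^\gamma$), and evaluate
$$
\mathop{\Res}_{z_0=z^\gamma} x(z_0)^{m+1}y(z_0)\,K_\gamma(z_0,z)
= \mathop{\Res}_{z_0=z^\gamma} \frac{x(z_0)^{m+1}y(z_0)\int_{z'=\bar z}^{z}B(z_0,z')}{2(y(z)-y(\bar z))dx(z)}.
$$
Because $B(z_0,z')$ has a double pole with no residue on the diagonal, $\Res_{z_0=z^\gamma}x(z_0)^{m+1}y(z_0)\int_{\bar z}^{z}B(z_0,z') = \big(x(z)^{m+1}y(z)-x(\bar z)^{m+1}y(\bar z)\big)$ up to the usual normalization, so the kernel collapses and the factor $x(z)^{m+1}y(z)-x(\bar z)^{m+1}y(\bar z)$ over $2(y(z)-y(\bar z))dx(z)$ appears.

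Next I would use the local involution $\bar z$: near $z^\gamma$ we have $x(\bar z)=x(z)$, hence $x(\bar z)^{m+1}=x(z)^{m+1}$, so the numerator becomes $x(z)^{m+1}(y(z)-y(\bar z))$ and cancels against the $(y(z)-y(\bar z))$ in the denominator, leaving exactly $\tfrac12\,x(z)^{m+1}/dx(z)$ multiplying $\big(\omega_{g-1,n+2}(z,\bar z,z_{[n]})+\sum'\omega_{g_1}(z,z_I)\omega_{g_2}(\bar z,z_J)\big)$, and then $\Res_{z=z^\gamma}$ of this is precisely the right-hand side. The $m=-1$ case is uniform since $x^{0}=1$; no subtlety there. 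I would also need to check that the exchange of residues and the passage from $\int_{\bar z}^z B$ to the difference of endpoint values does not lose a contribution from the lower endpoint $z'=\bar z$ — this is handled by noting $\bar z\to z^\gamma$ as $z\to z^\gamma$, so both endpoints are near $z^\gamma$ and the antiderivative argument applies symmetrically; the symmetrization $z\leftrightarrow\bar z$ built into the recursion kernel is what produces the factor $\tfrac12$.

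The main obstacle I anticipate is the careful bookkeeping of the locally-defined object $\bar z$ and the justification of the residue interchange: one must be sure that when $z_0$ ranges over a small loop around $z^\gamma$ and $z$ over a small loop around $z^\gamma$, the iterated residue is independent of order, and that the identity $x(\bar z)=x(z)$ (hence $x(\bar z)^{m+1}=x(z)^{m+1}$) is used only inside a neighborhood of $z^\gamma$ where $\bar z$ is single-valued. A secondary point is that $x(z_0)^{m+1}y(z_0)$ may itself be singular at $z^\gamma$ if $x$ has a pole there, but by hypothesis the critical points are zeros (not poles) of $dx$ and $x$ is regular near them, while $y$ is holomorphic near $z^\gamma$ with $dy(z^\gamma)\neq 0$; so $x(z_0)^{m+1}y(z_0)$ is holomorphic at $z^\gamma$ for $m\geq -1$, and the only poles in the $z_0$-integrand come from $\omega_{g,n+1}$ (equivalently from $K_\gamma$), which is what makes the residue extraction clean. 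Summing the per-$\gamma$ identity over $\gamma$ then yields Lemma~\ref{lem:TR-res-intro}.
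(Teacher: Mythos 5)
Your proposal is correct and follows essentially the same route as the paper's proof: substitute the recursion for $\omega_{g,n+1}$, exchange the $z_0$-contour around $z^\gamma$ with the $z$-residues so that only $\beta=\gamma$ survives, collapse the kernel via $\Res_{z_0=z'}f(z_0)B(z_0,z')=df(z')$ to get $x(z)^{m+1}y(z)-x(\bar z)^{m+1}y(\bar z)$ over $2(y(z)-y(\bar z))dx(z)$, and cancel using $x(\bar z)=x(z)$. The only cosmetic difference is that you frame it as an induction, which is not actually needed (the identity follows directly for each $(g,n)$ from the recursion), and the factor $\tfrac12$ comes simply from the explicit $2$ in the kernel's denominator rather than from any symmetrization.
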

\begin{proof}
By definition, $x(z_0)^{m+1}y(z_0)$ is regular at critical points.
It is clear that $K_{\beta}(z_0,z)$ is regular (with respect to $z_0$) when $z_0$ is far away from $z^\beta$.
For each $z^\beta$, we can choose a small enough neighborhood $U^{\beta}_{\epsilon_0}=\{z_0: |z_0-z^\beta|<\epsilon_0\}$ that containing no other critical point. Take $0<\epsilon<\epsilon_0$ and let 
$C^\beta_\epsilon=\{z_0: |z_0-z^\beta|=\epsilon\}$.
By the definition of the topological recursion, we have
\begin{align*}
 &\,\mathop{\Res}_{z_0=z^\gamma} x(z_0)^{m+1}\cdot y(z_0)\cdot \omega_{g,n+1}(z_0,z_{[n]})\nonumber 
=\, \oint_{z_0\in C_{\epsilon}^{\gamma}} x(z_0)^{m+1}\cdot y(z_0)\cdot \sum_\beta \mathop{\Res}_{z=z^\beta}K_\beta(z_0,z) \cdot \mathcal{R}_{g,n}.
\end{align*}
where $\mathcal{R}_{g,n}=\mathcal{R}_{g,n}(z,\bar z,z_{[n]})$ is defined by
$$
\mathcal{R}_{g,n}(z,\bar z,z_{[n]}):=\omega_{g-1,n+2}(z,\bar z,z_{[n]})
+\sum^{\prime}_{\substack{g_1+g_2=g\\ I\sqcup J=[n]}} \omega_{g_1,|I|+1}(z,z_{I})\omega_{g_2,|J|+1}(\bar z,z_J).
$$
Now we exchange the integral $\oint_{z_0\in C_{\epsilon}^{\gamma}}$ with residue $\Res_{z=z^\beta}$.
For $\gamma\ne \beta$, these two terms can be changed directly. As the term $x(z_0)^{m+1}y(z_0)\cdot K_\beta(z_0,z)$ is regular with respect to $z_0$, the result of the integration gives $0$.
For $\gamma=\beta$, we should consider $|z-z^\gamma|<\epsilon$, then we have
$$
\oint_{z_0\in C_{\epsilon}^{\gamma}}x(z_0)^{m+1}y(z_0)\cdot K_\gamma(z_0,z)
=\frac{\int_{z'=\bar z}^{z}\Res_{z_0=z'}x(z_0)^{m+1}y(z_0)B(z_0,z')}{2(y(z)-y(\bar z))dx(z)}
=\frac{1}{2} \frac{x(z)^{m+1}}{dx(z)},
$$
where we have used the property of Bergman kernel: $\mathop{\Res}_{z_1=z_2}f(z_1)\omega_{0,2}(z_1,z_2)=df(z_2)$.
The Lemma is proved.
\end{proof}

\subsection{Ancestor Virasoro constraints for topological recursion}
Now we prove the original ancestor Virasoro constraints by establishing their equivalence with the ancestor Virasoro constraints for TR-differentials.

We first give a detailed formulation of the ancestor Virasoro constraints for the semisimple CohFT associated with the spectral curve.
Let $\E=E_0*=\diag\{x(z^\beta)\}_{\beta=1}^{N}$ and let $R(\givz)$ and $\vac(\givz)$ be the $R$-matrix and vacuum vector for the CohFT, respectively. Following the notations of~\cite{GZ25}, we introduce the operator
$$
D_{\E,\givz}:=R(\givz) \cdot \big(\E+\givz^2\pd_{\givz}+\tfrac{3}{2}\givz\big)\cdot R^{-1}(\givz).
$$
The ancestor Virasoro constraints can be formulated as follows: for $m\geq -1$,
\beq\label{eqn:virasoro-ancestor}
 L_m\cA({\bf s};\hbar)=0,
\eeq
where the ancestor Virasoro operator $L_{m}$ is given by
\begin{align*}
L_m=&\, \frac{1}{2\hbar^2}\eta(\E^{m+1}s_0,  s_0)+\frac{m+1}{16}\tr(\E^m)
+\frac{1}{2} \tr(\E^{m+1}R_1)
+\sum_{k \geq 0}\sum_{l=0}^{m+k}\sum_{\beta,\gamma=1}^{N}(C_{m})_{k-1,\bar\beta}^{l,\bar\gamma} \tilde s_k^{\bar\beta}  \frac{\pd }{\pd s_{l}^{\bar\gamma}}    \nonumber \\
&\, +\frac{\hbar^2}{2}\sum_{k=0}^{m-1}\sum_{l=0}^{m-k-1}\sum_{\beta,\gamma=1}^{N}(-1)^{k+1}( C_{m})_{-k-2,\bar\beta}^{l,\bar\gamma}\frac{\pd^2 }{\pd s_{k}^{\bar\beta}\pd s_{l}^{\bar\gamma}}.
\end{align*}
Here, $\tilde s_k^{\bar \beta}=s_k^{\bar \beta}-\vac_{k-1}^{\bar\beta}$
and for $m\geq -1$, $k,l\in \mathbb Z$, $\beta,\gamma\in [N]$, terms $(C_{m})_{k,\bar\beta}^{l,\bar\gamma}$ are defined by
$$\textstyle
D_{\E,\givz}^{m+1} \bar e_\beta \givz^k
=\sum_{l\geq k}\sum_{b=0}^{N-1}(C_{m})_{k,\bar\beta}^{l,\bar\gamma}\bar e_\gamma \givz^{l}.
$$

\begin{remark}
By Lemma~\ref{lem:hom-TR}, for homogeneous spectral curve, its corresponding $R$-matrix satisfies the homogeneity condition:
$[R_{m+1},\E]=(m+\mu)R_m$, $m\geq 0$.
This gives us that 
$$\textstyle
D_{\E,\givz}=\E+\givz^2\pd_{\givz}+(\mu+\frac{3}{2})\givz.
$$
Then the corresponding ancestor Virasoro constraints coincide with the one given in~\cite{GZ25}.
\end{remark}
\begin{proposition}[=Proposition~\ref{prop:TR-virasoro-ancestor}]
For each $m\geq -1$, the ancestor Virasoro constraint~\eqref{eqn:virasoro-ancestor} is equivalent to equation~\eqref{eqn:TR-virasoro}, and hence holds for the topological recursion.
\end{proposition}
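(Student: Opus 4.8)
The plan is to prove the equivalence by inserting the basis expansion~\eqref{eqn:wgn-zeta} of the $\omega_{g,n}$ into the ancestor Virasoro constraint for TR-differentials~\eqref{eqn:TR-virasoro}, and then to check that the resulting relation among the ancestor correlators $\<\bar e_{\beta_1}\bar\psi^{k_1},\cdots\>_{g,n}$ is, term by term, the coefficientwise unfolding of $L_m\cA({\bf s};\hbar)=0$; since~\eqref{eqn:TR-virasoro} has already been proven in Lemma~\ref{lem:TR-res}, the constraint itself then follows. Concretely, after substituting~\eqref{eqn:wgn-zeta} for $\omega_{g,n+1}$, for $\omega_{g-1,n+2}$ when it is stable, and for the stable factors $\omega_{g_i,|I|+1}$ appearing in $\sum'$, each side of~\eqref{eqn:TR-virasoro} becomes a sum of correlators times matching products of spectator differentials $\prod_i d\zeta_{k_i}^{\bar\beta_i}(z_i)$ (the $d\zeta_k^{\bar\beta}$ form a basis), and one is reduced to computing the two ``operator residues''
\[
\sum_{\gamma}\Res_{z_0=z^\gamma}x(z_0)^{m+1}y(z_0)\,d\zeta_{k}^{\bar\beta}(z_0)
\qquad\text{and}\qquad
\sum_{\gamma}\Res_{z=z^\gamma}\frac{x(z)^{m+1}}{dx(z)}\,d\zeta_{k}^{\bar\beta}(z)\,d\zeta_{l}^{\bar\delta}(\bar z),
\]
together with the contributions involving $\omega_{0,2}=B$, which are not covered by~\eqref{eqn:wgn-zeta} and must be computed by hand.

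The key input is the identification of multiplication by $x$ on TR-differentials with the operator $D_{\E,\givz}$. Because $x$ is regular at every critical point, $x^{m+1}\cdot{}$ only lowers pole orders there, so its effect on residues is governed by principal parts, which involve only even pole orders since $x-x^\gamma=\tfrac12(\eta^\gamma)^2$. Comparing Laplace transforms along the Lefschetz thimbles --- using $\givz^2\pd_\givz e^{-x/\givz}=x\,e^{-x/\givz}$, the defining relation~\eqref{def:EO-R} of $R(\givz)$, the identity $d\zeta_k^{\bar\beta}=\bigl(-d\circ\tfrac{1}{dx}\bigr)^kd\zeta^{\bar\beta}$, and the Airy-type normalization of $d\zeta_0^{\bar\beta}$ (which produces the $\tfrac32\givz$ shift and the sign in $R(-\givz)$) --- one finds that multiplication by $x$ realizes $D_{\E,\givz}=R(\givz)\bigl(\E+\givz^2\pd_\givz+\tfrac32\givz\bigr)R^{-1}(\givz)$ in these coordinates, hence multiplication by $x^{m+1}$ realizes $D_{\E,\givz}^{m+1}$; this is where the constants $(C_m)_{k,\bar\beta}^{l,\bar\gamma}$ enter. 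The extra factor $y(z_0)$ in the first operator residue is absorbed via the vacuum formula~\eqref{eqn:TR-vacuum}: since $\sum_\gamma\Res_{z^\gamma}y\,d\zeta_l^{\bar\delta}=-\vac_l^{\bar\delta}$, that residue equals $-\sum_{l,\delta}(C_m)_{k,\bar\beta}^{l,\bar\delta}\vac_l^{\bar\delta}$, and combined with the correlator insertion it reproduces exactly the first-order part $\sum(C_m)_{k-1,\bar\beta}^{l,\bar\gamma}\tilde s_k^{\bar\beta}\,\pd/\pd s_l^{\bar\gamma}$ of $L_m$ (up to the standard degree-shift and adjoint conventions of Givental quantization), the shift $\tilde s_k^{\bar\beta}=s_k^{\bar\beta}-\vac_{k-1}^{\bar\beta}$ being the imprint of $\omega_{0,1}=y\,dx$ appearing on the right-hand side of the TR kernel.

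For the second operator residue one expands $d\zeta_l^{\bar\delta}(\bar z)$ near $z^\gamma$ using that the local involution negates the Airy coordinate, $\eta^\gamma(\bar z)=-\eta^\gamma(z)$; together with the pole order of $x^{m+1}/dx$ at the critical points, this produces the sign $(-1)^{k+1}$, the shifted lower index $-k-2$, and the finite summation ranges occurring in the coefficient $(C_m)_{-k-2,\bar\beta}^{l,\bar\gamma}$ of the $\hbar^2$-part of $L_m$. The term $\omega_{g-1,n+2}(z,\bar z,z_{[n]})$ then reproduces the second-derivative contribution $\tfrac{\hbar^2}{2}\sum(-1)^{k+1}(C_m)_{-k-2,\bar\beta}^{l,\bar\gamma}\,\pd^2/\pd s_k^{\bar\beta}\pd s_l^{\bar\gamma}$, while the disconnected sum $\sum'\omega_{g_1,|I|+1}\omega_{g_2,|J|+1}$ (over stable factors) matches the products of first derivatives generated when the $\hbar^2$-part of $L_m$ acts on $\cA=\exp(\cdots)$; the exclusion of $\omega_{0,1}$ in $\sum'$ is consistent with the $\tilde s$-shift above. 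The remaining pieces, where $\omega_{0,2}=B$ enters --- either as the ``node'' $B(z,\bar z)$ in the $(g,n)=(1,0)$ contribution, or as a ``propagator'' $B(z,z_i)$ attached to a spectator --- are evaluated directly from $B=\tfrac{dx_1\,dx_2}{(x_1-x_2)^2}+(\text{holomorphic})$: the former, via~\eqref{eqn:R1} and the Airy expansion of $B$, gives the constants $\tfrac{m+1}{16}\tr(\E^m)+\tfrac12\tr(\E^{m+1}R_1)$, and the latter, after re-expanding through the genus-zero three-point correlators, produces the quadratic term $\tfrac{1}{2\hbar^2}\eta(\E^{m+1}s_0,s_0)$.

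Putting the pieces together, the coefficientwise identity extracted from~\eqref{eqn:TR-virasoro} is, order by order in $\hbar$ and monomial by monomial in $\{s_k^{\bar\beta}\}$, precisely $L_m\cA=0$, and since every term of $L_m\cA=0$ is accounted for the implication runs backwards as well, giving the equivalence; Lemma~\ref{lem:TR-res} then yields $L_m\cA=0$ for each $m\geq-1$. The step I expect to be the main obstacle is the identification in the second paragraph --- that multiplication by $x$ on the TR-differentials implements exactly the operator $D_{\E,\givz}$, with the correct $\tfrac32\givz$ shift and $R$-conjugation --- because for non-homogeneous spectral curves $D_{\E,\givz}$ is not simply $\E+\givz^2\pd_\givz+(\mu+\tfrac32)\givz$, and carrying the bookkeeping through the asymptotic Laplace transforms, the Airy coordinates, and the family $\{d\zeta_k^{\bar\beta}\}$ is delicate; the treatment of the local involution $\bar z$ and of the $\omega_{0,2}$-contributions producing the central and $s_0$-quadratic terms is a secondary technicality.
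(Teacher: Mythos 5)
Your proposal follows essentially the same route as the paper: expand the $\omega_{g,n}$ via~\eqref{eqn:wgn-zeta}, identify multiplication by $x^{m+1}$ at the critical points with $D_{\E,\givz}^{m+1}$ and the $y$-residue with the vacuum via~\eqref{eqn:TR-vacuum}, treat $\bar z$ through $\eta^\gamma(\bar z)=-\eta^\gamma(z)$, and evaluate the $\omega_{0,2}$-contributions by hand --- these are precisely the paper's identities~\eqref{eqn:TR-to-Virasoro-lhs} and \eqref{eqn:TR-to-Virasoro-rhs-1}--\eqref{eqn:TR-to-Virasoro-rhs-4}. The one bookkeeping slip is that the $y$-insertion on the left of~\eqref{eqn:TR-virasoro} accounts only for the $\vac$-shift in $\tilde s_k^{\bar\beta}=s_k^{\bar\beta}-\vac_{k-1}^{\bar\beta}$, while the unshifted $s_k^{\bar\beta}\,\pd/\pd s_l^{\bar\gamma}$ part of the linear term comes from the $B(z,z_i)\otimes\omega_{\mathrm{stable}}$ terms in $\sum^{\prime}$ (the paper's~\eqref{eqn:TR-to-Virasoro-rhs-4}), not from the left-hand side.
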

\begin{proof}
We first prove that
\beq\label{eqn:TR-to-Virasoro-lhs}
-\sum_{\gamma}\mathop{\Res}_{z_0=z^\gamma} x(z_0)^{m+1} \cdot y(z_0)\cdot \sum_{k,\beta} \bar e_\beta \bar\psi^{k} d\zeta_k^{\bar \beta}(z_0)
=D^{m+1}_{\E,\bar \psi}\, \vac(\bar\psi).
\eeq
This can be proved as follows.
Consider $x(z)=x^{\gamma}+\frac{1}{2}(\eta^\gamma)^2$, then by equations~\eqref{def:EO-R} and \eqref{def:EO-T},
$$\textstyle
y(z)=-\sum_{i\geq 0} \frac{(T_{i+1}^{\bar\gamma}-\delta_{i,0})}{(2i+1)!!}(\eta^{\gamma})^{2i+1}+{\text{even part}},\qquad
$$
and
\beq\label{eqn:local-dzeta}
\textstyle
d\zeta_k^{\bar \beta}(z)=-\sum_{j=0}^{k} (R_j)^{\bar\beta}_{\bar \gamma}
\frac{(2k-2j+1)!!}{(\eta^\gamma)^{2k-2j+2}}d\eta^\gamma+{\text{positive part}}.
\eeq
Here the ``even part" means power series of $(\eta^\gamma)^2$ and ``regular part" means power series of $\eta^\gamma$.
We have
\begin{align*}
 &\, \mathop{\Res}_{z_0=z^\gamma} \Big(\frac{(\eta^\gamma)^{2}}{2}\Big)^l \cdot y(z_0)\cdot d\zeta_k^{\bar \beta}(z_0)
=\sum_{i+j=k-l}\frac{(2i+2l+1)!!}{(2i+1)!!2^l}
(R_j)^{\bar\beta}_{\bar \gamma} \cdot (T_{i+1}^{\bar\gamma}-\delta_{i,0}).
\end{align*}
This gives
\begin{align*}
 &\, \sum_{\gamma}\mathop{\Res}_{z_0=z^\gamma} x(z_0)^{m+1} \cdot y(z_0)\cdot \sum_{k,\beta} \bar e_\beta \bar\psi^{k} d\zeta_k^{\bar \beta}(z_0)
=R(\bar\psi)\cdot (\E+\bar\psi^2\pd_{\bar\psi}+\tfrac{3}{2}\bar\psi)^{m+1} \bar\psi^{-1}(T(\bar\psi)-\bar{\bf 1}\bar\psi ).
\end{align*}
Equation~\eqref{eqn:TR-to-Virasoro-lhs} follows from $\vac(\givz)=-R(\givz)\, \givz^{-1}\, (T(\givz)-\bar{\bf 1}\givz)$.

By using the genus expansion of the ancestor Virasoro constraints 
$$\textstyle
(L_m\cA({\bf s};\hbar))/\cA({\bf s};\hbar)=\sum_{g\geq 0}\hbar^{2g-2}\mathscr L_{g,m}({\bf s}), 
$$
it is straightforward to see that Proposition~\ref{prop:TR-virasoro-ancestor} is equivalent to
$$
\<D^{m+1}_{\E,\bar \psi}\vac(\bar\psi)\>_{1,1}
= \tfrac{1}{2}\, \tr(\E^{m+1}\cdot R_1),
$$
$$\<D^{m+1}_{\E,\bar \psi}\vac(\bar\psi),
\bar e_{\beta_{1}}\bar\psi^{k_{1}}, \bar e_{\beta_{2}}\bar\psi^{k_{2}}\>_{0,3}  d\zeta_{k_1}^{\bar \beta_1}(z_1) d\zeta_{k_2}^{\bar \beta_2}(z_2)
= \delta_{k_1,0}\delta_{k_2,0}\sum_{\gamma}(x^\gamma)^{m+1} d\zeta_0^{\bar \gamma}(z_1)d\zeta_0^{\bar\gamma}(z_2),
$$
and for $2g-2+n>0$,
\begin{align*}
&\, \<D^{m+1}_{\E,\bar \psi}\vac(\bar\psi),\bar e_{\beta_{[n]}}\bar\psi^{k_{[n]}}\>_{g,n+1}\prod d\zeta_{k_i}^{\bar \beta_i}(z_i)
=\sum_i\<[D^{m+1}_{\E,\bar\psi}\bar e_{\beta_i} \bar\psi^{k_i-1}]_{+},\bar e_{\beta_{[n]\setminus\{i\}}}\bar\psi^{k_{[n]\setminus\{i\}}}
\>_{g,n}\prod d\zeta_{k_i}^{\bar \beta_i}(z_i) \\
&\, +\frac{1}{2}\sum_{l,\sigma}(-1)^{l+1}\bigg(\<\bar e_\sigma\bar\psi^l,[D^{m+1}_{\E,\bar\psi}\bar\psi^{-1}\bar e_\sigma \bar\psi^{-l-1}]_{+},\bar e_{\beta_{[n]}}\bar\psi^{k_{[n]}}\>_{g-1,n+2}
\\
&\, \qquad 
+\sum_{\substack{g_1+g_2=g\\ I\sqcup J=[n]}} \<\bar e_\sigma\bar\psi^l,\bar e_{\beta_I}\bar\psi^{k_{I}}\>_{g_1,|I|+1}
\<[D^{m+1}_{\E,\bar\psi}\bar\psi^{-1}\bar e_\sigma \bar\psi^{-l-1}]_{+}, \bar e_{\beta_J}\bar\psi^{k_{J}}\>_{g_2,|J|+1}
\bigg)\prod d\zeta_{k_i}^{\bar \beta_i}(z_i).
\end{align*}
These equations follow from the following equalities:
\beq\label{eqn:TR-to-Virasoro-rhs-1}
-\sum_{\gamma}\mathop{\Res}_{z=z^\gamma} \frac{x(z)^{m+1}}{dx(z)}\cdot 
B(z_1, z)B(\bar z,z_2) 
=\sum_{\gamma}(x^\gamma)^{m+1} d\zeta^{\bar \gamma}(z_1)d\zeta^{\bar\gamma}(z_2),
\eeq
\beq\label{eqn:TR-to-Virasoro-rhs-2}
-\sum_{\gamma}\mathop{\Res}_{z=z^\gamma} \frac{x(z)^{m+1}}{dx(z)}\cdot 
B(z,\bar z)=\sum_{\gamma}(R_1)^{\bar\gamma}_{\bar\gamma}\cdot (x^\gamma)^{m+1}
+\frac{m+1}{8}\sum_\gamma (x^\gamma)^m,
\eeq
and
\beq\label{eqn:TR-to-Virasoro-rhs-3}
-\sum_{\gamma}\mathop{\Res}_{z=z^\gamma} \frac{x(z)^{m+1}}{dx(z)}\cdot 
d\zeta_{l}^{\bar\sigma}(z)\cdot \sum_{k,\beta} \bar e_\beta \bar\psi^k d\zeta_{k}^{\bar\beta}(\bar z)  
=(-1)^{l+1}\, [D^{m+1}_{\E,\bar\psi}\bar\psi^{-1}\bar e_\sigma \bar\psi^{-l-1}]_{+},
\eeq
\beq\label{eqn:TR-to-Virasoro-rhs-4}
-\sum_{\gamma}\mathop{\Res}_{z=z^\gamma} \frac{x(z)^{m+1}}{dx(z)}\cdot 
B(z_1,z)\sum_{k,\beta} \bar e_{\beta}\bar\psi^{k} d\zeta_{k}^{\bar\beta}(\bar z) 
=\sum_{k,\beta}[D^{m+1}_{\E,\bar\psi}\bar e_\beta \bar\psi^{k-1}d\zeta_k^{\bar\beta}(z_1)]_{+}.
\eeq
Equation~\eqref{eqn:TR-to-Virasoro-rhs-1} follows from that $B(-, z)B(\bar z,-)$ is regular, and equation~\eqref{eqn:TR-to-Virasoro-rhs-2} follows from the expansion $B(z,\bar z)=-\frac{d\eta^{\gamma}d\eta^\gamma}{4(\eta^{\gamma})^2}-(R_1)^{\bar\gamma}_{\bar\gamma}\, d\eta^{\gamma}d\eta^\gamma+O(\eta^{\gamma}) \, d\eta^{\gamma}d\eta^\gamma$.
Similarly as above discussions, recall
$$\textstyle
d\zeta_{k}^{\bar\beta}(z)|_{\gamma}=-\sum_{j\geq 0} (R_{j})^{\bar \beta}_{\bar \gamma} \frac{(2k-2j+1)!!}{\eta^{2k-2j+2}}d\eta^\gamma +({\text{odd part}})\cdot d\eta^\gamma,\qquad k \in \mathbb Z,
$$
where we have used $(-2i-1)!!=\frac{(-1)^i}{(2i-1)!!}$.
We have
$$
\mathop{\Res}_{z=z^\gamma} \frac{((\eta^\gamma)^2/2)^{n}}{dx(z)}\cdot 
d\zeta_{l}^{\bar\sigma}(z)\cdot d\zeta_{k}^{\bar\beta}(\bar z)  
=-\frac{1}{2^n}\sum_{i+j=k+l+2-n}(2l-2i+2)!!(2k-2j+2)!!(R_i)^{\bar \sigma}_{\bar\gamma} (R_j)^{\bar \beta}_{\bar\gamma},
$$
and this proves equation~\eqref{eqn:TR-to-Virasoro-rhs-3}.
By using~\cite[Lemma A.3]{GJZ23}, on the left hand side of equation~\eqref{eqn:TR-to-Virasoro-rhs-4}, one can replace $B(z_1,z)$ by
$\sum_{k,\beta}(-1)^kd\zeta_{-k-1}^{\bar\beta}(z)d\zeta_k^{\bar\beta}(z_1)$.
Then by similar discussions as above, one gets equation~\eqref{eqn:TR-to-Virasoro-rhs-4}.
The Theorem is proved.
\end{proof}

\section{Descendent Virasoro constraints for topological recursion}
\label{sec:TR-descendent-virasoro}
In this section, we begin by recalling the definition of the TR descendent invariants following~\cite{GJZ23}. We then establish the descendent Virasoro constraints for the TR differentials and, consequently, for the TR descendent potentials in terms of the KP variables $p_k^i$. 

\subsection{TR descendents }
\label{subsec:TR-descendents}

Pick $\Lambda=(\lambda_1,\cdots,\lambda_\bm)$, such that $\lambda_i$ is the local coordinate near $b_i$ satisfying $\lambda_i(b_i)={\infty}$ for each $i\in [\bm]$.
For $(i_1,\cdots,i_n)\in [\bm]^{\times n}$, $(k_1,\cdots,k_n)\in \mathbb Z_{>0}^{\times n}$ and $\alpha_k^i=k\frac{\partial}{\partial p_k^i}$,  we define   the \emph{TR descendent invariants}	
$\<\alpha^{i_1}_{k_1},\cdots,\alpha^{i_n}_{k_n}\>^{\Lambda}_{g,n}$
by taking the   expansion  of the multi-differential forms ${\omega_{g,n}}$ at the boundary points. Namely,
for $2g-2+n\geq0$, near  $z_{1}=b_{i_{1}},\cdots, z_{n}=b_{i_{n}}$  we define
$\<-\>^{\Lambda}_{g,n}$ by
\beq\label{eqn:stable-omega-gn-boundary}
\omega_{g,n}(z_1,\cdots,z_n)
=\delta_{g,0}\delta_{n,2}\frac{\delta_{i_1,i_2}d\lambda_{i_1,1}d\lambda_{i_2,2}}{(\lambda_{i_1,1}-\lambda_{i_2,2})^2}
+\sum_{k_1,\cdots, k_n\geq 1}\<\alpha^{i_1}_{k_1},\cdots,\alpha^{i_n}_{k_n}\>_{g,n}^{\Lambda } \frac{d\lambda_{i_1,1}^{-k_1}\cdots d\lambda_{i_n,n}^{-k_n}}{k_1\cdots k_n}.
\eeq
For $(g,n)=(0,1)$, we have
\beq\label{eqn:w01-boundary}
y(z)dx(z)=\sum_{k\geq 0}v^i_{k} \lambda_i^{k-1}d\lambda_i +\sum_{k\geq 1}\<\alpha_k^i\>_{0,1}^{\Lambda}\frac{d\lambda_i^{-k}}{k}.
\eeq
For the cases $(g,n) = (0,0), (1,0)$, all the invariants $\<-\>^{\Lambda}_{g,n}$ are taken to be zero.
We define the generating series of TR descendents:
\beq\label{eqn:generating-series-m-KP-F-A-TR}
Z^{\Lambda}({\bf p};\hbar)=\exp\bigg(\sum_{g\geq 0, n\geq 0}\hbar^{2g-2}\sum_{1\leq i_1,\cdots,i_n\leq \bm \atop k_1,\cdots,k_n\geq 1}
\<\alpha^{i_1}_{k_1},\cdots,\alpha^{i_n}_{k_n}\>_{g,n}^{\Lambda} \frac{p^{i_1}_{k_1}\cdots p^{i_n}_{k_n}}{n!\cdot k_1\cdots k_n}\bigg).
\eeq

We have the following formula of the generating series:
\beq\label{eqn:DKP-ACohFT}
Z^{\Lambda}({\bf p};\hbar)=e^{\frac{1}{\hbar^2}J_{-}({\bf p})+\frac{1}{2\hbar^2}Q({\bf p},{\bf p})}\cdot\cA({\bf s(p)};\hbar),
\eeq
where $J_{-}({\bf p})$ and $Q({\bf p},{\bf p})$ are two functions defined by
$$
J_{-}({\bf p}):=\sum_{k>0;\, i\in [\bm]}\<\alpha_k^i\>_{0,1}^{\Lambda}\frac{p_k^i}{k},\qquad
Q({\bf p},{\bf p}):=\sum_{k,l>0;\, i,j\in [\bm]}\<\alpha^i_k,\alpha^j_{l}\>_{0,2}^{\Lambda}\frac{p^i_k}{k}\frac{p^j_{l}}{l},
$$
and ${\bf s(p)}$ is the coordinate transformation determined by the local expansion of $d\zeta_k^{\bar\beta}(z)$ at boundary points:
\beq\label{eqn:s-p}\textstyle
d\zeta_m^{\bar\beta}(z)|_{b_i}= \sum_{k} c^{m,\bar\beta}_{i,k} d\lambda_i^{-k}
\qquad\Longrightarrow \qquad
s_m^{\bar\beta}({\bf p})=\sum_{k,i}c^{m,\bar\beta}_{i,k}\, p^i_k.
\eeq

\subsection{TR descendent Virasoro constraints}
We consider the cases that the functions $x$, $y$ in spectral curve data $\cC$ are meromorphic on the Riemann surface $\Sigma$.
To prove the TR descendent Virasoro constraints, we first prove the following equivalent statement in terms of multi-differentials $\omega_{g,n}$, which we refer to as \emph{descendent Virasoro constraints for TR-differentials}.
\begin{proposition}[= Proposition~\ref{prop:TR-crit-boundary-intro}] 
\label{prop:TR-crit-boundary}
For fixed $m\geq -1$, if $x$, $y$ are meromorphic and satisfy that $x^{m+1}y$ has only poles at the boundary points, then for $2g-2+n+1>0$, we have
\begin{align}
 &\, -\sum_{i}\mathop{\Res}_{z_0=b_i} x(z_0)^{m+1}\cdot y(z_0)\cdot \omega_{g,n+1}(z_0,z_{[n]})
\nonumber \\
=&\,  \frac{1}{2}\sum_{i}\mathop{\Res}_{z=b_i} \frac{x(z)^{m+1}}{dx(z)}\cdot \bigg(\tilde\omega_{g-1,n+2}(z, z,z_{[n]})
+\sum^{\prime}_{\substack{g_1+g_2=g\\ I\sqcup J=[n]}} \omega_{g_1,|I|+1}(z,z_{I})\omega_{g_2,|J|+1}(z,z_J)\bigg).
\label{eqn:TR-virasoro-des}
\end{align}
where $\tilde\omega_{g,n}(z_{[n]})=\omega_{g,n}(z_{[n]})-\delta_{(g,n),(0,2)}\cdot\frac{dx(z_1)dx(z_2)}{(x(z_1)-x(z_2))^2}$.
\end{proposition}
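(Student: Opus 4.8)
The plan is to derive \eqref{eqn:TR-virasoro-des} from the ancestor identity \eqref{eqn:TR-virasoro} of Lemma~\ref{lem:TR-res-intro} by transporting residues across the compact surface $\Sigma$ with the residue theorem; the only extra inputs will be the local behaviour of the $\omega_{g,n}$ near the critical points (the linear and quadratic loop equations, standard consequences of \eqref{eqn:def-TR}) and the fact that $x$, $dx$, and hence $x^{m+1}/dx$, are invariant under the local involution $z\mapsto\bar z$.

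First I would dispose of the left-hand side. Since $2g-2+(n+1)>0$, the form $\omega_{g,n+1}$ is a stable TR differential, so by \eqref{eqn:wgn-zeta} it has poles in $z_0$ only at the critical points $z^\beta$; by hypothesis $x^{m+1}y$ is meromorphic, regular at the critical points, and has poles only at the boundary points. Hence $z_0\mapsto x(z_0)^{m+1}y(z_0)\,\omega_{g,n+1}(z_0,z_{[n]})$ has poles only at the $z^\beta$ and the $b_i$, and the residue theorem gives $-\sum_i\Res_{z_0=b_i}x^{m+1}y\,\omega_{g,n+1}=\sum_\beta\Res_{z_0=z^\beta}x^{m+1}y\,\omega_{g,n+1}$. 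Applying \eqref{eqn:TR-virasoro} to the right-hand sum, \eqref{eqn:TR-virasoro-des} reduces to
\begin{equation*}
\sum_\beta\Res_{z=z^\beta}\frac{x(z)^{m+1}}{dx(z)}\,\mathcal R_{g,n}(z,\bar z,z_{[n]})
=\sum_i\Res_{z=b_i}\frac{x(z)^{m+1}}{dx(z)}\,\tilde{\mathcal R}_{g,n}(z,z,z_{[n]}),
\end{equation*}
where $\mathcal R_{g,n}(z,\bar z,z_{[n]})=\omega_{g-1,n+2}(z,\bar z,z_{[n]})+\sum^{\prime}\omega_{g_1,|I|+1}(z,z_I)\omega_{g_2,|J|+1}(\bar z,z_J)$ as in the proof of Lemma~\ref{lem:TR-res}, and $\tilde{\mathcal R}_{g,n}(z,z,z_{[n]})=\tilde\omega_{g-1,n+2}(z,z,z_{[n]})+\sum^{\prime}\omega_{g_1,|I|+1}(z,z_I)\omega_{g_2,|J|+1}(z,z_J)$, which is now a globally meromorphic $1$-form in $z$.

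To establish this I would apply the residue theorem once more, to the global $1$-form $V(z)=\frac{x(z)^{m+1}}{dx(z)}\tilde{\mathcal R}_{g,n}(z,z,z_{[n]})$, whose poles lie at the critical points, the boundary points, and the insertions $z_1,\dots,z_n$ (the last coming from the factors $\omega_{0,2}(z,z_j)=B(z,z_j)$ in $\sum^{\prime}$). This turns the identity above into the purely local statement that $\sum_\beta\Res_{z=z^\beta}\frac{x^{m+1}}{dx}\bigl(\mathcal R_{g,n}(z,\bar z,z_{[n]})+\tilde{\mathcal R}_{g,n}(z,z,z_{[n]})\bigr)$ equals $-\sum_j\Res_{z=z_j}V$. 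The right-hand side I would evaluate by the Bergman-kernel identity $\Res_{z=z_j}f(z)\,\omega_{0,2}(z,z_j)=d_{z_j}f(z_j)$ used already in Lemma~\ref{lem:TR-res}, applied to $f(z)=\frac{x(z)^{m+1}}{dx(z)}\omega_{g,n}(z,z_{[n]\setminus j})$. For the left-hand side, since $x^{m+1}/dx$ is invariant under $z\mapsto\bar z$, only the part of each bracket that is invariant under this involution contributes to $\Res_{z=z^\beta}$; here I would use the linear loop equation (the sum of a stable $\omega$ with its involute is regular at $z^\beta$) to split the $\omega_{g_i,|I|+1}$ into their invariant and polar parts, the quadratic loop equation (the full combination $\omega_{g-1,n+2}(z,\bar z,z_{[n]})+\sum_{\mathrm{all}}\omega_{g_1}(z,z_I)\omega_{g_2}(\bar z,z_J)$ is regular at $z^\beta$) to control $\mathcal R_{g,n}$, and check that the contributions involving the $\omega_{0,2}$ factors reassemble precisely into the $d_{z_j}(\cdots)$ terms produced above; the diagonal subtraction in $\tilde\omega_{g-1,n+2}$ is exactly what makes the unstable $(g-1,n+2)=(0,2)$ term well defined on the diagonal (the $(g,n)=(1,0)$ case of the recursion). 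Collecting the two applications of the residue theorem then yields \eqref{eqn:TR-virasoro-des}.

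The hard part will be this last, local step. Because $\bar z$ is defined only in a neighbourhood of each critical point, there is no single global $1$-form mediating the transition between the residues of \eqref{eqn:TR-virasoro} and those of \eqref{eqn:TR-virasoro-des}, so the comparison must be performed neighbourhood by neighbourhood at each $z^\beta$. This forces a careful parity bookkeeping: decomposing each $\omega_{g_i,|I|+1}$ near $z^\beta$ into its part invariant under $z\mapsto\bar z$ (which is regular there by the linear loop equation) and its polar, anti-invariant part, verifying that products of two polar parts recombine --- via the quadratic loop equation --- into the insertion-point derivatives, and that every remaining, ``mixed'' term is anti-invariant and hence contributes nothing to $\Res_{z=z^\beta}$. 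The subtleties all sit in the $\omega_{0,2}$-terms, where the involute is only regular rather than exactly odd, and in keeping track of signs and of the diagonal regularizations.
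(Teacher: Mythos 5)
Your first half is exactly the paper's argument: since $\omega_{g,n+1}$ is stable, its poles in $z_0$ sit only at the critical points while $x^{m+1}y$ has poles only at the $b_i$, so the residue theorem converts the left-hand side of \eqref{eqn:TR-virasoro-des} into the left-hand side of \eqref{eqn:TR-virasoro}, and the problem reduces to comparing the critical-point residues of $\frac{x^{m+1}}{dx}\mathcal R_{g,n}(z,\bar z,z_{[n]})$ with the boundary residues of the involution-free integrand. The paper then performs a local parity computation at each $z^\beta$ (using \eqref{eqn:wgn-zeta}, \eqref{eqn:local-dzeta} and $\bar z(\eta^\gamma)=z(-\eta^\gamma)$) showing that $\sum_\beta\Res_{z=z^\beta}\frac{x^{m+1}}{dx}\mathcal R_{g,n}(z,\bar z,\cdot)$ equals \emph{minus} the same sum with $\bar z$ replaced by $z$, handles $(g,n)=(1,0)$ separately via the diagonal regularization of $B(z,\bar z)$, and then applies the residue theorem once more. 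Your outline is the same, and the linear/quadratic loop equations you invoke are the right tools for the local step.

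The gap is in your closing identity. You are right --- and more careful than the paper's ``by the residue formula'' step --- that $V(z)=\frac{x(z)^{m+1}}{dx(z)}\tilde{\mathcal R}_{g,n}(z,z,z_{[n]})$ also has double poles at the insertion points, with $\Res_{z=z_j}V=d_{z_j}\bigl[\tfrac{x(z_j)^{m+1}}{dx(z_j)}\omega_{g,n}(z_j,z_{[n]\setminus j})\bigr]$, generically nonzero. But you then require the critical-point residues of $\frac{x^{m+1}}{dx}\bigl(\mathcal R(z,\bar z,\cdot)+\tilde{\mathcal R}(z,z,\cdot)\bigr)$ to reproduce $-\sum_j\Res_{z_j}V$. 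They cannot: the very parity argument you plan to use (the polar parts of the $d\zeta_k$ are odd under $\eta\mapsto-\eta$, $x^{m+1}/dx$ is invariant, and the regular factors $\omega_{0,2}(\,\cdot\,,z_j)$ enter the residue only through their value at $\eta=0$) makes that sum of critical-point residues vanish identically, so your local statement would force $\sum_j\Res_{z_j}V=0$, which is false. Concretely, for the Airy curve with $m=-1$ and $(g,n+1)=(0,3)$ one finds $\sum_\beta\Res_{z=z^\beta}\frac{1}{dx}\bigl(\mathcal R(z,\bar z)+\mathcal R(z,z)\bigr)=0$ while $\Res_{z=z_1}V+\Res_{z=z_2}V=-\frac{dz_1dz_2}{z_1^2z_2^2}\neq 0$, and the two sides of \eqref{eqn:TR-virasoro-des}, computed with small contours around $b_i=\infty$ that exclude $z_1,z_2$, differ by exactly this amount. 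The resolution --- which is also the reading under which the Proposition feeds into the proof of Theorem~\ref{thm:TR-virasoro-descendent}, where the singular part of $\omega_{0,2}$ at the boundary resurfaces as the $\<\alpha^i_{-k},\alpha^j_l\>_{0,2}$ terms --- is that the $z_j$ are taken inside the coordinate discs around the boundary points, so that the contours defining $\Res_{z=b_i}$ enclose the nearby $z_j$ and absorb these residues. With that convention your argument closes and reduces to the paper's parity computation (plus its separate $(g,n)=(1,0)$ diagonal analysis); without it, the identity you propose to prove in the last step is not true, so the proof as written does not go through.
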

\begin{proof}
By assumption and the residue formula, the left-hand side of equation~\eqref{eqn:TR-virasoro-des} coincides with that of equation~\eqref{eqn:TR-virasoro}. It thus suffices to prove that their right-hand sides are equal.

For $2g - 2 + n > 0$, using the structure of $\omega_{g,n}$ (equation~\eqref{eqn:wgn-zeta}), the local expansion of $d\zeta_k^{\bar{\beta}}(z)$ at the critical point $z^\gamma$ (equation~\eqref{eqn:local-dzeta}), and the identity $\bar{z}(\eta^\gamma) = z(-\eta^\gamma)$, we see that the right-hand side of equation~\eqref{eqn:TR-virasoro} equals
$$
-\frac{1}{2}\sum_{\gamma}\mathop{\Res}_{z=z^\gamma} \frac{x(z)^{m+1}}{dx(z)}\cdot \bigg(\omega_{g-1,n+2}(z, z,z_{[n]})
+\sum^{\prime}_{\substack{g_1+g_2=g\\ I\sqcup J=[n]}} \omega_{g_1,|I|+1}(z,z_{I})\omega_{g_2,|J|+1}( z,z_J)\bigg).
$$
This further equals the right-hand side of~\eqref{eqn:TR-virasoro-des} by the residue formula.

For $(g,n)=(1,0)$, the right hand side of equation~\eqref{eqn:TR-virasoro} is given by
$$
\frac{1}{2}\sum_{\gamma}\mathop{\Res}_{z^{\gamma}}\frac{x(z)^{m+1}}{dx(z)}B(z,\bar z).
$$
Notice that near the point $z^\gamma$, one has
$$
\frac{d\eta^\gamma_1d\eta^\gamma_2}{(\eta^\gamma_1-\eta^\gamma_2)^2}-\frac{d(\eta^\gamma_1)^2d(\eta^\gamma_2)^2}{((\eta^\gamma_1)^2-(\eta^\gamma_2)^2)^2}
=\frac{d\eta^\gamma_1d\eta^\gamma_2}{(\eta^\gamma_1+\eta^\gamma_2)^2}.
$$
By comparing
$$
B(z,\bar z)|_{z^\gamma} =-\frac{d\eta^\gamma_1d\eta^\gamma_2}{(\eta^\gamma_1+\eta^\gamma_2)^2}-B^{\gamma,\gamma}_{0,0}d\eta^\gamma_1d\eta^\gamma_2+\cdots
$$
with
$$
\Big(\frac{dx(z_1)dx(z_2)}{(x(z_1)-x(z_2))^2}-B(z_1,z_2)\Big)\Big|_{(z^\gamma,z^\gamma)}
=-\frac{d\eta^\gamma_1d\eta^\gamma_2}{(\eta^\gamma_1+\eta^\gamma_2)^2}-B^{\gamma,\gamma}_{0,0}d\eta^\gamma_1d\eta^\gamma_2+\cdots.
$$
we have
$$
\frac{1}{2}\sum_{\gamma}\mathop{\Res}_{z=z^{\gamma}}\frac{x(z)^{m+1}}{dx(z)}B(z,\bar z)
=-\frac{1}{2}\sum_{\gamma}\mathop{\Res}_{z=z^{\gamma}}\frac{x(z)^{m+1}}{dx(z)}\tilde \omega_{0,2}(z,z).
$$
Notice that $\frac{x(z)^{m+1}}{dx(z)}\tilde \omega_{0,2}(z,z)$ is a global-defined meromorphic one form on $\Sigma$ and has only possible poles at the critical points (of $x(z)$) and boundary points, the residue formula gives us that 
$$
-\frac{1}{2}\sum_{\gamma}\mathop{\Res}_{z=z^{\gamma}}\frac{x(z)^{m+1}}{dx(z)}\tilde \omega_{0,2}(z,z)
=\frac{1}{2}\sum_{i}\mathop{\Res}_{z=b_i}\frac{x(z)^{m+1}}{dx(z)}\tilde \omega_{0,2}(z,z).
$$
The Proposition is proved.
\end{proof}

Now we take a special choice of the local coordinates $\Lambda$ of the boundaries as follows: for each $i=1,\cdots,\bm$, near the boundary $b_i$, $x$ can be viewed as a $r_i:1$ covering map,
we define the corresponding local coordinate $\lambda_i$, up to a factor $\xi_i$ such that $\xi_i^{r_i}=1$, by
$$
x(z)|_{b_i}=\lambda_i^{r_i}.
$$
We denote the corresponding TR descendent generating series by $Z({\bf p};\hbar)$.
For further use, we introduce the following notations $p_0^i=0$ and
$$
\tilde p_k^{i}=p_k^i-v_k^i,\qquad k\geq 0.
$$
We note that $\tilde p_0^i = -v_0^i$ is not necessarily zero.

\begin{theorem}[=Theorem~\ref{thm:TR-virasoro-descendent-intro}]\label{thm:TR-virasoro-descendent}
For fixed $m\geq -1$, if the meromorphic functions $x$ and $y$ satisfy that the function $x^{m+1}y$ has only poles at the boundary points, then the generating series $Z({\bf p};\hbar)$ satisfies the following equation:
$$
\cL_mZ({\bf p};\hbar)=0.
$$
Here the operator $\cL_m$ is defined by
\begin{align*}
\cL_m=&\, \frac{c_m}{2\hbar^2}+
\frac{\delta_{m,-1}}{2\hbar^2}\sum_{i}\frac{1}{r_i} \sum_{a=0}^{r_i}\tilde p^i_{a}\tilde p^i_{r_i-a}
+\frac{\delta_{m,0}}{2\hbar^2}\sum_{i}\frac{(v_0^i)^2}{r_i} 
+\delta_{m,0}\sum_{i}\frac{r_i^2-1}{24r_i} \\
&\, +\sum_{i}\frac{1}{r_i}\sum _{k\geq 0} \tilde p^i_k (r_im+k)\frac{\pd}{\pd p^i_{r_im+k}} 
+\frac{\hbar^2}{2}\sum_{i}\frac{1}{r_i}\sum_{k+l=r_im}kl\frac{\pd^2}{\pd p^i_k\pd p^i_l},
\end{align*}
where the summation $\sum_i$ is over the indices labeling the boundary points, and the constant $c_m$ is given by
$$
c_m=-\sum_{z^0:\,  x(z^0)=0} \mathop{\Res}_{z=z^0} x(z)^{m+1}y(z)\omega_{0,1}(z).
$$
\end{theorem}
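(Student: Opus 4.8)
The plan is to deduce the constraint $\cL_m Z({\bf p};\hbar)=0$ from the one-form level statement, Proposition~\ref{prop:TR-crit-boundary}, by a generating-series computation. Fix $m\geq -1$. I would multiply the identity of Proposition~\ref{prop:TR-crit-boundary} by $\hbar^{2g-2}\frac1{n!}\prod_{j=1}^{n}\frac{p^{i_j}_{k_j}}{k_j}$ and sum over $g\geq 0$, $n\geq 0$ and over all $(\vec i,\vec k)$ subject to $2g-2+n+1>0$. After this summation the left-hand side is $-\sum_i\Res_{z_0=b_i}x(z_0)^{m+1}y(z_0)\,W(z_0)$, where $W(z_0)=\sum_{g,n}\hbar^{2g-2}\frac1{n!}\sum_{\vec i,\vec k}\prod\frac{p^{i_j}_{k_j}}{k_j}\,\omega_{g,n+1}(z_0,z_{[n]})$ is the ``one free point'' generating one-form, while the right-hand side is the corresponding residue of $\frac{x^{m+1}}{dx}$ against the diagonal of the ``two free points'' version. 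Since $\omega_{g-1,n+2}$ and $\omega_{g_1}\omega_{g_2}$ each carry two fewer powers of $\hbar$ than $\omega_{g,n+1}$, the right-hand side enters with an overall factor $\hbar^2$, matching the shape of $\cL_m$.

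The main computational inputs are the local expansions near each $b_i$ in the special coordinate $\lambda_i$ with $x=\lambda_i^{r_i}$. From \eqref{eqn:w01-boundary} one has $x^{m+1}y=\frac1{r_i}\bigl(\sum_{k\geq 0}v_k^i\lambda_i^{r_im+k}-\sum_{k\geq 1}\<\alpha^i_k\>_{0,1}\lambda_i^{r_im-k}\bigr)$ and $\frac{x^{m+1}}{dx}=\frac{\lambda_i^{r_im+1}}{r_i\,d\lambda_i}$, while $\Res_{z_0=b_i}$ picks out minus the coefficient of $\lambda_0^{-1}d\lambda_0$. The key dictionary, immediate from the definition of $Z$ together with $S_n$-symmetry and \eqref{eqn:stable-omega-gn-boundary}, is that the coefficient of $d\lambda_0^{-b}/b$ in $W(z_0)$ (expanded near $b_i$) equals $b\,\partial_{p^i_b}\log Z$; pairing $x^{m+1}y$ against $W$ thus turns the left-hand side into a first-order operator with an index shift by $r_im$, while the diagonal of the two-point object produces $kl\,\partial^2_{p^i_kp^i_l}\log Z$, and the $\omega_{g_1}\omega_{g_2}$ sum supplies exactly the product $(\partial\log Z)(\partial\log Z)$ needed to assemble $\partial^2\log Z+(\partial\log Z)^2=\frac1Z\partial^2Z$, yielding $\frac{\hbar^2}{2}\sum_i\frac1{r_i}\sum_{k+l=r_im}kl\,\partial^2_{p^i_kp^i_l}$. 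A short separate computation at coincident points near $b_i$ gives $\bigl(\frac{d\lambda_1 d\lambda_2}{(\lambda_1-\lambda_2)^2}-\frac{dx_1 dx_2}{(x_1-x_2)^2}\bigr)\big|_{z_1=z_2=z}=\frac{r_i^2-1}{12\lambda_i^2}(d\lambda_i)^2$, so the regularized $(0,2)$-diagonal $\tilde\omega_{0,2}(z,z)$ contributes, after the residue of $\frac{x^{m+1}}{dx}$ and the factor $\tfrac12$, the constant $\delta_{m,0}\sum_i\frac{r_i^2-1}{24r_i}$.

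Next I would handle the unstable cases $(g,n+1)=(0,1)$ and $(0,2)$, which are excluded from Proposition~\ref{prop:TR-crit-boundary} but are present in $\log Z$ (as the terms $\frac1{\hbar^2}J_-$ and $\frac1{2\hbar^2}Q$ of \eqref{eqn:DKP-ACohFT}); here one replaces the proposition by the global residue theorem on $\Sigma$. Since $x,y$ are meromorphic and $x^{m+1}y$ has poles only at the boundary, the one-form $x^{m+1}y\,\omega_{0,1}=x^{m+1}y^2\,dx$ has poles only at the $b_i$ and at the zeros $z^0$ of $x$, hence $\sum_i\Res_{z=b_i}x^{m+1}y\,\omega_{0,1}=-\sum_{z^0}\Res_{z=z^0}x^{m+1}y\,\omega_{0,1}=c_m$; and $x^{m+1}y\,\omega_{0,2}(z_0,z_1)$ has poles in $z_0$ only at the $b_i$ and at $z_0=z_1$, so $-\sum_i\Res_{z_0=b_i}x^{m+1}y\,\omega_{0,2}(z_0,z_1)=d(x^{m+1}y)(z_1)$ by the Bergman-kernel property used in the proof of Lemma~\ref{lem:TR-res}. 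One also has to track that near $b_i$ the one-form $\omega_{0,1}$ has a polar part in $\lambda_i$ (the $v_k^i$) pairing with the polar part of $x^{m+1}y$; collecting these pairings together with $c_m$ and with the expansion of $d(x^{m+1}y)$ near the $b_i$, and using the shift $p^i_k\mapsto\tilde p^i_k=p^i_k-v^i_k$, produces $\frac{c_m}{2\hbar^2}$, the quadratic term $\frac{\delta_{m,-1}}{2\hbar^2}\sum_i\frac1{r_i}\sum_{a=0}^{r_i}\tilde p^i_a\tilde p^i_{r_i-a}$ (for $m=-1$ the would-be lowering part of the first-order operator degenerates into this quadratic term, as is standard for $L_{-1}$-type operators), and $\frac{\delta_{m,0}}{2\hbar^2}\sum_i\frac{(v_0^i)^2}{r_i}$. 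Assembling everything gives precisely $\cL_m Z=0$.

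The genuinely delicate step is the last one: verifying that the numerous boundary contributions — the residue of $x^{m+1}y$ against the polar part of $\omega_{0,1}$, the residues at the zeros of $x$ furnished by the global residue theorem, the Bergman-kernel diagonal residue producing $d(x^{m+1}y)$, and the $(0,2)$ regularization — combine with the correct signs and $\hbar$-powers to give exactly the coefficient $\tfrac12$ in $\frac{c_m}{2\hbar^2}$ and $\frac1{2\hbar^2}$ in the quadratic terms, the constant $\frac{r_i^2-1}{24r_i}$, and the dilaton-type shift $p^i_k\mapsto\tilde p^i_k$ throughout. In particular one must check that the raising/lowering part $\sum_i\frac1{r_i}\sum_k p^i_k(r_im+k)\partial_{p^i_{r_im+k}}$ of the first-order operator emerges correctly — this is where the interplay between $d(x^{m+1}y)$, the coefficients $\<\alpha^i_k\>_{0,1}$ and the $J_-$-shift of \eqref{eqn:DKP-ACohFT} is most intricate — and that the $m=-1$ degeneration is consistent. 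No individual step is conceptually hard, but the collection of cancellations is substantial.
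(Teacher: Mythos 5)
Your proposal is correct and follows essentially the same route as the paper: both deduce the theorem from Proposition~\ref{prop:TR-crit-boundary} by expanding at the boundary points in the coordinates with $x=\lambda_i^{r_i}$, treat the unstable $(0,1)$ and $(0,2)$ contributions and the constant $c_m$ via the global residue theorem, and obtain $\delta_{m,0}\sum_i\frac{r_i^2-1}{24r_i}$ from the regularized diagonal of $\omega_{0,2}$. The only difference is organizational — the paper conjugates $\cL_m$ by $e^{J_-({\bf p})/\hbar^2}$ and matches correlator identities genus by genus, whereas you sum the differential identities into a generating one-form first — but the bookkeeping is the same.
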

\begin{proof}
We identify $\alpha_k^i$ with $k\frac{\pd}{\pd p_k^i}$, and for $k\geq 0$, we define $\<\alpha^i_{-k},-\>_{g,n}=0$ unless 
$$
\<\alpha_{-k}^{i}\>_{0,1}:=-v_k^i,\qquad
\<\alpha_{-k}^i,\alpha_{l}^i\>_{0,2}=\<\alpha_{l}^i,\alpha_{-k}^i\>_{0,2}=\delta_{k,l}\cdot k.
$$
By exchanging the operator $\cL_m$ with $e^{\frac{1}{\hbar^2}J_{-}({\bf p})}$, we have
$$
\cL_m e^{\frac{1}{\hbar^2}J_{-}({\bf p})}
=e^{\frac{1}{\hbar^2}J_{-}({\bf p})}\widetilde\cL_{m},
$$
where
\begin{align*}
\widetilde\cL_{m}=&\, \frac{{\rm const}}{2\hbar^2}+\delta_{m,0}\sum_{i}\frac{r_i^2-1}{24r_i}
+\frac{1}{\hbar^2}\sum_i\frac{1}{r_i}\sum_{k\geq 1}\<\alpha_{r_im+k}^i\>_{0,1}\, p^i_{k}
+\sum_{i}\frac{1}{r_i}\sum_{k\geq 1} \<\alpha_{r_im-k}^{i}\>_{0,1}\cdot \alpha_k^i\\
&\, +\frac{\delta_{m,-1}}{2\hbar^2}\sum_{i}\frac{1}{r_i} \sum_{a=0}^{r_i} p^i_{a} p^i_{r_i-a}
+\sum_{i}\frac{1}{r_i}\sum _{k\geq 0}  p^i_k \alpha^i_{k+r_im}
+\frac{\hbar^2}{2}\sum_{i}\frac{1}{r_i}\sum_{k+l=r_im}\alpha^i_k\alpha^i_l.
\end{align*} 
Here the term ``const" is given by
$$
c_m+\delta_{m,-1}\sum_i\frac{v_{a}^{i}v_{r_i-a}^i}{r_i}\
+\delta_{m,0}\sum_i\frac{(v_0^i)^2}{r_i}
-2\sum_i\frac{1}{r_i} v_k^i\<\alpha^i_{k+r_im}\>_{0,1}
+\sum_i\frac{1}{r_i}\sum_{k+l=r_im}\<\alpha^i_k\>\<\alpha^i_l\>.
$$
Notice that $\frac{dx(z)}{x(x)}=r_i \frac{d\lambda_i}{\lambda_{i}}$, we have
$x(z)y(z)|_{b_i}=\frac{1}{r_i}\sum_{k\geq 0} v_k^i \lambda_i^{k} -\frac{1}{r_i}\sum_{k\geq 1}\<\alpha^i_k\>_{0,1}\lambda_i^{-k}$,
and
$$
x(z)^2y(z)^2|_{b_i}=
\frac{1}{r_i^2}\sum_{k,l\geq 0} v_k^i v^i_l \lambda_i^{k+l}
-\frac{2}{r_i^2}\sum_{k\geq 1,l\geq 0}v_l^i\<\alpha^i_k\>_{0,1}\lambda_i^{l-k}
+\frac{1}{r_i^2}\sum_{k,l\geq 1} \<\alpha^i_k\>_{0,1}\<\alpha^i_l\>_{0,1} \lambda_i^{-l-k}.
$$
Then it is easy to see 
$$
{\rm const}=
-\bigg(\sum_{z^0:\,  x(z^0)=0}\mathop{\Res}_{z=z^0}+\sum_{z^0:\,  x(z^0)=\infty}\mathop{\Res}_{z=z^0}\bigg) x(z)^{m+1}y(z)\omega_{0,1}(z)=0.
$$

Now we prove
$\widetilde \cL_m \big(e^{-\frac{1}{\hbar^2}J_{-}({\bf p})}Z({\bf p};\hbar)\big)=0$.
By taking genus expansion, and by noticing
$$
\delta_{m,-1}\delta_{g,0}\delta_{n,2}\delta_{j_1,j_2}\delta_{k_1+k_2=r_{j_a}} \frac{k_1k_2}{r_{j_1}}
=\sum_i\frac{1}{r_i}\sum_{k+l=r_i, k,l\geq 0}\<\alpha^i_{-k},\alpha^{j_1}_{k_1}\>_{0,2} \<\alpha^i_{-l},\alpha^{j_2}_{k_2}\>_{0,2},
$$
and  
$$
\frac{k_a}{r_{j_a}}\<\alpha^{j_a}_{k_a+r_{j_a}m},\alpha^{j_{[n]\setminus \{a\}}}_{k_{[n]\setminus \{a\}}}\>_{g,n}
=\sum_{i}\frac{1}{r_i}\sum_{k\geq 0}\<\alpha_{-k}^{i},\alpha_{k_a}^{j_a}\>_{0,2}\<\alpha^{i}_{k+r_{i}m},\alpha^{j_{[n]\setminus \{a\}}}_{k_{[n]\setminus \{a\}}}\>_{g,n},
$$
the equation is equivalent to the following statement:
for all $(g,n)$ satisfying $2g - 2 + n + 1 > 0$, and for each collection ${(k_a, j_a)}_{a=1}^n$, the expression
\begin{align}
&\, \delta_{m,0}\delta_{g,1}\delta_{n,0}\sum_{i}\frac{r_i^2-1}{24r_i}
+\sum_{i}\frac{1}{r_i}\sum_{k\in\mathbb Z} \<\alpha_{r_im-k}^{i}\>_{0,1}\cdot \<\alpha_k^i,\alpha^{j_{[n]}}_{k_{[n]}}\>_{g,1+n}\nonumber\\
&\, +\frac{1}{2}\sum_{i}\frac{1}{r_i}\sum_{k+l=r_im}
\bigg(\<\alpha^i_k,\alpha^i_l,\alpha^{j_{[n]}}_{k_{[n]}}\>_{g-1,n+2}
+\sum^{\prime}_{\substack{g_1+g_2=g\\ I\sqcup J=[n]}}
\<\alpha^i_k,\alpha^{j_{I}}_{k_{I}}\>_{g_1,|I|+1}\<\alpha^i_l,\alpha^{j_{J}}_{k_{J}}\>_{g_2,|J|+1}\bigg)
\label{eqn:TR-virasoro-des-corr}
\end{align}
equals zero.
Here, in the term $\<\alpha^i_k,\alpha^i_l,\alpha^{j_{[n]}}_{k_{[n]}}\>_{g-1,n+2}$, the subscripts  $k$ and $l$ denote positive integers.
For $(g,n) \neq (1,0)$, by multiplying expression~\eqref{eqn:TR-virasoro-des-corr} by $\prod_a d\lambda_{j_a}^{-k_a}$ and summing over all $k_a \geq 1$, it is straightforward to see that the vanishing of expression~\eqref{eqn:TR-virasoro-des-corr} is equivalent to equation~\eqref{eqn:TR-virasoro-des} holding at the boundary points $(b_{j_1},\cdots,b_{j_n})$.
This is thus ensured by Proposition~\ref{prop:TR-crit-boundary}.

It remains to prove the case with $(g,n)=(1,0)$, for which case the equation is
\beq\label{eqn:virasoro-01}
\delta_{m,0}\sum_{i}\frac{r_i^2-1}{24r_i}
+\sum_{i}\frac{1}{r_i}\sum_{k\geq 1} \<\alpha_{r_im-k}^{i}\>_{0,1}\cdot \<\alpha_k^i\>_{1,1}
+\frac{1}{2}\sum_{i}\frac{1}{r_i}\sum_{k+l=r_im;\, k,l\geq 1}
\<\alpha^i_k,\alpha^i_l\>_{0,2}
=0.
\eeq
Notice that 
$$
\sum_{i}\frac{1}{r_i}\sum_{k\geq 1} \<\alpha_{r_im-k}^{i}\>_{0,1}\cdot \<\alpha_k^i\>_{1,1}
=-\sum_i\mathop{\Res}_{z=b_i}x(z)^{m+1}y(z)\omega_{1,1}(z),
$$ 
and by Proposition~\ref{prop:TR-crit-boundary}, this further equals
\beq\label{eqn:res-tilde-omega02}
\frac{1}{2}\sum_{i}\mathop{\Res}_{z=b_i}\frac{x(z)^{m+1}}{dx(z)}\tilde\omega_{0,2}(z,z).
\eeq
Near each boundary $b_i$, we use local coordinate $X_i=\lambda_{i}^{-1}$, then $x(z)=X_i^{-r_i}$, and we have
$$
\tilde \omega_{0,2}(z,z)=\bigg(\frac{dX_{i}dX'_{i}}{(X_{i}-X'_{i})^2}-\frac{dX_i^{-r_i}d(X'_i)^{-r_i}}{(X_i^{-r_i}-(X'_i)^{-r_i})^2}\bigg)\bigg|_{X'_i=X_i}+\sum_{k,l\geq 1}\<\alpha^i_k,\alpha^i_l\>_{0,2}\cdot X_i^{k+l}\cdot\frac{dX_i}{X_i}\cdot\frac{dX_i}{X_i}.
$$
By multiplying by $\frac{x(z)^{m+1}}{dx(z)} = -\frac{1}{r_i} \cdot \frac{1}{X_i^{r_i m}} \cdot \frac{X_i}{dX_i}$ and then taking the residue at $X_i = 0$,
the regular part of $\tilde\omega_{0,2}(z,z)$ in equation~\eqref{eqn:res-tilde-omega02} contributes
$$
-\frac{1}{2}\sum_{i}\frac{1}{r_i}\sum_{k+l=r_im;\, k,l\geq 1}
\<\alpha^i_k,\alpha^i_l\>_{0,2}.
$$
To compute the contribution of the singular part of $\tilde \omega_{0,2}(z,z)$, we need to compute
$$
\frac{1}{2}\mathop{\Res}_{X_i=0}\Big(\frac{x(z)^{m+1}}{dx(z)}
\frac{(\sum_{a+b=r}X_i^a(X'_i)^b)^2-r_i^2X_i^{r-1}(X'_i)^{r-1}}{(\sum_{a+b=r}X_i^a(X'_i)^b)^2(X_i-X'_i)^2}dX_idX'_i\Big|_{X'_i=X_i}\Big).
$$
Notice
$$
\lim_{X'_i=X_i}\frac{(\sum_{a+b=r}X_i^a(X'_i)^b)^2-r_i^2X_i^{r-1}(X'_i)^{r-1}}{(\sum_{a+b=r}X_i^a(X'_i)^b)^2(X_i-X'_i)^2}
=\frac{r^2-1}{12X_i^2},
$$
we see the residue gives
$$
-\frac{1}{2}\mathop{\Res}_{X_i=0} \frac{r_i^2-1}{12X_i^2}\cdot \frac{X_i}{r_i X_i^{mr_i}}\cdot dX_i=-\delta_{m,0}\frac{r_i^2-1}{24\, r_i}.
$$
This proves equation~\eqref{eqn:virasoro-01}.
The proof is finished.
\end{proof}

\section{Virasoro constraints for non-perturbative topological recursion}
\label{sec:NP-TR-virasoro}
In this section, we recall the non-perturbative generating series~\cite{GJZ23} (see also \cite{EM11,BE15,ABDKS25}) and prove the Virasoro constraints for the non-perturbative descendent potentials.

\subsection{Non-perturbative generating series}
The non-perturbative invariants come from the non-vanished $B$-cycle periods of the multi-differentials $\omega_{g,n}$.
We first introduce the object $\omega_{g,n_1}^{(n_2)}$, an $n_1$-differential valued $n_2$-tensor, whose $(j_1,\cdots,j_{n_2})$-component  is given by
$$
\omega_{g,n_1,j_1,\cdots,j_{n_2}}^{(n_2)}(z_{[n_1]}):=\oint_{z_{n_1+1}\in \cyB_{j_1}}\!  \cdots \oint_{z_{n_1+n_2}\in \cyB_{j_{n_2}}} \omega_{g,n_1+n_2}(z_1,\cdots,z_{n_1+n_2}).
$$
Recall the structure of $\omega_{g,n}$ (equation~\eqref{eqn:wgn-zeta}), for $2g-2+n_1+n_2>0$, we have the following structure of $\omega_{g,n_1}^{(n_2)}$:
$$
\omega_{g,n_1}^{(n_2)}(z_{[n_1]})=\sum_{\substack{k_1,\cdots,k_{n_1}\geqslant 0\\ \beta_1,\cdots,\beta_{n_2}\in [N]}}
\<\bar e_{\beta_1}\bar\psi^{k_1},\cdots,\bar e_{\beta_{n_1}}\bar\psi^{k_{n_1}}\>^{(n_2),\, \Omega}_{g,n_1}
d\zeta^{\bar\beta_1}_{k_1}(z_1)\cdots d\zeta^{\bar\beta_{n_1}}_{k_{n_1}}(z_{n_1}).
$$
Similarly, taking a local coordinate system $\Lambda = (\lambda_1, \dots, \lambda_{\bm})$ near the boundary points, we obtain the expansion of $\omega_{g,n_1}^{(n_2)}$ around $z_1=b_{i_1},\cdots,z_{n_1}=b_{i_{n_1}}$:
$$
\omega_{g,n_1}^{(n_2)}(z_{[n_1]})=\sum_{k_1,\cdots, k_{n_1}\geq 1}\<\alpha^{i_1}_{k_1},\cdots,\alpha^{i_{n_1}}_{k_{n_1}}\>_{g,n_1}^{(n_2), \, \Lambda}
\cdot \frac{d\lambda_{i_1,1}^{-k_1}\cdots d\lambda_{i_{n_1},n_1}^{-k_{n_1}}}{k_1\cdots k_{n_1}}.
$$
Here the terms $\<-\>_{g,n_1}^{(n_2),\Omega}$ and $\<-\>_{g,n_1}^{(n_2),\Lambda}$ are tensor-valued invariants, whose components are denoted by $\<-\>_{g,n_1,j_1,\cdots,j_{n_2}}^{(n_2),\Omega}$ and $\<-\>_{g,n_1,j_1,\cdots,j_{n_2}}^{(n_2),\Lambda}$, respectively.
For the cases $\omega_{0,n_1}^{(n_2)}$, $n_1+n_2=2$, the 2-form $\omega_{0,2}^{(0)}=\omega_{0,2}$ is expanded by equation~\eqref{eqn:stable-omega-gn-boundary} ($(g,n)=(0,2)$ case),
the $1$-form $\omega^{(1)}_{0,1,j}(z)$ near the boundary point $z=b_i$ has expansion
$$
\omega^{(1)}_{0,1,j}(z)=\oint_{z'\in\cyB_j}B(z,z')=2\pi {\bf i}\, du_j(z)
=\sum_{k\geq 1}\<\alpha_k^{i}\>_{0,1,j}^{(1),\Lambda}\frac{d\lambda_{i}^{-k}}{k},
$$
and the $0$-form $\omega_{0,0,i,j}^{(2)}$ is just a constant given by
$$
\omega_{0,0,i,j}^{(2)}=2\pi {\bf i}\oint_{z\in \cyB_j}du_i(z)=2\pi {\bf i}\, \tau_{ij}=\<\>_{0,0,i,j}^{(2),\Lambda}.
$$

Given the non-perturbative invariants, we use the theta function $\theta[\cmn](w|\tau')$ to encode all $B$-cycle integrals and define the corresponding TR non-perturbative generating series.
\begin{definition}
	Given the spectral curve $\cC=(\Sigma,x, y)$,
	the non-perturbative total ancestor potential $\cA_{\mu,\nu,\tau'}^{\rm NP}({\bf s};\hbar;w)$ is defined by
	$$
	\cA^{\rm NP}_{\mu,\nu,\tau'}({\bf s};\hbar;w):=
	\exp\bigg(\sum_{2g-2+n_1+n_2>0 } \hbar^{2g-2+n_2}\frac{\<{\bf s}(\bar\psi),\cdots,{\bf s}(\bar\psi)\>_{g,n_1}^{(n_2),\Omega} \cdot \nabla_w^{\otimes n_2}}{n_1!n_2!}\bigg)\theta[\cmn](w|\tau'),
	$$
	and the non-perturbative TR descendent potential $Z_{\mu,\nu,\tau'}^{\Lambda,\rm NP}({\bf p};\hbar;w)$ is defined by
	$$
	Z^{\Lambda,\rm NP}_{\mu,\nu,\tau'}({\bf p};\hbar;w):=
	\exp\bigg(\sum_{g,n_1,n_2\geq 0} \hbar^{2g-2+n_2}\frac{\<{\bf p}(\alpha),\cdots,{\bf p}(\alpha)\>_{g,n_1}^{(n_2), \Lambda} \cdot \nabla_w^{\otimes n_2}}{n_1!n_2!}\bigg)\theta[\cmn](w|\tau').
	$$
	Here ${\bf s}(\bar\psi)=\sum_{k\geq 0}\sum_{a=1}^{N}s_k^{\bar \beta}\bar e_\beta\bar\psi^k$, ${\bf p}(\alpha)=\sum_{k\geq 1}\sum_{i=1}^{\bm}\frac{1}{k}p_k^i \alpha_k^i $,
	 $\nabla_{w}=\frac{1}{2\pi {\bf i}}(\pd_{w_1},\cdots,\pd_{w_{\frak g}})$ and
	 $$\<-\>_{g,n_1}^{(n_2),\star}\cdot \nabla_{w}^{\otimes n_2}
	 =\frac{1}{(2\pi {\bf i})^{n_2}}\sum_{j_1,\cdots,j_{n_2}=1}^{\frak g}\<-\>_{g,n_1,j_1,\cdots,j_{n_2}}^{(n_2),*}\pd_{w_{j_1}}\cdots\pd_{w_{j_{n_2}}},
	 $$
where $\star=\Omega, \Lambda$.
\end{definition}
We note that one cannot take $\tau'=0$ in $\cA^{\rm NP}_{\mu,\nu,\tau'}({\bf s};\hbar;w)$, as this would lead to a divergent result in the theta function.
However, one can take $\tau'=0$  in $Z^{\Lambda,\rm NP}_{\mu,\nu,\tau'}({\bf p};\hbar;w)$, as explained in~\cite{GJZ23},
because the operator $e^{\frac{1}{2}\omega_{0,0}^{(2)}\nabla_{w}^{\otimes 2}}$ shifted $\tau'$ in $\theta[\cmn](w|\tau')$ by $\tau$:
 $$
 e^{\frac{1}{2}\omega_{0,0}^{(2)}\nabla_{w}^{\otimes 2}}\big(\theta[\cmn](w|\tau')\big)=\theta[\cmn](w|\tau'+\tau).
 $$
We define
$$
Z^{\Lambda,\rm NP}_{\mu,\nu}({\bf p};\hbar;w):=Z^{\Lambda,\rm NP}_{\mu,\nu,\tau'}({\bf p};\hbar;w)|_{\tau'=0}.
$$
Furthermore, we have the following non-perturbative analogy of the formula~\eqref{eqn:DKP-ACohFT}:
$$
Z^{\Lambda,\rm NP}_{\mu,\nu}({\bf p};\hbar;w):=e^{\frac{1}{2\hbar^2}J_{-}({\bf p})+\frac{1}{2\hbar^2}Q({\bf p,p})}\cdot e^{\frac{1}{\hbar}\cdot 2\pi {\bf i}\, \widehat{u(z)}\cdot \nabla_{w}}\big(\cA_{\mu,\nu,\tau}^{\rm NP}( {\bf s(p)};\hbar;w)\big),
$$
where $\widehat{u(z)}=\frac{1}{2\pi {\bf i}}\sum_{k,i}\<\alpha^i_{k}\>_{0,1}^{(1),\Lambda}\, \frac{p^i_k}{k}$ and ${\bf s(p)}$ is given by~\eqref{eqn:s-p}.

\subsection{Non-perturbative TR Virasoro constraints}
Now we consider the Virasoro constraints for non-perturbative generating series.

For the ancestor side, introduce a sequence of vector fields $\sv_j$, $j=1,\cdots,\frak g$:
\beq\label{def:filling-fraction}
\sv_j=\sum_{\beta}\bar e_{\beta}\cdot \oint_{z\in \cyB_j}d\zeta^{\bar\beta}(z)
=-2\pi {\bf i}\sum_{\beta}\frac{du_j}{dy}\Big|_{z=z^\beta}\cdot e_{\beta}.
\eeq
then it is shown in~\cite[\S 4]{GJZ23} that
$$
\cA^{\rm NP}_{\mu,\nu,\tau}({\bf s};\hbar;w)=e^{\hbar\varphi\cdot \nabla_w}(\cA({\bf s};\hbar)\cdot \theta[\cmn](w|\tau)),
$$
where $\varphi\cdot\nabla_w=\frac{1}{2\pi {\bf i}}\sum_j\varphi_j\nabla_{w_j}$.
By this formula, it is easy to see
$$
 L^{\rm NP}_m\cA^{\rm NP}_{\mu,\nu,\tau}({\bf s};\hbar;w)=0,
$$
where
$$
 L^{\rm NP}_m= L_m|_{s_0+\varphi\cdot \nabla_w}
$$

For descendent side, similar as the perturbative side, we consider cases satisfying $x$ and $y$ are meromorphic functions, and we take a special choice of local coordinate $\lambda_i$ by $x(z)=\lambda_i^{r_i}$.
For such case, we denote the corresponding generating series  $Z^{\Lambda,\rm NP}_{\mu,\nu}({\bf p};\hbar;w)$ by $Z^{\rm NP}_{\mu,\nu}({\bf p};\hbar;w)$.

\begin{theorem}[=Theorem~\ref{thm:NP-TR-virasoro-descendent-intro}]
For $m\geq -1$,  if the meromorphic functions $x$ and $y$ satisfy that the function $x^{m+1}y$ has only poles at the boundary points, then the non-perturbative generating series $Z^{\rm NP}_{\mu,\nu}({\bf p};\hbar;w)$ satisfies the following equation
$$
\cL_mZ^{\rm NP}_{\mu,\nu}({\bf p};\hbar;w)=0,
$$
where the operator $\cL_m$ is the same one as given in Theorem~\ref{thm:TR-virasoro-descendent}.
\end{theorem}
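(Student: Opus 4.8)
The plan is to run the proof of Theorem~\ref{thm:TR-virasoro-descendent} with the multi-differentials $\omega_{g,n}$ replaced throughout by their $\cyB$-cycle-period decorations $\omega_{g,n_1}^{(n_2)}$. Recall that in the perturbative case Theorem~\ref{thm:TR-virasoro-descendent} was deduced from Proposition~\ref{prop:TR-crit-boundary} by conjugating $\cL_m$ past the prefactor of~\eqref{eqn:DKP-ACohFT}, taking the genus expansion, and reducing the statement to the vanishing of the expression~\eqref{eqn:TR-virasoro-des-corr} for all admissible $(g,n)$ together with the exceptional identity~\eqref{eqn:virasoro-01}. The non-perturbative series is related to $\cA^{\rm NP}_{\mu,\nu,\tau}$ by the same type of formula, the only new ingredient in the prefactor being the exponential $e^{\frac{1}{\hbar}\cdot 2\pi{\bf i}\,\widehat{u(z)}\cdot\nabla_w}$, so I would first conjugate $\cL_m$ past the whole prefactor. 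Its perturbative part is treated exactly as in Theorem~\ref{thm:TR-virasoro-descendent} (it produces the shifts $p^i_k\mapsto\tilde p^i_k$ and the $\omega_{0,1}$, $\omega_{0,2}$ boundary data, and the constant it generates still vanishes). Conjugation past $e^{\frac{1}{\hbar}\cdot 2\pi{\bf i}\,\widehat{u(z)}\cdot\nabla_w}$ is new: since $\widehat{u(z)}$ is linear in ${\bf p}$ with coefficients proportional to the periods $\<\alpha_k^i\>_{0,1}^{(1),\Lambda}$ of $2\pi{\bf i}\,du_j$, the first- and second-order ${\bf p}$-derivative parts of $\cL_m$ acting on this exponential generate additional terms carrying one or two factors $\partial_{w_j}$, which will be matched below against $\omega_{0,1}^{(1)}$-contributions in the decorated identity. (One could alternatively start from the ancestor constraint $L_m^{\rm NP}\cA^{\rm NP}_{\mu,\nu,\tau}=0$, but since there is in general no invertible matrix relating TR descendents and ancestors, the route through Proposition~\ref{prop:TR-crit-boundary} is the natural one.)

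\textbf{The decorated Virasoro identity for TR-differentials.} The geometric input is the $\cyB$-cycle-period version of Proposition~\ref{prop:TR-crit-boundary}. Since $x$, $y$ are meromorphic and $x^{m+1}y$ has poles only at the boundary points, the one-form $x(z_0)^{m+1}y(z_0)\,\omega_{g,n+1+n_2}(z_0,z_{[n]},z_{n+1},\dots,z_{n+n_2})$ has, as a function of each of $z_{n+1},\dots,z_{n+n_2}$, poles only at the critical points, and the same holds for the right-hand side of~\eqref{eqn:TR-virasoro-des}; one may therefore choose representatives of $\cyB_{j_1},\dots,\cyB_{j_{n_2}}$ disjoint from fixed neighbourhoods of the boundary points, so that the integrations $\oint_{\cyB_{j_b}}$ commute with the residues $\mathop{\Res}_{z_0=b_i}$ and $\mathop{\Res}_{z=b_i}$. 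Applying these integrations to~\eqref{eqn:TR-virasoro-des}, taken with $n+n_2$ free legs of which the last $n_2$ are the integrated ones, yields
\[
-\sum_{i}\mathop{\Res}_{z_0=b_i} x(z_0)^{m+1}y(z_0)\,\omega_{g,n+1}^{(n_2)}(z_0,z_{[n]})
=\frac{1}{2}\sum_{i}\mathop{\Res}_{z=b_i}\frac{x(z)^{m+1}}{dx(z)}\,\mathcal R_{g,n}^{(n_2)}(z,z,z_{[n]}),
\]
where $\mathcal R_{g,n}^{(n_2)}(z,z,z_{[n]})=\tilde\omega_{g-1,n+2}^{(n_2)}(z,z,z_{[n]})+\sum^{\prime}\omega_{g_1,|I|+1}^{(n_2')}(z,z_I)\,\omega_{g_2,|J|+1}^{(n_2'')}(z,z_J)$, the sum now running over stable splittings $g_1+g_2=g$, $I\sqcup J=[n]$ and, in addition, over all ordered ways of distributing the $\cyB$-labels $j_1,\dots,j_{n_2}$ into the two factors. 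The subtraction defining $\tilde\omega_{g,n_1}^{(n_2)}$ is the same $\delta_{(g,n_1,n_2),(0,2,0)}$-correction as before, and no new term appears: that correction is relevant only when the genus-reduced object is $\omega_{0,2}$ restricted to the diagonal, which, since $z_0$ is never integrated along a $\cyB$-cycle, occurs only in the purely perturbative situation --- that is, the $(g,n)=(1,0)$, $n_2=0$ case already treated in the proof of Theorem~\ref{thm:TR-virasoro-descendent}. In particular the constant $c_m$ and the anomaly $\delta_{m,0}\sum_i\frac{r_i^2-1}{24 r_i}$ are unchanged, so $\cL_m$ is exactly the operator of Theorem~\ref{thm:TR-virasoro-descendent}.

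\textbf{Matching and the main obstacle.} It then remains to expand $\cL_m Z^{\rm NP}_{\mu,\nu}=0$ in $\hbar$ and in ${\bf p}$ and to match it, term by term, with the expansion of the displayed identity at the boundary points $b_{j_1},\dots,b_{j_n}$. In its combinatorial structure this bookkeeping is identical to that of Theorem~\ref{thm:TR-virasoro-descendent}, the only addition being the bundle of $\cyB$-cycle indices carried by the operators $\nabla_{w_{j_1}}\cdots\nabla_{w_{j_{n_2}}}$; the point to verify --- and the principal obstacle --- is that these operators are inert under $\cL_m$, because $\cL_m$ contains no $w$-derivative. Granting this, the quadratic term $\tfrac{\hbar^2}{2}\sum_i\tfrac1{r_i}\sum_{k+l=r_im}kl\,\partial_{p^i_k}\partial_{p^i_l}$ of $\cL_m$ reproduces the full splitting sum in $\mathcal R_{g,n}^{(n_2)}$, including the terms in which one or both factors is the unstable piece $\omega_{0,1}^{(1)}=2\pi{\bf i}\,du_j$ --- these being precisely the $\partial_{w_j}$- and $\partial_{w_j}\partial_{w_{j'}}$-corrections produced in the first step, together with the term $\tfrac12\omega_{0,0}^{(2)}\nabla_w^{\otimes 2}$ that implements the $\tau'=0\mapsto\tau$ shift of the theta characteristic; the linear term $\sum_i\tfrac1{r_i}\sum_{k\ge0}\tilde p^i_k(r_im+k)\partial_{p^i_{r_im+k}}$ reproduces the genus-reduction term $\tilde\omega_{g-1,n+2}^{(n_2)}$ and the remaining splittings; and the ${\bf p}$-independent part reduces to the $\cyB$-decorated form of~\eqref{eqn:virasoro-01}, whose anomaly is the one identified in the second step. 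Carrying out these (routine, if lengthy) verifications completes the argument, with $\cL_m Z^{\rm NP}_{\mu,\nu}({\bf p};\hbar;w)=0$ following from Proposition~\ref{prop:TR-crit-boundary}.
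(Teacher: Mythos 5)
Your proposal follows essentially the same route as the paper, which disposes of this theorem in three lines by observing that the argument of Theorem~\ref{thm:TR-virasoro-descendent} goes through verbatim once the correlators $\<-\>_{g,n}$ are replaced by their non-perturbative counterparts $\<-\>^{(n_2)}_{g,n_1}$ (the paper in fact works with $Z^{\rm NP}_{\mu,\nu,\tau'}$ for arbitrary $\tau'$, which sidesteps the $\tau'\mapsto\tau'+\tau$ shift you track explicitly). Your write-up supplies the details the paper omits --- the commutation of the $\cyB$-cycle integrations with the boundary residues, the inertness of $\nabla_w$ under $\cL_m$, and the matching of the $\omega_{0,1}^{(1)}$ unstable pieces against the conjugation through $e^{\frac{1}{\hbar}2\pi{\bf i}\,\widehat{u(z)}\cdot\nabla_w}$ --- and these are exactly the verifications the paper's ``we omit the details'' implicitly relies on.
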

\begin{proof}
Clearly, we just need to prove $\cL_mZ^{\rm NP}_{\mu,\nu,\tau'}({\bf p};\hbar;w)=0$ for arbitrary $\tau'$. 
In fact, the proof of this equation is parallel to the perturbative version: 
just change the correlators $\<-\>_{g,n}$ in the proof of Theorem~\ref{thm:TR-virasoro-descendent} to the non-perturbative ones. 
We omit the details here.
\end{proof}

\section{Examples}
\label{sec:example}
Recall that in the first paper~\cite{GZ25}, we constructed formal descendent generating series for $S$- and $\dvac$-calibrated homogeneous CohFTs with vacuum, and proved the geometric descendent Virasoro conjecture in the semisimple case.
In this section, we present several examples of topological recursion whose underlying CohFTs are homogeneous.
We provide explicit formulae for the TR descendent Virasoro constraints and the geometric Virasoro constraints, elucidating   their relationship.
\subsection{Airy curve}
The basic example is the Airy curve:
$$
\cC^{\rm Ai}=(\mathbb P^1,\quad x=\tfrac{1}{2}z^2,\quad y=z).
$$
The corresponding CohFT is the trivial CohFT, and the corresponding geometric total descendent potential $\cD({\bf t};\hbar)$ is the generating series of the intersection theory on the Deligne--Mumford moduli space $\Mbar_{g,n}$. The corresponding TR descendent potential $Z({\bf p};\hbar)$ with respect to the local coordinate $\lambda=\sqrt{2x(z)}=z$ near $z=\infty$ (we slightly modify the general setting of $\lambda$ by a factor $\sqrt{2}$) is known as the Witten--Kontsevich tau-function.
One has $\cD({\bf t(p)};\hbar)=Z({\bf p};\hbar)$ via the coordinate transformation $t_k=(2k-1)!! p_{2k+1}$, $k\geq 0$.

Let $\tilde t_k=t_k-\delta_{k,1}$ and $\tilde p_{k}=p_k-\delta_{k,3}$, the geometric and TR Virasoro constraints are given as follows:
$$
L_m\cD({\bf t};\hbar)=\cL_mZ({\bf p};\hbar)=0,\qquad m\geq -1,
$$
where the geometric Virasoro operators $L_m$ have formulae:
$$
L_m=\frac{\delta_{m,-1}}{2\hbar^2}(t_0)^2+\frac{\delta_{m,0}}{16}
+\sum_{k\geq 0}\frac{(2k+2m+1)!!}{2^{m+1}(2k-1)!!}\tilde t_k\frac{\pd}{\pd t_{k+m}}+\frac{\hbar^2}{2}\sum_{k+l=m-1}\frac{(2k+1)!!(2l+1)!!}{2^{m+1}}\frac{\pd^2}{\pd t_k\pd t_l}.
$$
and the TR Virasoro operators $\cL_m$ have formulae:
\begin{align*}
\cL_m=&\, \frac{\delta_{m,-1}}{4\hbar^2}(p_1)^2+\frac{\delta_{m,0}}{16}
+\sum_{k\geq 0}\frac{2k+2m+1}{2^{m+1}}\tilde p_{2k+1}\frac{\pd }{\pd p_{2k+2m+1}}\\ 
&\, +\frac{\hbar^2}{2}\sum_{k+l=m-1}\frac{(2k+1)(2l+1)}{2^{m+1}}\frac{\pd^2}{\pd p_{2k+1}\pd p_{2l+1}}.
\end{align*}
For this case, the equivalence of the topological recursion and the Virasoro constraints was proved by Zhou~\cite{Zhou13}, we refer the reader to that paper for more details.

\subsection{Deformed $r$-Bessel curve}
\label{sec:geo-negative-rspin}
We consider the following $1$-parameter family of spectral curves, called the $\epsilon$-deformed $r$-Bessel curve:
$$
\cC^{r,\epsilon}=\big(\, \mathbb P^1,\quad x(z)=-z^r+\epsilon z, \quad y(z)=\tfrac{\sqrt{-r}}{z}\, \big).
$$	
According to~\cite{CGG22} (see also~\cite{GJZ23}), up to some constant terms, the CohFT associated with $\cC^{r,\epsilon}$ coincides with the CohFT of the deformed negative $r$-spin theory, introduced by Norbury~\cite{Nor17} for $r=2$ without deformation ($\epsilon=0$) and generalized by Chidambaram, Garcia-Failde and Giacchetto in~\cite{CGG22} for arbitrary integer $r\geq 2$ with deformation parameter $\epsilon$.

The geometric descendent generating series $\cD^{r,\epsilon}({\bf t};\hbar)$ for the deformed negative $r$-spin theory requires the knowledge of the $S$-matrix and $\dvac$-vector (equivalently, the $J$-function), which beyond the information of the spectral curve and the topological recursion.
The formula of $\cD^{r,\epsilon}({\bf t};\hbar)$ is derived in~\cite[Theorem 6.8]{GJZ23} by determining the $S$-matrix and $J$-function via the twisted theory (see~\cite[Appendix B.2]{GJZ23}). 
By using these explicit computations, in~\cite[Proposition 4.2]{GZ25}, the authors proved the following geometric descendent Virasoro constraints for the deformed negative $r$-spin theory:
	$$
	L^{r,\epsilon}_{m}\cD^{r,\epsilon}(\mathbf t;\hbar)=0, \qquad  m\geq 0.
	$$
Here the operators $L^{r,\epsilon}_{m}$, $m\geq 0$, have the following explicit formulae:
\begin{align*}
L^{r,\epsilon}_m=&\, \delta_{m,0}\frac{(r-1)\epsilon^2}{2\, \hbar^2}+\delta_{m,0}\frac{r^2-1}{24\, r}
+\sum_{k\geq 0}\sum_{a=1}^{r-1}\frac{\Gamma(m+1+k+\frac{a}{r})}{\Gamma(k+\frac{a}{r})}\tilde t_k^{a}\frac{\pd }{\pd t_{k+m}^{a}}\\
&\, +\frac{\hbar^2}{2}\sum_{k=1}^{m}(-1)^k\sum_{a=1}^{r-1}
\frac{\Gamma(m+1-k+\frac{a}{r})}{\Gamma(-k+\frac{a}{r})}
\frac{\pd^2 }{\pd t_{k-1}^{r-a}\pd t_{m-k}^{a}},
\end{align*}
where $\tilde t_k^a=t_{k}^{a}-\delta_{k,0}\delta_{a,r-1}\cdot r$.

For the TR side, we note firstly that there is only one boundary point $z=\infty$.
We choose the local coordinate $\lambda$ near the boundary point by 
$$
x(z)=\lambda^{r},
$$
that satisfies $\lim_{z=\infty} \lambda/ z=\xi=e^{-\pi i/r}$, and we denote the corresponding generating series of TR descendents by $Z^{r,\epsilon}({\bf p};\hbar)$.
By using $\lambda^{-1}=\xi^{-1}\cdot z^{-1}\cdot  (1-\epsilon\cdot z^{-(r-1)})^{-1/r}$, it is easy to see
$$
ydx=r\sqrt{-r}\xi \lambda^{r-2}d\lambda-\sqrt{-r}\epsilon \lambda^{-1}d\lambda+dO(\lambda^{-(r-1)}).
$$
Hence $v_{r-1}=r\sqrt{-r}\xi$, and $v_0=-\sqrt{-r}\epsilon$.
Furthermore, $c_m=\delta_{m,0}c_0$, where
$$
c_0=r\mathop{\Res}_{z=0}\, (-z^r+\epsilon z)\frac{1}{z^2} (-rz^{r-1}+\epsilon)dz
=r\epsilon^2.
$$
Notice that $y(z)$ has pole at $z=0$, and for $m\geq 0$, $x(z)^{m+1}y(z)$ has single pole at boundary point $z=\infty$,
by Theorem~\ref{thm:TR-virasoro-descendent-intro}, the generating series $Z^{r,\epsilon}({\bf p};\hbar)$ satisfies the Virasoro constraints
$$
\cL^{r,\epsilon}_m Z^{r,\epsilon}({\bf p};\hbar)=0, \qquad m\geq 0.
$$
Here the operators $\cL^{r,\epsilon}_{m}$, $m\geq 0$, are given by the following explicit formulae:
\begin{align*}
\cL^{r,\epsilon}_m=&\, \delta_{m,0}\frac{(r-1)\epsilon^2}{2\, \hbar^2}+\delta_{m,0}\frac{r^2-1}{24\, r}
+\frac{1}{r}\sum_{k\geq 0}\sum_{a=1}^{r-1}\tilde p_{rk+a}(rm+rk+a)\frac{\pd }{\pd p_{rm+rk+a}}\\
&\, +\frac{\hbar^2}{2r}\sum_{k=1}^{m}\sum_{a=1}^{r-1} (rm-rk+a)(rk-a)
\frac{\pd^2 }{\pd p_{rk-a}\pd p_{rm-rk+a}},
\end{align*}
where $\tilde p_k=p_k-\delta_{k,r-1}\cdot r\sqrt{-r}\xi$.

Due to the result of~\cite{GJZ23}, we have~\footnote{We note that the definition of $Z^{r,\epsilon}({\bf p};\hbar)$ in the present paper differs slightly from that in~\cite{GJZ23}: here, the definition includes the invariants $\<-\>_{0,1}$, whereas these invariants were excluded in~\cite{GJZ23} since they do not affect the $r$KdV integrability.
In addition, the variables $p_k$ used here differ from those in~\cite{GJZ23} by a factor of $\xi^k$, that is, $p_k^{\rm GJZ}=\xi^k p_k$.}
$$
e^{\sum_{g\geq 2}\frac{\hbar^{2g-2}}{(\sqrt{-r}\epsilon)^{2g-2}}\frac{B_{2g}}{2g(2g-2)}}\cdot\cD^{r,\epsilon}({\bf t(p)};\hbar)
= Z^{r,\epsilon}({\bf p};\hbar),
$$
where $B_{2g}$ is the Bernoulli number and the coordinate transformation is given by
$$
t_n^i({\bf p})=-\frac{\xi^i}{\sqrt{-r}}\frac{\Gamma(\frac{i}{r}+n)}{\Gamma(\frac{i}{r})}p_{i+rn}.
$$
We see for the deformed $r$-Bessel curve, the geometric descendent Virasoro constraints and the TR descendent Virasoro constraints are equivalent.

\subsection{Spectral curve of  extended Grothendieck's dessins d'enfants}
Consider spectral curve
$$\textstyle
\cC^{\rm eGdd}=(\mathbb P^1,\quad x(z)=z+\frac{uv}{z}+u+v,\quad y(z)=\frac{1}{2}-\frac{u}{2}\frac{1}{z+u}-\frac{v}{2}\frac{1}{z+v}).
$$
It is straightforward to compute the $R$-matrix (by equation~\eqref{def:EO-R}) and the vacuum vector (by equation~\eqref{eqn:TR-vacuum}) of the CohFT associated with $\cC^{\rm eGdd}$, and to prove this CohFT is exactly the CohFT $\Omega^{\rm eGdd}$ for the extended Grothendieck's dessins d'enfants that introduced in~\cite[\S 5]{GZ25} (the $R$-matrix and the vacuum vector here correspond to those in~\cite{GZ25}, evaluated at $\tau=0$, and $u,v=\epsilon_{1,2}$ in~\cite{GZ25}).
Therefore, we call $\cC^{\rm eGdd}$ the spectral curve of  extended Grothendieck's dessins d'enfants.

We use the $S$-matrix and $\dvac$-vector introduced in~\cite{GZ25} (evaluated at $\tau=0$) to define the formal geometric total descendent potential $\cD^{\rm eGdd}({\bf t};\hbar)$, then we have the geometric descendent Virasoro constraints~\cite[Proposition 5.1]{GZ25}:
$$
	L^{\rm eGdd}_m\cD^{\rm eGdd}({\bf t};\hbar)=0,\qquad m\geq 0.
	$$
Here the operators $L^{\rm eGdd}_{m}$, $m\geq 0$, have the following explicit formulae:
\begin{align*}
L^{\rm eGdd}_m=&\, \delta_{m,0}\cdot\frac{(t_0^1+u)(t_0^1+v)}{\hbar^2} 
+\sum_{k\geq 0}\frac{(k+m+1)!}{k!}\tilde t_{k}^{0}\frac{\pd }{\pd t_{k+m}^0}
+\hbar^2\sum_{k+l=m}k!\, l!\, \frac{\pd^2}{\pd t_{k-1}^0\pd t_{l-1}^0}\\
&\, +2m!\tilde t_0^1\frac{\pd}{\pd t_{m-1}^0}
+2\sum_{k\geq 1}\sum_{i=0}^{m}\frac{(k+m)!}{(k-1)!(k+i)}t_k^1\frac{\pd}{\pd t_{k+m-1}^0}
+\sum_{k\geq 1}\frac{(k+m)!}{(k-1)!} t_{k}^{1}\frac{\pd }{\pd t_{k+m}^1},
\end{align*}
where $\tilde t_k^a=t_k^a-\delta_{k,0}\delta_{a,0}+\delta_{k,0}\delta_{a,1}\frac{u+v}{2}$.

For the TR side, there are two boundary points for $\cC^{\rm eGdd}$: $z=0$ and $z=\infty$.
We define the corresponding local coordinates, $\lambda_0$ and $\lambda_{\infty}$, by
$$
x(z)|_{0}=\lambda_0,\qquad
x(z)|_{\infty}=\lambda_{\infty},
$$
and denote the corresponding generating series of TR descendents by $Z^{\rm eGdd}({\bf p};\hbar)$.
By direct computations, one can see
\begin{align*}
ydx|_{0}=&\, -\frac{1}{2}d\lambda_0+\frac{u+v}{2} \lambda_0^{-1}d\lambda_0 + dO(\lambda^{-1}_0),\\
ydx|_{\infty}=&\, \frac{1}{2}d\lambda_{\infty}-\frac{u+v}{2}\lambda_{\infty}^{-1}d\lambda_{\infty}+dO(\lambda^{-1}_{\infty}).
\end{align*}
Hence $v^{0}_{1}=-\frac{1}{2}$, $v^{0}_{0}=\frac{u+v}{2}$ and $v^{\infty}_{1}=\frac{1}{2}$, $v^{\infty}_{0}=-\frac{u+v}{2}$.
Furthermore, $c_m=\delta_{m,0}c_0$, where
$$
c_0=-\Big(\mathop{\Res}_{z=-u}+\mathop{\Res}_{z=-v}\Big)\, x(z) y(z)^2dx(z)=-\frac{(u-v)^2}{2}.
$$
Notice that $y(z)$ has poles at $z=-u$ and $z=-v$, and for $m\geq 0$, $x(z)^{m+1}y(z)$ has single poles at boundary points $z=0$ and $z=\infty$,
by Theorem~\ref{thm:TR-virasoro-descendent-intro}, the generating series $Z^{\rm eGdd}({\bf p};\hbar)$ satisfies the Virasoro constraints:
$$
\cL_m^{\rm eGdd}Z^{\rm eGdd}({\bf p};\hbar)=0,\qquad  m\geq 0.
$$
Here the operators $\cL^{\rm eGdd}_{m}$, $m\geq 0$, have the following explicit formulae:
\begin{align*}
\cL_m^{\rm eGdd}=&\, \delta_{m,0}\frac{uv}{\hbar^2} +\frac{m(u+v)}{2}\Big(\frac{\pd}{\pd p_{m}^{\infty}}-\frac{\pd}{\pd p_{m}^{0}}\Big)
+\sum_{i=0,\infty}\sum _{k\geq 1} \tilde p^i_k (m+k)\frac{\pd}{\pd p^i_{m+k}} \\
&\, +\frac{\hbar^2}{2}\sum_{i=0, \infty}\sum_{k+l=m}kl\frac{\pd^2}{\pd p^i_k\pd p^i_l},
\end{align*}
where $\tilde p_k^{0}=p_k^{0}+\delta_{k,1}\frac{1}{2}$, $\tilde p_k^{\infty}=p_k^{\infty}-\delta_{k,1}\frac{1}{2}$.

By applying a method similar to that used for the deformed $r$-Bessel curve in~\cite{GJZ23}, one can show that $\cD^{\rm eGdd}({\bf t(p)};\hbar)=Z^{\rm eGdd}({\bf p};\hbar)$, where the coordinate transformation ${\bf t(p)}$ is given by
$$
t_k^0=k!(p_{k+1}^{\infty}-p^0_{k+1}),\quad 
t_k^1=0,\qquad k\geq 0.
$$
It is easy to see that via this coordinate transformation, the operators $\cL^{\rm eGdd}_m$ coincide with the operators $L_{m}^{\rm eGdd}$ for $m\geq 0$.
In other words, the TR descendent Virasoro constraints coincide with the geometric ones.

\begin{remark}
It is clear that, for this case, the generating series $Z^{\rm eGdd}({\bf p};\hbar)$ does not contains the full geometric descendent information of the $S$- and $\dvac$-calibrated CohFT $\Omega^{\rm eGdd}$.
In fact, $Z^{\rm eGdd}({\bf p};\hbar)=Z^{\rm Gdd}({\bf p'(p)};\hbar)$, where $Z^{\rm Gdd}({\bf p'};\hbar)$ is the generating series of dessin counting, and the coordinate transformation is given by $p'_{k+1}=p_{k+1}^{\infty}-p^0_{k+1}$.
The equivalence of this topological recursion and the Virasoro constraints for $Z^{\rm Gdd}$ is first proved by Kazarian--Zograf~\cite{KZ15}.
\end{remark}

\subsection{Weierstrass curve}
\label{subsec:weierstrass}
We consider the following Weierstrass curve: 
$$
\cC=\big(\Sigma=\mathbb C/(\mathbb Z+\tau\mathbb Z),\quad
x(z)=\wp(z;\elltau),\quad
y(z)=\wp'(z;\elltau)
\big),
$$
where $\wp$ is the Weierstrass P-function.
It is well known that $x,y$ satisfy 
$$
y^2=4x^3-g_2(\tau)x-g_3(\tau).
$$
where $g_2=60\, G_4$, $g_3=140\, G_6$, and $G_{2k}$, $k\geq 1$, is the weight-$2k$ holomorphic Eisenstein series.
We have critical points $z^1=\frac{1}{2}$, $z^2=\frac{\tau}{2}$, and $z^3=\frac{1+\tau}{2}$, the corresponding critical values are
$$
x^\beta=u^\beta(\tau), \qquad \beta=1,2, 3,
$$
where $u^\beta(\tau)=\wp(z^\beta;\tau)$.
For the canonical basis $e_\beta$, we have
$$
\Delta_{\beta}^{-1}=\eta(e_\beta,e_\beta)=\tfrac{ y'(z^\beta)^2}{ x''(z^\beta)}
=6 (u^\beta)^2-\tfrac{g_2}{2}.
$$

For the geometric side, we introduce basis $\{\phi_i\}_{i=0}^{2}$, called the flat basis, as follows:
\begin{align*}
\phi_0=&\,\textstyle -2\pi {\bf i}  \sum_\beta \Delta_\beta e_\beta, \\
\phi_1=&\,\textstyle -\sqrt{2} \sum_{\beta}(u^\beta+G_2)\Delta_\beta e_\beta,\\
\phi_2=&\,\textstyle -\frac{1}{\pi {\bf i}}\sum_\beta \big((u^\beta+G_2)(u^\beta-2G_2)-3\pi i \pd_{\tau}G_2\big)\Delta_\beta e_\beta.
\end{align*}
By using relations:
$$ 
u^1+u^2+u^3=0,\quad
u^1u^2+u^2u^3+u^1u^3=-g_2/4,\quad
u^1u^2u^3=g_3/4,
$$
and by direct computations, we have
$$
\eta(\phi_i,\phi_j)=\delta_{i+j=2}.
$$
Under the canonical basis $\{e_\beta\}$, $\E=\diag\{x^1,x^2,x^3\}$, we have under the flat basis $\{\phi_i\}$,
$$
\E=\left(\begin{array}{ccc}
	- G_2 & \frac{3 }{\sqrt{2}} \pd_\tau G_2 &  \pd_{\tau}^2G_2 \\
	\sqrt{2} \pi {\bf i} &  2\, G_2  & \frac{3 }{\sqrt{2}}\pd_{\tau} G_2 \\
	0  & \sqrt{2} \pi {\bf i} & - G_2 
\end{array}\right).
$$
Here we have used the Ramanujan identities of Eisenstein series,
$$\textstyle
\pd_{\elltau}G_2=\frac{5G_4-G_2^2}{2\pi {\bf i}},\qquad
\pd_{\elltau}G_4=\frac{7G_6-2G_2G_4}{\pi{\bf i}},\qquad
\pd_{\elltau}G_6=\frac{30G_4^2-21G_2G_6}{7\pi{\bf i}}.
$$
We define operator $\mu$ by $\mu(\phi_i)=(\tfrac{i}{2}-\tfrac{1}{2})\phi_i$.
\begin{proposition}
The CohFT $\Omega$ associated with the Weierstrass curve is homogeneous with conformal dimension $\delta=-2$ and is reconstructed by
$$
\Omega=R\cdot T\cdot (\oplus_{\beta=1}^{3} \Omega^{\rm KW}_{\beta}),
$$
where the $R$-matrix is determined by the homogeneity condition
$$
[R_{m+1},\E]=(m+\mu)R_m,
$$
and the $T$-vector is determined by
$$
T(z)=z(\bar {\bf 1}-R^{-1}(z)\vac(z)).
$$
Here  $\vac(z)={\bf 1}+\vac_1z$ is the vacuum vector, with the non-flat unit
$$
{\bf 1}=e_1+e_2+e_3=-3\,  \pd_\tau G_2\, \phi_0 -3\sqrt{2}\, G_2\, \phi_1-6\pi {\bf i} \,  \phi_2,
$$
and $\vac_1=-\tfrac{3\sqrt{2}}{2}\phi_1$.
\end{proposition}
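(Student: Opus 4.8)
The plan is to verify, for the CohFT $\Omega$ attached to the Weierstrass curve $\cC$ (which, like every CohFT coming from a spectral curve, is automatically semisimple with vacuum and of the form $R\cdot T\cdot(\oplus_{\beta=1}^{3}\Omega^{\mathrm{KW}}_\beta)$ by \S\ref{subsec:TR-CohFT-ancestors}), the two homogeneity conditions recalled above: the $R$-matrix condition \eqref{eqn:hom-R} and the vacuum condition $(\mu+\tfrac\delta2+m)\vac_m=-E_0*\vac_{m+1}$. The formula $T(z)=z(\bar{\mathbf 1}-R^{-1}(z)\vac(z))$ is then nothing but \eqref{eqn:T-vacuum}, and it is meaningful here because the vacuum will be shown to be linear in $z$. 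By Lemma~\ref{lem:hom-TR}, \eqref{eqn:hom-R} is equivalent to the curve-homogeneity equation \eqref{eqn:Dzeta-Psi-d}, so the first main task is to prove \eqref{eqn:Dzeta-Psi-d}; the vacuum condition, once the vacuum is computed from \eqref{eqn:TR-vacuum}, will simultaneously force $\delta=-2$.

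\textbf{Explicit data on the torus.} I would start by making everything explicit in terms of $\wp$. The normalized Bergman kernel on $\Sigma=\mathbb C/(\mathbb Z+\elltau\mathbb Z)$ is the classical $B(z_1,z_2)=\bigl(\wp(z_1-z_2;\elltau)+G_2(\elltau)\bigr)dz_1\,dz_2$. Evaluating the residue in the definition of $d\zeta^{\bar\beta}$, using $x-x^\beta=\tfrac12(\eta^\beta)^2$, $\wp'(z^\beta)=0$ and the Airy expansion $\eta^\beta(z)=\Delta_\beta^{-1/2}(z-z^\beta)+\cdots$, gives
$$d\zeta_0^{\bar\beta}(z)=-\Delta_\beta^{1/2}\bigl(\wp(z-z^\beta;\elltau)+G_2\bigr)\,dz,$$
and the $\wp$-addition theorem together with $\wp'(z^\beta)=0$, $u^1+u^2+u^3=0$ and $\Delta_\beta^{-1}=6(u^\beta)^2-\tfrac{g_2}{2}=2\prod_{\gamma\neq\beta}(u^\beta-u^\gamma)$ simplifies this to $\wp(z-z^\beta)=u^\beta+\tfrac1{2\Delta_\beta(\wp(z)-u^\beta)}$. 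Since $z^\beta-z^\gamma$ is the remaining half-period modulo $\mathbb Z+\elltau\mathbb Z$, one has $\wp(z^\beta-z^\gamma)=u^{\beta''}$ with $\{\beta,\gamma,\beta''\}=\{1,2,3\}$; substituting into \eqref{eqn:R1} yields
$$(R_1)^{\bar\beta}_{\bar\gamma}=\Delta_\beta^{1/2}\Delta_\gamma^{1/2}(u^{\beta''}+G_2)\quad(\beta\neq\gamma),\qquad (R_1)^{\bar\beta}_{\bar\beta}=\Delta_\beta\bigl(G_2-\tfrac{u^\beta}{2}\bigr).$$
With $\E=\diag\{u^1,u^2,u^3\}$ in the canonical basis I would then compute $\mu=[R_1,\E]$ and check, using the symmetric-function relations among the $u^\beta$ and the Ramanujan identities for $G_2,G_4,G_6$, that in the flat basis $\mu(\phi_i)=(\tfrac i2-\tfrac12)\phi_i$ and that $\E$ has the stated Hessenberg matrix; in particular the $\mu$ of \eqref{eqn:hom-R} coincides with the operator defined through the flat basis.

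\textbf{Homogeneity of the curve.} To prove \eqref{eqn:Dzeta-Psi-d} I note that both sides are meromorphic $1$-forms with poles only at the critical points: for the left-hand side, $d\zeta_1^{\bar\beta}=-d(d\zeta_0^{\bar\beta}/dx)$ vanishes to order two at the pole $z=0$ of $x$, while $x-x^\beta$ has a double pole there, so the product is regular at $z=0$. By the Laurent-expansion argument in the proof of Lemma~\ref{lem:hom-TR}, the singular parts at every critical point already match — this is exactly the tautology $\mu^{\bar\beta}_{\bar\gamma}=[R_1,\E]^{\bar\beta}_{\bar\gamma}=(u^\gamma-u^\beta)(R_1)^{\bar\beta}_{\bar\gamma}$ — so the difference of the two sides is a global holomorphic differential, hence a constant multiple of $dz$. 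Evaluating at $z=0$ reduces the vanishing of that constant to $\sum_\gamma\mu^{\bar\beta}_{\bar\gamma}\Delta_\gamma^{1/2}(u^\gamma+G_2)=0$; for $N=3$ the diagonal term vanishes, the two off-diagonal terms share the common factor (the product of the two other critical values shifted by $G_2$, which does not depend on the summation index), and what remains is $\propto\sum_{\gamma\neq\beta}(u^\gamma-u^\beta)\Delta_\gamma=\tfrac12\sum_{\gamma\neq\beta}\dfrac{u^\gamma-u^\beta}{\prod_{\sigma\neq\gamma}(u^\gamma-u^\sigma)}=0$. Hence \eqref{eqn:Dzeta-Psi-d} holds, and Lemma~\ref{lem:hom-TR} gives $[R_{m+1},\E]=(m+\mu)R_m$ for all $m\geq0$.

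\textbf{Vacuum vector, conformal dimension, and the main obstacle.} Finally I would compute $\vac$ from \eqref{eqn:TR-vacuum}, i.e.\ $\vac_m=-\sum_\beta\bar e_\beta\sum_\gamma\Res_{z=z^\gamma}\bigl(y\,d\zeta_m^{\bar\beta}\bigr)$. On the compact $\Sigma$ the $1$-form $y\,d\zeta_m^{\bar\beta}$ has poles only at the critical points and at $z=0$, so $\sum_\gamma\Res_{z=z^\gamma}(y\,d\zeta_m^{\bar\beta})=-\Res_{z=0}(y\,d\zeta_m^{\bar\beta})$. Near $z=0$, applying $-d\circ\tfrac1{dx}$ raises the order of vanishing by two, so $d\zeta_m^{\bar\beta}$ vanishes to order $2m$, while $y=\wp'$ has a pole of order three with only odd Laurent exponents; thus $y\,d\zeta_m^{\bar\beta}$ has no residue at $z=0$ once $2m-3\geq1$, i.e.\ $\vac_m=0$ for $m\geq2$, whereas a short residue computation gives $\Res_{z=0}(y\,d\zeta_0^{\bar\beta})=\Delta_\beta^{-1/2}$ and $\Res_{z=0}(y\,d\zeta_1^{\bar\beta})=3\Delta_\beta^{1/2}(u^\beta+G_2)$. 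Therefore $\vac(z)=\mathbf 1+\vac_1 z$ with $\mathbf 1=\sum_\beta e_\beta$ (the quantum unit) and $\vac_1=3\sum_\beta(u^\beta+G_2)\Delta_\beta e_\beta=-\tfrac{3\sqrt2}{2}\phi_1$. Since $\mu\phi_1=0$ and $\vac_2=0$, the $m=1$ case of the vacuum condition forces $\tfrac\delta2+1=0$, i.e.\ $\delta=-2$, and the $m=0$ case $(\mu-1)\mathbf 1=-E_0*\vac_1$ is then a direct check in the canonical basis using $\Delta_\beta^{-1}=6(u^\beta)^2-\tfrac{g_2}2$. With both homogeneity conditions in hand, $\Omega$ is homogeneous of conformal dimension $-2$, and rewriting $\mathbf 1=\sum_\beta e_\beta$ in the flat basis via the definitions of $\phi_0,\phi_1,\phi_2$ and the Ramanujan identities yields $\mathbf 1=-3\,\pd_{\elltau}G_2\,\phi_0-3\sqrt2\,G_2\,\phi_1-6\pi{\bf i}\,\phi_2$. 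The most delicate step is the verification of \eqref{eqn:Dzeta-Psi-d}, where the addition theorem for $\wp$ and the partial-fraction identity for the $u^\beta$ must conspire so that the leftover holomorphic differential is exactly zero; closely related is pinning down the Bergman normalization constant as $G_2$ (needed to recognize $\vac_1$ as $-\tfrac{3\sqrt2}{2}\phi_1$). Everything else — the residue computations and the translations between the canonical, normalized, and flat bases — is routine but lengthy.
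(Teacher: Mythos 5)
Your proposal is correct and follows essentially the same route as the paper's own (sketched) proof: compute $R_1$ from \eqref{eqn:R1} and check $\mu=[R_1,\E]$, verify the curve-homogeneity equation \eqref{eqn:Dzeta-Psi-d} and invoke Lemma~\ref{lem:hom-TR}, then compute the vacuum from \eqref{eqn:TR-vacuum} and read off $\delta=-2$ from the vacuum homogeneity condition. The only difference is that you actually carry out the elliptic-function computations (addition theorem, residues at $z=0$, the partial-fraction identity killing the leftover holomorphic differential) that the paper defers to the analogous $M_{1,1}$ discussion in~\cite[\S 7.1]{GJZ23}, and your explicit values for $R_1$, $\vac_0=\mathbf 1$ and $\vac_1=-\tfrac{3\sqrt2}{2}\phi_1$ check out.
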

\begin{proof}
Since the proof of this Proposition is similar as the discussion of the spectral curve for $M_{1,1}$ in~\cite[\S 7.1]{GJZ23}, we omit the details of the explicit computations and just show the sketch of the proof.
First, one computes $R_1$ using equation~\eqref{eqn:R1}, and obtains its representation in the flat basis.
It can then be verified that the operator $\mu$ satisfies $\mu=[R_1,\E]$.
Second,  by checking equation~\eqref{eqn:Dzeta-Psi-d}, one sees that the spectral curve is homogeneous.
Then, by Lemma~\ref{lem:hom-TR}, all $R_k$ for $k\geq 1$ are determined.
Thirdly, the vacuum vector can be computed directly from equation~\eqref{eqn:TR-vacuum}, leading to
$$\textstyle
\E\vac_1+(\mu+\frac{\delta}{2})\vac_0=0,\qquad
(\mu+\frac{\delta}{2}+1)\vac_1=0,
$$
where $\delta=-2$. The homogeneity of the CohFT follows from the homogeneity of $R$-matrix and vacuum vector. The Proposition is proved.
\end{proof}

We choose $S$-calibration by requiring the homogeneity condition
$$
\E S_{k-1}=kS_k +[S_k,\mu]+S_{k-1}\rho,
$$
and initial condition $(S_1)^2_0=\oint_{B}\oint_B\omega_{0,2}=2\pi {\bf i}\,  \tau$, where $\rho=0$.
We choose $\dvac$-calibration by $\dvac(\givz)=\vac(\givz)$, then we have the $J$-function
$$
J(-\givz)=-\givz S^{*}(-\givz)\dvac(\givz)=\tfrac{3\sqrt{2}}{2}\, \phi_1 \givz^2-\tfrac{1}{10}\pd_{\tau}g_2\, \phi_0-\tfrac{1}{2\sqrt{2}}g_2\, \phi_1+O(\givz^{-1}).
$$
We define the total descendent potential by the generalized Kontsevich--Manin formula~\cite{GZ25} (see~\cite{KM98} and ~\cite{Giv01a} for the original Kontsevich--Manin formula):
$$
\cD({\bf t};\hbar)=e^{\frac{1}{\hbar^2}J_{-}({\bf t})+\frac{1}{\hbar^2}W({\bf t,t})}\cA([S(z){\bf t}(z)]_{+};\hbar).
$$
Then it follows immediately from~\cite[Theorem 2]{GZ25} that the geometric total descendent potential $\cD({\bf t};\hbar)$ satisfies the following Virasoro constraints
$$
L_m\cD({\bf t};\hbar)=0,\qquad m\geq -1.
$$
Here the operators $L_{m}$, $m\geq -1$, are given by 
\begin{align*}
	L_m=&\, \frac{\delta_{m,-1}}{2\hbar^2}\, \eta(\tilde t_0, \tilde t_0)
	+\frac{\delta_{m,0}}{16}
	+\sum_{k \geq 0}\sum_{a=0}^{2} \frac{\Gamma(\frac{a}{2}+k+m+1)}{\Gamma(\frac{a}{2}+k)} \tilde t_k^a \frac{\pd }{\pd t_{k+m}^{a}}    \nonumber \\
	&\,+\frac{\hbar^2}{2}\sum_{k=0}^{m-1}\frac{(2k+1)!!(2m-2k-1)!!}{2^{m+1}}\frac{\pd^2 }{\pd t_{k}^{1}\pd t_{m-k-1}^{1}},
\end{align*}
	where $\tilde t_k^a=t_k^a$ except that $\tilde t_0^0=t_0^0-\tfrac{1}{10}\pd_{\tau}g_2$, $\tilde t_0^1=t_0^1-\tfrac{1}{2\sqrt{2}}g_2$, and $\tilde t_2^1=t_2^1+\tfrac{3\sqrt{2}}{2}$.

\begin{remark}
The $S$-calibration is well-defined according to the discussions in~\cite[\S 2]{GZ25}.
We have precisely, $(k+\mu_b-\mu_a)(S_k)^a_b=(\E S_{k-1}-S_{k-1}\rho)^a_b$, $k\geq 1$.
For $k=1$, 
$$
S_1=\left(\begin{array}{ccc}
-G_2 & \sqrt{2} \, \pd_\tau G_2 & \frac{1}{2} \, \pd_{\tau}^2 G_2 \\
2\sqrt{2}\pi {\bf i}  & 2\, G_2 & \sqrt{2} \, \pd_{\tau} G_2 \\
2\pi {\bf i}\, \tau & 2\sqrt{2}\pi {\bf i}  & -G_2 
\end{array}\right).
$$
\end{remark}

We turn to the TR side. Clearly, there is only one boundary point $z=0$.
Recall that by the definition,
$$
d\zeta^{\bar \beta}(z)=-\mathop{\Res}_{z=z^\beta}\frac{B(z',z)}{\sqrt{x''(z^\beta)}(z'-z^\beta)}
=-\frac{1}{\sqrt{x''(z^\beta)}}\big(\wp(z-z^\beta;\tau)+G_2\big)dz,
$$
where $x''(z^\beta)=6(u^\beta)^2-\tfrac{g_2}{2}$.
By taking integration, we get
$$
\zeta^{\bar\beta}(z)=\int_0^{z}d\zeta^{\bar\beta}(z)=\frac{1}{\sqrt{x''(z^\beta)}}\big(\zeta(z-z^\beta;\tau)+\zeta(z^\beta;\tau)-G_2z\big),
$$
where $\zeta(z;\tau)$ (no super-index) is the Weierstrass Zeta function.
By using the additional formula of $\zeta(z;\tau)$,
we have
$$
\zeta^{\bar\beta}(z)=\frac{1}{\sqrt{x''(z^\beta)}}\bigg(\zeta(z;\tau)-G_2z+\frac{1}{2}\frac{\wp'(z;\tau)}{\wp(z;\tau)-u^\beta}\bigg).
$$
Define $\zeta^{i}(z)$ by $\sum_i \phi_i\zeta^i(z)=\sum_\beta \bar e_{\beta}\zeta^{\bar \beta}$, it is straightforward to compute that
\begin{align*}
\zeta^0(z)=&\,\textstyle  \frac{1}{2\pi {\bf i}}\big(G_2z-\zeta(z)+\frac{1}{\wp'(z)} (-2\wp(z)^2+2G_2\wp(z)+G_2^2+15G_4)\big),\\
\zeta^1(z)=&\,\textstyle  -\sqrt{2}\, \frac{\wp(z)+G_2}{\wp'(z)} , \\
\zeta^2(z)=&\,\textstyle  -\frac{2\pi {\bf i}}{\wp'(z)}.
\end{align*}
Then the functions $\zeta_k^i=(-\frac{d}{dx(z)})^k\zeta^i(z)$ are clear.
Near the boundary point $z=0$, for each $i=0,1,2$, we introduce the local functions
$$
\chi^i(z)=\sum_{k\geq 0}(-1)^k (S^{*}_k)^i_j\zeta^j_{k}(z).
$$
Similarly to the discussion for the spectral curve of the Hurwitz space $M_{1,1}$ in~\cite[\S 7]{GJZ23}, by using the homogeneity condition of $S$-matrix and equation~\eqref{eqn:Dzeta-Psi-d} (it is straightforward to check that this equation holds for the Weierstrass curve),
we have $\chi^0=\chi^2=0$, and
$$\textstyle
\chi^1=\frac{1}{\sqrt{2}}\lambda^{-1},
$$
where $\lambda$ is the local coordinate near the boundary point, defined by
$$
x(z)=\lambda^2,\qquad 
\lim_{z\to 0}\tfrac{\lambda^{-1}}{z}=1.
$$
This gives
$$\textstyle
\chi_k^1=\big(-\tfrac{d}{2\lambda d\lambda}\big)^k\chi^1 =\frac{(2k-1)!!}{2^{k+\frac{1}{2}}}\lambda^{-2k-1}.
$$
Furthermore, for this choice of the local coordinate $\lambda$, we denote the corresponding TR descendent generating series by $Z({\bf p};\hbar)$, then above discussions shows that
$$
\cD({\bf t(p)};\hbar)=Z({\bf p};\hbar),
$$
where the coordinate transformation is given by
\beq\label{eqn:weierstrass-t=tp}
t_k^0=t_k^2=0,\quad
t_k^1=\frac{(2k-1)!!}{2^{k+\frac{1}{2}}} \, p_{2k+1},\qquad k\geq 0.
\eeq
Here we have used~\cite[Proposition 3.7]{GJZ23} and the validity of~\cite[Conjecture 0.4]{GJZ23} (which can be proved by following a similar argument as the one used for the spectral curve of the Hurwitz space $M_{1,1}$).

Now we consider the expansion of $ydx$ near the boundary point, by using the local coordinate $\lambda$, we have
$$
y(z)dx(z)|_{z=0}=-4\lambda^4d\lambda+\frac{g_2}{2}d\lambda+d(O(\lambda^{-1})).
$$
We see the non-vanished terms of $v_k$ are $v_5=-4$ and $v_1=\frac{g_2}{2}$.
By Theorem~\ref{thm:TR-virasoro-descendent-intro}, we have the following TR descendent Virasoro constraints:
$$
\cL_mZ({\bf p};\hbar)=0,\qquad m\geq -1.
$$
Here the operators $\cL_{m}$, $m\geq -1$, are given by 
\begin{align*}
\cL_m=&\, \delta_{m,-1}\frac{(\tilde p_1)^2}{4\hbar^2}+\delta_{m,0}\frac{1}{16}
+\frac{1}{2}\sum_{k\geq 0}\tilde p_{2k+1}(2k+2m+1)\frac{\pd }{\pd p_{2k+2m+1}}\\ 
&\, +\frac{\hbar^2}{4}\sum_{k+l=m-1}(2k+1)(2l+1)\frac{\pd^2}{\pd p_{2k+1}\pd p_{2l+1}}.
\end{align*}
where $\tilde p_k=p_k+\delta_{k,1}\frac{g_2}{2}-4\, \delta_{k,5}$.
It is easy to see that via the coordinate transformation~\eqref{eqn:weierstrass-t=tp}, the operators $\cL_m$ coincide with the operators $L_{m}$ for $m\geq -1$.
In other words, the TR descendent Virasoro constraints coincide with the geometric ones.

For the non-perturbative topological recursion, we have the vector field $\varphi$, defined by~\eqref{def:filling-fraction}, is equal to $\phi_0$.
By~\cite[Proposition 4.8]{GJZ23} (together with~\cite[Definition 4.7]{GJZ23}, and also note that the definition of $Z^{\rm NP}_{\mu,\nu}$ is slightly modified in the present paper), the non-perturbative generating series $Z^{\rm NP}_{\mu,\nu}({\bf p};\hbar;w)$ has the following formula:
$$
Z^{\rm NP}_{\mu,\nu}({\bf p};\hbar;w)=\cD({\bf t}+\tfrac{\hbar}{2\pi{\bf i}}\phi_0\cdot\pd_w;\hbar)\, \theta[\cmn](w-\tfrac{\hbar}{2\pi{\bf i}}\<\phi_0\>_{0,1};\tau')|_{{\bf t=t(p)},\, \tau'=0}.
$$
where $\<\phi_0\>_{0,1}=2\pi {\bf i}\tau\cdot (-\frac{\pd_\tau g_2}{10})+\frac{16\pi {\bf i}}{5\sqrt{2}}\cdot (-\frac{g_2}{2\sqrt{2}})$.
We observe that the non-perturbative contributions modify the geometric descendent Virasoro constraints, but do not affect the TR descendent Virasoro constraints.
We note here that, by~\cite[Theorem I]{GJZ23}, $Z^{\rm NP}_{\mu,\nu}({\bf p};\hbar;w)$ is a tau-function of the KdV hierarchy.
We also refer the reader to~\cite{Iwa20} for the relation between this example and the first Painlev\'e equation.

\addtocontents{toc}{\protect\setcounter{tocdepth}{0}}

\end{document}